\definecolor{ForestGreen}{RGB}{34,139,34}
\newtheorem{theorem}{Theorem}
\newtheorem{fact}{Fact}
\newtheorem{lemma}{Lemma}
\newtheorem{remark}{Remark}
\newtheorem{corollary}{Corollary}
\newtheorem{definition}{Definition}[section]
\DeclareMathOperator*{\argmax}{arg\,max}
\let\Pr\relax
\DeclareMathOperator*{\Pr}{\mathbb{P}}
\DeclareMathOperator{\tr}{tr}
\DeclareMathOperator{\rank}{rank}
\newcommand{\R}{\mathbb{R}}
\newcommand{\bepsilon}{{\bar \epsilon}}
\newcommand{\poly}{\mathop\mathrm{poly}}
\newcommand{\eqdef}{\mathbin{\stackrel{\rm def}{=}}}
\newcommand{\norm}[1]{\|#1\|}
\newcommand{\bv}[1]{\mathbf{#1}}
\newcommand{\one}{\mathbf{1}} 
\newcommand{\Sbar}{\bar{S}}
\newcommand{\vhat}{\hat{\bv v}}
\newcommand{\Abar}{\bar{\bv A}}
\newcommand{\Ab}{\bv A}
\newcommand{\Sb}{\bv S}
\newcommand{\Zb}{\bv Z}
\newcommand{\Ib}{\bv I}
\newcommand{\Atilde}{\widetilde{\bv A}}
\newcommand{\Ztilde}{\widetilde{\bv Z}}
\newcommand{\sigmatilde}{\widetilde{\sigma}}
\newcommand{\vb}{\bv v}
\newcommand{\zb}{\bv z}
\newcommand{\xb}{\bv x}
\newcommand{\n}{{n \times n}}
\newcommand{\ith}{$i$\textsuperscript{th} }
\newcommand{\specialcell}[2][c]{%
  \begin{tabular}[#1]{@{}c@{}}#2\end{tabular}}
\title{Universal Matrix Sparsifiers and Fast Deterministic Algorithms for  Linear Algebra}
  \author{
   Rajarshi Bhattacharjee\footnote{University of Massachusetts Amherst, \texttt{\{rbhattacharj,cmusco,ray\}@cs.umass.edu}} 
  \and 
   Gregory Dexter\footnote{Purdue University, \texttt{\{gdexter\}@purdue.edu}}
   \and 
   Cameron Musco\footnotemark[1]
   \and 
   Archan Ray\footnotemark[1]
   \and 
     Sushant Sachdeva \footnote{University of Toronto, \texttt{\{sachdeva\}@cs.toronto.edu}}
   \and
   David P Woodruff\footnote{Carnegie Mellon University, \texttt{\{dwoodruf\}@cs.cmu.edu}}
  }
\begin{document}
\sloppy

\date{}

\begin{titlepage}
\maketitle
\thispagestyle{empty}


\begin{abstract}

Let $\mathbf S \in \R^{n \times n}$ be any matrix satisfying $\|\mathbf{1}-\mathbf{S}\|_2 \le \epsilon n$, where $\mathbf{1}$ is the all ones matrix and $\|\cdot\|_2$ is the spectral norm. It is well-known that there exists $\mathbf S$ with just $O(n/\epsilon^2)$ non-zero entries achieving this guarantee: we can let $\mathbf{S}$ be the scaled adjacency matrix of a Ramanujan expander graph. We show that, beyond giving a sparse approximation to the all ones matrix, $\mathbf{S}$ yields a \emph{universal sparsifier} for any positive semidefinite (PSD) matrix. In particular, for any PSD $\mathbf A \in \mathbb{R}^{n \times n}$ which is normalized so that its entries are bounded in magnitude by $1$, we show that $\|\mathbf{A}-\mathbf{A} \circ \mathbf{S}\|_2 \le \epsilon n$, where $\circ$ denotes the entrywise (Hadamard) product. Our techniques also yield universal sparsifiers for non-PSD matrices. In this case, we show that if $\mathbf{S}$ satisfies $\|\mathbf{1}-\mathbf{S}\|_2 \le \frac{\epsilon^2 n}{c \log^2(1/\epsilon)}$ for some sufficiently large constant $c$, then $\|\mathbf A - \mathbf{A} \circ \mathbf{S} \|_2 \le \epsilon \cdot \max(n,\|\mathbf A\|_1)$, where $\|\mathbf A\|_1$ is the nuclear norm. Again letting $\mathbf{S}$ be a scaled Ramanujan graph adjacency matrix, this yields a sparsifier with $\widetilde O(n/\epsilon^4)$ entries. We prove that the above universal sparsification bounds for both PSD and non-PSD matrices are tight up to logarithmic factors. 

Since $\mathbf{A} \circ \mathbf{S}$ can be constructed \emph{deterministically} without reading all of $\mathbf{A}$, our result for PSD matrices derandomizes and improves upon established results for randomized matrix sparsification, which require sampling a random subset of ${O}(\frac{n \log n}{\epsilon^2})$ entries and only give an approximation to any fixed $\mathbf{A}$ with high probability. We further show that any randomized algorithm must read at least $\Omega(n/\epsilon^2)$ entries to spectrally approximate general $\Ab$ to error $\epsilon n$, thus proving that these existing randomized  algorithms are optimal up to logarithmic factors. We leverage our deterministic sparsification results to give the first {deterministic algorithms} for several  problems, including singular value and singular vector approximation and positive semidefiniteness testing, that run in faster than matrix multiplication time. This partially addresses a significant gap between randomized and deterministic algorithms for fast linear algebraic computation.

Finally, if $\mathbf A \in \{-1,0,1\}^{n \times n}$ is PSD, we show that a spectral approximation $\mathbf{\widetilde A}$ with $\|\mathbf A - \mathbf{\widetilde A}\|_2 \le \epsilon n$ can be obtained by deterministically reading $\widetilde O(n/\epsilon)$ entries of $\mathbf A$.  This improves the $1/\epsilon$ dependence on our result for general PSD matrices by a quadratic factor and is information-theoretically optimal up to a logarithmic factor.    

\end{abstract}

\end{titlepage}
\clearpage


\section{Introduction}
A common task in processing large matrices is  \emph{element-wise sparsification}. Given an $n \times n$ matrix $\bv A$,  the goal is to choose a small subset $S$ of coordinates in $[n] \times [n]$, where $[n] = \{1, 2, \ldots, n\}$, such that $\|\bv A - \bv A \circ \bv S\|_2$ is small, where $\circ$ denotes the Hadamard (entrywise) product and $\bv {S}$ is a sampling matrix which equals $\frac{n^2}{|S|}$ on the entries in $S$, but is $0$ otherwise. As in previous work, we consider operator norm error, where for a matrix $\bv B$, $\|\bv B\|_2 \eqdef \sup_{ \bv x \in \R^n} \frac{\|\bv B \bv{x}\|_2}{\| \bv x\|_2}$. Elementwise sparsification has been widely studied \cite{AM07,DZ11,AKL13,Braverman:2021wj} and has been used as a primitive in several applications, including low-rank approximation \cite{AM07,k15}, approximate eigenvector computation \cite{AHK06,AM07}, semi-definite programming \cite{AHK05,d2011subsampling}, and matrix completion \cite{CR12,CT10}. Without loss of generality, one can scale the entries of $\bv A$ so that the maximum entry is bounded by $1$ in absolute value, and we refer to such matrices as having {\it bounded entries}. With this normalization, it will be convenient to consider the task of finding a small subset $S$ with corresponding sampling matrix $\bv S$ such that, for a given error parameter $\epsilon \in (0,1)$,
\begin{eqnarray}\label{eqn:error}
\left\|\bv A - \bv A \circ \bv S\right\|_2 \leq \epsilon \cdot n. 
\end{eqnarray}

One can achieve the error guarantee in (\ref{eqn:error}) for any bounded entry matrix ${\bv A}$ with high probability by uniformly sampling a set of ${O}\left (\frac{n \log n}{\epsilon^2}\right)$ entries of ${\bv A}$. See, e.g., Theorem 1 of \cite{DZ11}.\footnote{To apply  Thm. 1 of \cite{DZ11} we first rescale ${\bv A}$ so that all entries are $\le 1/n$ in magnitude, and note that $\|{\bv A}\|_F^2 \le 1$ in this case. Thus, they sample $s = O \left (\frac{n \log n}{ \epsilon^2}\right)$ entries. Rescaling their error guarantee analogously gives \eqref{eqn:error}.}  
 However, there are no known lower bounds for this problem, even if we consider the harder task of \emph{universal sparsification}, which requires finding a fixed subset $S$ such that \emph{\eqref{eqn:error} holds simultaneously for every bounded entry matrix ${\bv A}$}. The existence of such a fixed subset $S$ corresponds to the existence of a \emph{deterministic sublinear query algorithm} that constructs a spectral approximation to any $\bv A$ by forming $\bv{A} \circ \bv S$ (which requires reading just $|S|$ entries of $\bv A$). As we will see, such algorithms have  applications to fast deterministic algorithms for  linear algebraic computation. 
 We ask:
\begin{center}
{\it What is the size of the smallest set $S$ that achieves \eqref{eqn:error} simultaneously for every bounded entry matrix ${\bv A}$?}
\end{center}
Previously, no bound on the size of $S$ better than the trivial $O(n^2)$  was known.

\subsection{Our Results}
Our first result answers the above question for the class of \emph{symmetric positive semidefinite} (PSD) matrices, i.e., those matrices ${\bf A}$ for which $\bv x^T {\bv A} \bv x \geq 0$ for all vectors $\bv x \in \R^n$, or equivalently, whose eigenvalues are all non-negative. PSD matrices arise e.g., as covariance matrices, kernel similarity matrices, and graph Laplacians, and  significant work has considered efficient algorithms for approximating their properties  \cite{wang2016spsd, clarkson2017low, xia2010robust, chatelin2011spectral, andoni2016sketching, meyer2021hutch++, schafer2021sparse}.
We summarize our results below and give a comparison to previous work in Table \ref{tab:table}. 

We  show that there exists a set $S$ with $|S| = {O}(n/\epsilon^2)$ that achieves \eqref{eqn:error} simultaneously for all bounded entry PSD matrices. This improves the best known randomized bound of ${O}\left (\frac{n \log n}{\epsilon^2}\right )$ for algorithms which only succeed on a \emph{fixed matrix} with high probability.  

\begin{restatable}[Universal Sparsifiers for PSD Matrices]{theorem}{quadraticformnonc}
\label{thm:PSD_Quad}
There exists a subset $S$ of $s = {O}(n/\epsilon^2)$ entries of $[n] \times [n]$ such that, letting $\bv{S} \in \R^{n \times n}$ have $\bv S_{ij} = \frac{n^2}{s}$ for $(i,j) \in S$ and $\bv{S}_{ij} = 0$ otherwise, simultaneously for all PSD matrices $\bv{A} \in \mathbb{R}^{n \times n}$ with bounded entries, 
$\left\|\bv A - \bv A \circ \bv S \right\|_2 \leq \epsilon n.$
\end{restatable}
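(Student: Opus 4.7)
The plan is to let $\bv S$ be the (appropriately scaled) adjacency matrix of a $d$-regular Ramanujan expander graph on $n$ vertices with degree $d = \Theta(1/\epsilon^2)$. Specifically, if $\bv A_G$ is the adjacency matrix, set $\bv S = (n/d)\,\bv A_G$. Then the top eigenvector of $\bv S$ is $\frac{1}{\sqrt n}\bv 1$ with eigenvalue $n$ (matching the all-ones matrix), while the remaining eigenvalues of $\bv A_G$ are bounded by $2\sqrt{d-1}$ by the Ramanujan property, giving $\|\bv 1 - \bv S\|_2 \le 2n/\sqrt d \le \epsilon n$. The number of nonzero entries is $s = nd = O(n/\epsilon^2)$ and each nonzero equals $n/d = n^2/s$, matching the theorem's normalization for $\bv S$.

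Next, I would use the algebraic identity
\[
\bv A - \bv A \circ \bv S = \bv A \circ (\bv 1 - \bv S),
\]
which reduces the problem to bounding the operator norm of a Hadamard product of a PSD bounded-entry matrix with a matrix of small operator norm.

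The key tool is the classical Schur multiplier inequality: for any PSD matrix $\bv A$ and any matrix $\bv B$,
\[
\|\bv A \circ \bv B\|_2 \;\le\; \bigl(\max_i A_{ii}\bigr) \cdot \|\bv B\|_2.
\]
I would prove this in-line by writing an eigendecomposition $\bv A = \sum_k \bv v_k \bv v_k^T$, noting
\[
\bv A \circ \bv B \;=\; \sum_k D_{\bv v_k} \bv B\, D_{\bv v_k},
\]
where $D_{\bv v}$ is the diagonal matrix with $\bv v$ on its diagonal, and then applying Cauchy--Schwarz to $\bv y^T (\bv A \circ \bv B) \bv x = \sum_k (D_{\bv v_k}\bv y)^T \bv B (D_{\bv v_k} \bv x)$. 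The factor $\max_i A_{ii}$ emerges from the identity $\sum_k \|D_{\bv v_k}\bv x\|_2^2 = \bv x^T \mathrm{diag}(\bv A)\,\bv x \le \max_i A_{ii}$ for unit $\bv x$. Combining with $\max_i A_{ii} \le 1$ (since $\bv A$ is PSD and bounded-entry, so diagonal entries are at most $1$) and $\|\bv 1 - \bv S\|_2 \le \epsilon n$ finishes the proof.

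The main conceptual step is recognizing that universal sparsification for PSD matrices reduces, via the Hadamard identity, to a Schur multiplier bound; once that connection is made, the estimate is essentially one line. There is no serious technical obstacle here, but I expect the real difficulty to surface when one tries to drop the PSD assumption: the Schur multiplier bound is false without it, which is precisely why the paper's non-PSD theorem must tolerate a much stronger requirement on $\|\bv 1 - \bv S\|_2$ and needs a more intricate argument.
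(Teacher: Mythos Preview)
The proposal is correct and takes essentially the same approach as the paper. The paper also derives Theorem~\ref{thm:PSD_Quad} from Theorem~\ref{thm:reduction} by choosing $\bv S$ to be a scaled Ramanujan adjacency matrix, and the proof of Theorem~\ref{thm:reduction} is precisely your in-line Schur multiplier argument: write $\bv A = \sum_i \lambda_i \bv v_i \bv v_i^T$, use $\bv v_i \bv v_i^T \circ (\bv 1-\bv S) = \bv D_i(\bv 1-\bv S)\bv D_i$ with $\bv D_i = \mathrm{diag}(\bv v_i)$, and conclude via $\sum_i \lambda_i \bv v_i(j)^2 = \bv A_{jj} \le 1$.
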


\begin{table}
\begin{center}
\small
\begin{tabular}{|c|c|c|c|c|c|}
\hline
\specialcell{\textbf{Matrix}\\ \textbf{Type}} & \specialcell{\textbf{Approx.}\\ \textbf{Type}} &  \specialcell{\textbf{Randomized} \\\textbf{Error}}& \specialcell{\textbf{Randomized} \\\textbf{Sample Complexity}} & \specialcell{\textbf{Deterministic} \\\textbf{Error}} &  \specialcell{\textbf{Deterministic} \\\textbf{Sample Complexity}}\\
\hline
\specialcell{$\norm{\bv A}_\infty \le 1$ \\ $\bv A$ is PSD} & sparse & $\epsilon n$ &  $ \Theta(n/\epsilon^2)$ (Thms \ref{thm:PSD_Quad}, \ref{theorem:psd_deterministic_lower}) & $\epsilon n$ & \specialcell{$ \Theta(n/\epsilon^2)$ (Thms \ref{thm:PSD_Quad}, \ref{theorem:psd_deterministic_lower})}\\
\hline
$\norm{\bv A}_\infty \le 1$ & sparse &$\epsilon n$ & \specialcell{$O\left (\frac{n \log n}{\epsilon^2}\right )$ \cite{AM07} \smallskip \\ $\Omega(n/\epsilon^2)$ (Thm \ref{theorem:psd_deterministic_lower})}& $\epsilon \max(n,\norm{\bv A}_1)$ & \specialcell{$O\left (\frac{n \log^4(1/\epsilon)}{\epsilon^4}\right )$ (Thm \ref{thm: general matrix eps4 bound}) \smallskip \\ $\Omega(n/\epsilon^4)$ (Thm \ref{thm: lower bound nonadaptive})}\\
\hline
\hline
\specialcell{$\norm{\bv A}_\infty \le 1$ \\ $\bv A$ is PSD} & any & $\epsilon n$ & \specialcell{$O \left (\frac{n \log(1/\epsilon)}{\epsilon}\right )$ \cite{musco2017recursive}} & $\epsilon n$ & \specialcell{$O\left (\frac{n \log n}{\epsilon}\right )$ (Thm \ref{th:binary query complexity}) \\ for $\bv{A} \in \{-1,0,1\}^{n \times n}$  \smallskip \\ $\Omega(n/\epsilon)$ (Thm \ref{thm: lower bound for PSD matrices})}\\ 
\hline
$\norm{\bv A}_\infty \le 1$ & any & $\epsilon n$ & \specialcell{$\Omega(n/\epsilon^2)$ (Thm \ref{theorem:adaptive_random_lower_bound})} & $\epsilon \max(n,\norm{\bv A}_1)$ & $\Omega(n/\epsilon^2)$ (Thm \ref{thm: lower bound a1}) \\
\hline
\end{tabular}
\end{center}
\vspace{-1em}
\caption{Summary of results. Algorithms in the first two rows output a spectral approximation that is sparse, as in \eqref{eqn:error}. Our deterministic algorithms in this case follow directly from our universal sparsification results. Algorithms in the last two rows can output an approximation of any form. 
}
\vspace{-1em}
\end{table}\label{tab:table}

Theorem \ref{thm:PSD_Quad} can be viewed as a significant strengthening of a classic spectral graph expander guarantee \cite{margulis1973explicit,alon1986eigenvalues}; indeed, letting ${\bv A} = \bv 1$ be the all ones matrix, we have that if ${\bv A} \circ \bv S$ satisfies (\ref{eqn:error}), then it matches the near-optimal spectral expansion of Ramanujan graphs, up to a constant factor.\footnote{If we let $d = s/n = O(1/\epsilon^2)$ be the average number of entries per row of $\bv S$, and let $\bv{G} \in \{0,1\}^{n \times n}$ be the binary adjacency matrix of the graph with edges in $S$ (i.e., $\bv G= \frac{s}{n^2} \cdot \bv{S}$), then by \eqref{eqn:error} applied with $\bv{A} = \bv{1}$, $\lambda_1(\bv G) = \frac{s}{n^2} \lambda_1( \bv S) =\frac{s}{n^2} \lambda_1( \bv 1 \circ \bv S) = \frac{s}{n^2} (\lambda_1(\bv{1}) \pm \epsilon n)  = \Theta(s/n) = \Theta(d)$, while for $i > 1$, $|\lambda_i(\bv G)| \le  \frac{s}{n^2} (|\lambda_i(\bv{1})| + \epsilon n) \le \epsilon d = \Theta(\sqrt{d})$.} In fact, we prove Theorem \ref{thm:PSD_Quad} by proving a more general claim: that any matrix $\bv S$ that sparsifies the all ones matrix also sparsifies every bounded entry PSD matrix. In particular, Theorem \ref{thm:PSD_Quad} follows as a direct corollary of:
\begin{restatable}[Spectral Expanders are Universal Sparsifiers for PSD Matrices]{theorem}{thmReduction}
\label{thm:reduction}
Let $\bv{1} \in \R^{n \times n}$ be the all ones matrix. Let $\bv{S} \in \R^{n \times n}$ be any matrix such that $\norm{\bv 1 - \bv S}_2 \le \epsilon n$. Then for any PSD matrix $\bv{A} \in \R^{n \times n}$ with bounded entries, $\norm{\bv{A}-\bv{A} \circ \bv S}_2 \le \epsilon n$.
\end{restatable}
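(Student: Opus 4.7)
The plan is to exploit the fact that a bounded-entry PSD matrix $\bv A$ admits a Gram factorization $\bv A = \bv B^T \bv B$ whose columns $\bv b_1, \ldots, \bv b_n$ satisfy $\norm{\bv b_i}_2 \le 1$. Indeed, since $\bv A$ is PSD with $|\bv A_{ii}| \le 1$, we must have $0 \le \bv A_{ii} \le 1$, and $\norm{\bv b_i}_2^2 = \bv A_{ii} \le 1$. Setting $\bv E = \bv 1 - \bv S$, the desired bound $\norm{\bv A - \bv A \circ \bv S}_2 = \norm{\bv A \circ \bv E}_2 \le \epsilon n$ reduces to the ``Hadamard contraction'' inequality $\norm{\bv A \circ \bv E}_2 \le \norm{\bv E}_2$, which combined with the hypothesis $\norm{\bv E}_2 \le \epsilon n$ yields the claim.

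The core step is an algebraic identity. For any unit vectors $\bv x, \bv y \in \R^n$, introduce the diagonal matrices $\bv D_k = \mathrm{diag}(\bv B_{k,1}, \ldots, \bv B_{k,n})$ formed from the rows of $\bv B$. Expanding $\bv A_{ij} = \sum_k \bv B_{ki}\bv B_{kj}$ and collecting terms yields
\[
\bv x^T (\bv A \circ \bv E) \bv y \;=\; \sum_k (\bv D_k \bv x)^T \, \bv E \, (\bv D_k \bv y).
\]
Bounding each summand by $\norm{\bv E}_2 \norm{\bv D_k \bv x}_2 \norm{\bv D_k \bv y}_2$ and applying Cauchy--Schwarz in the index $k$ gives
\[
\bigl| \bv x^T (\bv A \circ \bv E) \bv y \bigr| \;\le\; \norm{\bv E}_2 \, \sqrt{\sum_k \norm{\bv D_k \bv x}_2^2} \, \sqrt{\sum_k \norm{\bv D_k \bv y}_2^2}.
\]
A swap of the order of summation shows $\sum_k \norm{\bv D_k \bv x}_2^2 = \sum_i x_i^2 \norm{\bv b_i}_2^2 \le \sum_i x_i^2 = 1$, and symmetrically for $\bv y$. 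Taking the supremum over unit $\bv x, \bv y$ finishes the proof.

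The main obstacle is conceptual rather than computational: the matrix $\bv A \circ \bv E$ inherits no obvious PSD or low-rank structure (since $\bv E$ can be highly non-PSD and have full rank), so one must find a way to ``transfer'' the operator-norm bound on $\bv E$ to $\bv A \circ \bv E$ without opening up $\bv E$ itself. The factorization trick above---rewriting $\bv A \circ \bv E$ as a sum of diagonal conjugations of $\bv E$ indexed by the rows of $\bv B$---is exactly what is needed, and it is essentially the same device used in the standard proof of Schur's product theorem, applied here with the asymmetric roles that $\bv A$ is PSD while $\bv E$ only carries a spectral-norm bound. Both halves of the hypothesis on $\bv A$ are essential: PSD-ness supplies the Gram factorization $\bv A = \bv B^T \bv B$, while the combination of non-negative diagonal and bounded entries is what forces $\norm{\bv b_i}_2 \le 1$ and yields the clean multiplicative inequality $\norm{\bv A \circ \bv E}_2 \le \norm{\bv E}_2$; absent either, one is pushed toward the weaker $\epsilon \max(n, \norm{\bv A}_1)$ bound that the paper establishes for general matrices.
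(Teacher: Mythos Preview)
Your proof is correct and is essentially the same argument as the paper's. The paper uses the eigendecomposition $\bv A = \sum_i \lambda_i \bv v_i \bv v_i^T$ in place of your general Gram factorization $\bv A = \bv B^T \bv B$, but this amounts to the specific choice $\bv B = \bv\Sigma^{1/2}\bv V^T$; the diagonal-conjugation identity $\bv A \circ \bv E = \sum_k \bv D_k \bv E \bv D_k$ and the final telescoping $\sum_k \norm{\bv D_k \bv x}_2^2 = \sum_j \bv x(j)^2 \bv A_{jj} \le 1$ are identical in both write-ups. One minor refinement in your version: by treating the bilinear form $\bv x^T(\bv A\circ\bv E)\bv y$ and invoking Cauchy--Schwarz in $k$, you do not need $\bv A\circ\bv E$ to be symmetric, whereas the paper restricts to quadratic forms $\bv x^T(\bv A\circ\bv E)\bv x$ and thus tacitly uses that $\bv S$ (hence $\bv E$) is symmetric.
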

To prove Theorem \ref{thm:PSD_Quad} given Theorem \ref{thm:reduction}, we let $S$ be the edge set of a Ramanujan spectral expander graph with degree $d = O(1/\epsilon^2)$ and adjacency matrix $\bv G$. We have $s = nd = O(n/\epsilon^2)$ and  $\bv{S} = \frac{n^2}{s} \bv {G} = \frac{n}{d} \bv{G}$. Thus, the top eigenvector of $\bv{S}$ is the all ones vector with eigenvalue $\lambda_1(\bv{S}) = \frac{n}{d} \lambda_1(\bv{G}) = n$. All other eigenvalues are bounded by $|\lambda_i(\bv S)| = \frac{n}{d} |\lambda_i(\bv{G})| = O(\frac{n}{d} \cdot \sqrt{d}) = O(\epsilon n).$ Combined, after adjusting $\epsilon$ by a constant factor, this shows that $\norm{\bv 1 - \bv S}_2 \le \epsilon n$, as required.
\medskip

\noindent\textbf{Sparsity Lower Bound.} We show that Theorem \ref{thm:PSD_Quad} is tight. Even if we only seek to approximate the all ones matrix (rather than all bounded PSD matrices), $\bv{S}$ requires  $\Omega(n/\epsilon^2)$ non-zero entries. 
\begin{restatable}[Sparsity Lower Bound -- PSD Matrices]{theorem}{alloneslower}
\label{theorem:psd_deterministic_lower}
    Let ${\bv 1} \in \mathbb{R}^{n \times n}$ be the all-ones matrix.  Then, for any $\epsilon \in (0,1/2)$ with $\epsilon \ge c/\sqrt{n}$ for large enough constant $c$, any $\bv{S} \in \R^{n \times n}$ with $\norm{\bv 1 - \bv S}_2 \le \epsilon n$ must have $\Omega(n/\epsilon^2)$ nonzero entries.
\end{restatable}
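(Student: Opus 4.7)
The plan is to exhibit a structured test vector that the all-ones matrix sends to a large output while $\bv{S}$ sends to zero on many coordinates, forcing $\bv{S}$ to be dense. Specifically, for a column subset $C \subseteq [n]$ with $|C| = c$, I would take $\bv{x} = \bv{1}_C$, the indicator vector of $C$, so that $\|\bv{x}\|_2 = \sqrt{c}$ and $(\bv 1 \bv{x})_i = c$ for every row $i$. If row $i$ of $\bv S$ has no nonzero entry in any column of $C$ (call such a row \emph{$C$-empty}), then $(\bv{S}\bv{x})_i = 0$, so the $i$-th coordinate of $(\bv 1 - \bv S)\bv{x}$ equals $c$.

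Let $m_C$ denote the number of $C$-empty rows. Those rows alone contribute $m_C c^2$ to $\|(\bv 1 - \bv{S})\bv{x}\|_2^2$. The hypothesis $\|\bv 1 - \bv{S}\|_2 \le \epsilon n$ together with $\|\bv{x}\|_2^2 = c$ yields $\|(\bv 1 - \bv{S})\bv{x}\|_2^2 \le \epsilon^2 n^2 c$, so
\[
m_C \;\le\; \frac{\epsilon^2 n^2}{c}.
\]
In particular, the columns in $C$ together host a nonzero of $\bv{S}$ in at least $n - m_C \ge n - \epsilon^2 n^2/c$ distinct rows, and hence contain at least that many nonzero entries of $\bv S$ in total.

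To conclude, I would partition the $n$ columns of $\bv{S}$ into $\lfloor n/c\rfloor$ disjoint blocks of size $c$ (rounding costs only constants under the assumption $\epsilon \ge c_0/\sqrt n$, which guarantees that the choice $c = \Theta(\epsilon^2 n)$ made below is at least a constant and, since $\epsilon < 1/2$, at most $n$). Applying the previous bound to each block and summing,
\[
\nnz(\bv{S}) \;\ge\; \frac{n}{c}\left(n - \frac{\epsilon^2 n^2}{c}\right) \;=\; \frac{n^2}{c} - \frac{\epsilon^2 n^3}{c^2}.
\]
Maximizing the right-hand side over $c$ gives the optimum at $c \approx 2\epsilon^2 n$, yielding $\nnz(\bv S) \ge n/(4\epsilon^2) = \Omega(n/\epsilon^2)$, as required.

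The only step that requires any insight is the first: choosing $\bv{x} = \bv{1}_C$ so that every $C$-empty row of $\bv S$ produces a pointwise error of exactly $c$, converting the spectral hypothesis into a clean upper bound on $m_C$. The remainder is a pigeonhole-plus-optimization argument, and I do not anticipate a serious obstacle there.
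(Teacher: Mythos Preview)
Your proof is correct and takes a genuinely different route from the paper's. The paper argues via norm inequalities: from the all-ones test vector it deduces $\sum_{i,j}|\bv S_{ij}| \ge n^2/2$, then applies $\ell_1$--$\ell_2$ equivalence on the $s$ nonzero entries to get $\|\bv S\|_F \ge n^2/(2\sqrt{s})$, hence $\|\bv 1 - \bv S\|_F \gtrsim n^2/\sqrt{s}$, and finally uses $\|\cdot\|_F \le \sqrt{n}\|\cdot\|_2$ to convert this into a lower bound on $\|\bv 1 - \bv S\|_2$ contradicting the hypothesis unless $s = \Omega(n/\epsilon^2)$.

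Your argument is combinatorial rather than norm-based: column-indicator test vectors force every size-$c$ column block to intersect the support of all but $\epsilon^2 n^2/c$ rows, and summing over a partition into $n/c$ blocks gives the nonzero count directly. This avoids the detour through the Frobenius norm entirely and is arguably more transparent about \emph{where} the nonzeros must lie (every column block must hit most rows), which has an expander-mixing flavor. The paper's approach, on the other hand, uses a single test vector and standard norm relations, making it shorter and perhaps easier to adapt to weighted variants. Both arguments are elementary and of comparable length; yours is a nice alternative.
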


Theorem \ref{theorem:psd_deterministic_lower} resolves an open question of \cite{Braverman:2021wj}, which asked whether a spectral approximation must have $\Omega\left (\frac{\operatorname{ns}(\bv A) \operatorname{sr}(\bv A)}{\epsilon^2}\right )$ non-zeros, where $\operatorname{sr}(\bv A) = \frac{\norm{\bv{A}}_F^2}{\norm{\bv A}_2^2}$ is the stable rank and $\operatorname{ns}(\bv A) = \max_i \frac{\norm{\bv a_i}_1^2}{\norm{\bv a_i}_2^2}$,  where $\bv{a}_i$ is the $i^{th}$ row of $\bv A$, is the  `numerical sparsity'. They give a lower bound when $ \operatorname{sr}(\bv A) = \Theta(n)$ but ask if this can be extended to $ \operatorname{sr}(\bv A) = o(n)$. For the all ones matrix, $\operatorname{sr}(\bv A) = 1$ and $\operatorname{ns}(\bv A) = n$, and so Theorem \ref{theorem:psd_deterministic_lower}  resolves this question. Further, by applying the theorem to  a block diagonal matrix with $r$ disjoint $k \times k$ blocks of all ones,   we resolve the question for integer $\operatorname{sr}(\bv A)$ and $\operatorname{ns}(\bv A)$.

\medskip

\noindent\textbf{Non-PSD Matrices.} 
The techniques used to prove Theorem \ref{thm:PSD_Quad} also give nearly tight universal sparsification bounds for general bounded entry (not necessarily PSD) matrices. In this case, we show that a  subset $S$ of  ${O}\left (\frac{n \log^4(1/\epsilon)}{\epsilon^4}\right )$ entries suffices to achieve spectral norm error depending on the nuclear norm $\|{\bv A}\|_1$, which is the sum of singular values of ${\bv A}$. 
\begin{restatable}[Universal Sparsifiers for Non-PSD Matrices]{theorem}{generalmatrixupperbound}
\label{thm: general matrix eps4 bound}
There exists a subset $S$ of $s = {O}\left (\frac{n\log^4(1/\epsilon)}{\epsilon^4}\right )$ entries of $[n] \times [n]$ such that, letting $\bv{S} \in \R^{n \times n}$ have $\bv S_{ij} = \frac{n^2}{s}$ for $(i,j) \in S$ and $\bv{S}_{ij} = 0$ otherwise, simultaneously for all symmetric matrices $\bv{A} \in \mathbb{R}^{n \times n}$ with bounded entries,
\begin{align}\label{eqn:error2}
\left\|\bv A - \bv A \circ \bv S \right\|_2 \leq \epsilon \cdot \max(n, \|\bv A\|_1).
\end{align}
\end{restatable}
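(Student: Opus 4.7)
The plan is to take $\bv S$ to be a scaled Ramanujan adjacency matrix of degree $d = \Theta(\log^4(1/\epsilon)/\epsilon^4)$, giving $s = O(n\log^4(1/\epsilon)/\epsilon^4)$ nonzero entries and $\|\bv 1 - \bv S\|_2 \le \epsilon^2 n/(c\log^2(1/\epsilon))$, and then reduce the non-PSD case to Theorem~\ref{thm:reduction} via an eigenvalue-based decomposition of $\bv A$.

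The central step is a dyadic split by eigenvalue magnitude. Writing the eigendecomposition $\bv A = \sum_k \lambda_k \bv v_k \bv v_k^\top$, for $j = 0, 1, \ldots, L$ with $L = O(\log(n/\epsilon))$ and $\tau_j = 2^{-j}\|\bv A\|_2$, I would let $\bv A_j^\pm$ collect the restrictions to positive/negative eigenvalues whose magnitudes lie in $(\tau_{j+1}, \tau_j]$, with a leftover tail $\bv A_{\mathrm{tail}}$ of eigenvalues of magnitude at most $\epsilon$. The tail has $\|\bv A_{\mathrm{tail}}\|_2 \le \epsilon$, so its PSD $\pm$-parts have entries bounded by $\epsilon$ and Theorem~\ref{thm:reduction} gives $\|\bv A_{\mathrm{tail}} - \bv A_{\mathrm{tail}} \circ \bv S\|_2 \le O(\epsilon \|\bv 1-\bv S\|_2/n \cdot n) = O(\epsilon n)$. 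Each $\bv A_j^\pm$ is PSD with $\|\bv A_j^\pm\|_2 \le \tau_j$ (hence entries bounded by $\tau_j$) and rank $r_j = O(\|\bv A\|_1/\tau_j)$. For each such piece I would derive two bounds and take the minimum. Rescaling $\bv A_j^\pm/\tau_j$ and invoking Theorem~\ref{thm:reduction} yields the \emph{direct} bound $\|\bv A_j^\pm - \bv A_j^\pm \circ \bv S\|_2 \le \tau_j \|\bv 1 - \bv S\|_2$. A rank-1 expansion together with the identity $\|\bv v_k \bv v_k^\top \circ (\bv 1-\bv S)\|_2 = \|\mathrm{diag}(\bv v_k)(\bv 1-\bv S)\mathrm{diag}(\bv v_k)\|_2 \le \|\bv v_k\|_\infty^2 \|\bv 1-\bv S\|_2$, combined with the structural inequality $\|\bv v_k\|_\infty^2 \le 1/|\lambda_k|$ that follows from $|A_{ii}| = |\sum_k \lambda_k v_{ki}^2| \le 1$, yields the \emph{low-rank} bound $\|\bv A_j^\pm - \bv A_j^\pm \circ \bv S\|_2 \le r_j \|\bv 1 - \bv S\|_2$.

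The main obstacle is tightening this per-piece analysis so that, after summing geometrically over the $L = O(\log(n/\epsilon))$ dyadic scales, the total loss is only $\epsilon \max(n, \|\bv A\|_1)$; the naive triangle-inequality low-rank bound is too lossy when the eigenvectors of $\bv A_j^\pm$ are spiky. To fix this, I expect the right approach to apply the generalization of the PSD computation used for Theorem~\ref{thm:reduction}: for any symmetric $\bv A$ and unit $\bv y$,
$$\bv y^\top (\bv A - \bv A \circ \bv S)\bv y = \mathrm{tr}\bigl((\bv 1 - \bv S)\, \mathrm{diag}(\bv y)\, \bv A\, \mathrm{diag}(\bv y)\bigr),$$
applied to each $\bv A_j^\pm$ and combined with a sharper nuclear-norm bound on $\mathrm{diag}(\bv y)\bv A_j^\pm \mathrm{diag}(\bv y)$ that uses both its PSD structure and its low rank. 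The $\log^2(1/\epsilon)$ factor strengthening in the hypothesis on $\bv S$ is precisely what is needed to absorb the $O(\log(1/\epsilon))$ losses from the dyadic summation and from this per-level analysis, yielding the claimed bound $\|\bv A - \bv A \circ \bv S\|_2 \le \epsilon \max(n,\|\bv A\|_1)$.
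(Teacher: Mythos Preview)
Your central structural inequality $\|\bv v_k\|_\infty^2 \le 1/|\lambda_k|$ is false for non-PSD matrices. It relies on $|A_{ii}| = |\sum_k \lambda_k v_{ki}^2| \le 1$, but when the $\lambda_k$ have mixed signs the terms can cancel, so an individual $|\lambda_k| v_{ki}^2$ can be arbitrarily large. Concretely, take $\bv A = \bv e_1 \bv w^\top + \bv w \bv e_1^\top$ with $\bv w = (0,1,\ldots,1)^\top$: the entries are in $\{0,1\}$, the nonzero eigenvalues are $\pm\sqrt{n-1}$, and each eigenvector has $\|\bv v_\pm\|_\infty^2 = 1/2 \gg 1/\sqrt{n-1}$. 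This kills your ``low-rank'' bound. Your fallback ``direct'' bound $\sum_j \tau_j \|\bv 1-\bv S\|_2 \approx \|\bv A\|_2 \cdot \|\bv 1-\bv S\|_2$ is also far too weak: $\|\bv A\|_2$ can be $\Theta(n)$, giving an error of order $\epsilon^2 n^2$, not $\epsilon n$. Your proposed sharpening via the trace identity does not rescue this: $\|D_{\bv y}\bv A_j^\pm D_{\bv y}\|_1 = \mathrm{tr}(D_{\bv y}\bv A_j^\pm D_{\bv y}) = \sum_i y_i^2 (\bv A_j^\pm)_{ii}$, and summing over all $j$ and both signs gives $\sum_i y_i^2 (|\bv A|)_{ii}$, where $|\bv A| = \sum_k |\lambda_k|\bv v_k\bv v_k^\top$. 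But $(|\bv A|)_{ii}$ is \emph{not} bounded by $1$ for bounded-entry non-PSD $\bv A$ (in the example above it equals $\sqrt{n-1}$), so you end up with the same $n\cdot\|\bv 1-\bv S\|_2$ loss.

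The paper's proof decomposes the \emph{test vector} $\bv x$, not the matrix, into $O(\log(1/\epsilon))$ level sets $\bv x_i$ by entry magnitude. For each $\bv x_i$ it splits the other copy of $\bv x$ into a well-spread part $\bv x_L$ (entries at most roughly $1/(2^i\bar\epsilon\sqrt n)$) and a sparse part $\bv x_H$. On $\bv x_i^\top(\bv A - \bv A\circ\bv S)\bv x_L$ the eigendecomposition argument works because $\|D_{\bv x_i}\|_2 \|D_{\bv x_L}\|_2 \le 1/(\bar\epsilon n)$, yielding $\bar\epsilon\|\bv A\|_1$. On $\bv x_i^\top(\bv A - \bv A\circ\bv S)\bv x_H$ the eigendecomposition is abandoned entirely: one uses $\|\bv A\|_\infty \le 1$ together with $\ell_1$ bounds $\|\bv x_i\|_1\|\bv x_H\|_1 = O(\bar\epsilon n)$ (from the level-set support sizes), plus $\|\bv 1-\bv S\|_2$ to control $\bv A\circ\bv S$. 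The point is that the obstruction is spiky test vectors, not spiky eigenvectors, and the fix is to isolate the spiky coordinates of $\bv x$ and handle them with the entrywise bound rather than any spectral decomposition of $\bv A$.
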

Note that for a bounded entry PSD matrix ${\bv A}$, we have $\|{\bv A}\|_1 = \tr({\bv A}) = O(n)$ and so for PSD matrices, (\ref{eqn:error}) and (\ref{eqn:error2}) are equivalent. \begin{remark}\label{rem:asym}Although Theorem \ref{thm: general matrix eps4 bound} is stated for symmetric ${\bv A}$, one can first symmetrize ${\bv A}$ by  considering ${\bv B} = [{\bv 0}, {\bv A}; {\bv A}^T, {\bv 0}]$. Then for any ${\bv z} = ({\bv x}, {\bv y}) \in \mathbb{R}^{2n}$, we have ${\bv z}^T {\bv B} {\bv z} = 2{\bv x}^T {\bv A} {\bv y}$, and $\|{\bv B}\|_1 = 2 \|{\bv A}\|_1$, so applying the above theorem to ${\bv B}$ gives us a  spectral approximation to ${\bv A}$.
\end{remark}
As with Theorem \ref{thm:PSD_Quad}, Theorem \ref{thm: general matrix eps4 bound} follows from a general claim which shows that sparsifying the all ones matrix to small enough error suffices to sparsify any bounded entry symmetric matrix.
\begin{restatable}[Spectral Expanders are Universal Sparsifiers for Non-PSD Matrices]{theorem}{reductionNonPSD}
\label{thm:reductionNonPSD}
Let $\bv{1} \in \R^{n \times n}$ be the all ones matrix. Let $\bv{S} \in \R^{n \times n}$ be any matrix such that $\norm{\bv 1 - \bv S}_2 \le \frac{\epsilon^2 n}{c \log^2(1/\epsilon)}$ for some  large enough constant $c$. Then for any symmetric $\bv{A} \in \R^{n \times n}$ with bounded entries, $\norm{\bv{A}-\bv{A} \circ \bv S}_2 \le  \epsilon \cdot \max(n, \|\bv A\|_1).$
\end{restatable}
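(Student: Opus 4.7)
My plan is to reduce to Theorem \ref{thm:reduction} via eigenvalue bucketing. Using the spectral decomposition $\bv{A} = \sum_l \lambda_l \bv{v}_l \bv{v}_l^T$, I partition the indices by sign and by magnitude into $O(\log(1/\epsilon))$ buckets: for $k = 0, 1, \ldots, K$ with $K = O(\log(1/\epsilon))$, let
\[
\bv{A}_k^+ = \sum_{\lambda_l \in (2^{-k-1} n,\, 2^{-k} n]} \lambda_l \bv{v}_l \bv{v}_l^T, \qquad \bv{A}_k^- = \sum_{-\lambda_l \in (2^{-k-1} n,\, 2^{-k} n]} (-\lambda_l) \bv{v}_l \bv{v}_l^T,
\]
each of which is PSD, and collect the remaining small-magnitude eigenvalues into a tail $\bv{A}_{\text{tail}}$ with $\|\bv{A}_{\text{tail}}\|_2 = O(\epsilon n)$, giving $\bv{A} = \bv{A}_{\text{tail}} + \sum_{k=0}^K (\bv{A}_k^+ - \bv{A}_k^-)$. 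Each bucket matrix $\bv{A}_k^\pm$ has operator norm $\sigma_k \le 2^{-k} n$ and rank $r_k \le 2^{k+1} \|\bv{A}\|_1 / n$, since its eigenvalues have magnitude at least $2^{-k-1} n$ and sum (in magnitude) to at most $\|\bv{A}\|_1$.

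For each bucket I would apply the scaling form of Theorem \ref{thm:reduction}. Setting $\delta := \|\bv 1 - \bv S\|_2 / n \le \epsilon^2/(c \log^2(1/\epsilon))$, and using the PSD Cauchy--Schwarz inequality $|(\bv{A}_k^\pm)_{ij}|^2 \le (\bv{A}_k^\pm)_{ii} (\bv{A}_k^\pm)_{jj}$ to bound the max entry of $\bv{A}_k^\pm$ by its max diagonal, Theorem \ref{thm:reduction} gives
\[
\|\bv{A}_k^\pm - \bv{A}_k^\pm \circ \bv{S}\|_2 \le \delta n \cdot \max_i (\bv{A}_k^\pm)_{ii}.
\]
Splitting $\bv{A}_{\text{tail}}$ into its PSD parts (each with operator norm $O(\epsilon n)$, hence max entry $O(\epsilon n)$) handles the tail analogously. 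Summing the per-bucket and tail errors by the triangle inequality should yield the target bound $\epsilon \max(n,\|\bv{A}\|_1)$.

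The main obstacle is obtaining a sufficiently sharp per-bucket bound on $\max_i (\bv{A}_k^\pm)_{ii}$. The naive estimate $\max_i (\bv{A}_k^\pm)_{ii} \le \sigma_k = 2^{-k} n$ gives total error $O(\delta n^2 \log(1/\epsilon))$, which exceeds $\epsilon \max(n, \|\bv{A}\|_1)$ unless $n \lesssim \log^2(1/\epsilon)/\epsilon$. The crucial additional structure is the trace bound $\sum_i (\bv{A}_k^\pm)_{ii} = \tr(\bv{A}_k^\pm) \le \sigma_k r_k \le 2\|\bv{A}\|_1$: at most $O(\|\bv{A}\|_1/\tau)$ rows can have diagonal exceeding any threshold $\tau$, so the typical diagonal is small. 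One should decompose $\bv{A}_k^\pm$ into a light part (the principal submatrix on the non-heavy rows, which is PSD with max diagonal $\le \tau$, making Theorem \ref{thm:reduction} sharp with error $\delta n \tau$) and a heavy correction supported on $O(\|\bv{A}\|_1/\tau)$ row/column indices, whose spectral norm is $O(\sigma_k)$ and whose Hadamard product with $\bv{S}$ can be controlled via the expander-like property that $\bv{S}$ restricted to any small index block has small spectral norm. Optimizing $\tau$ per bucket and summing over the $O(\log(1/\epsilon))$ buckets then accounts for the $\log^2(1/\epsilon)$ factor in the sparsifier hypothesis.
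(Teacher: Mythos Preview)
Your approach is genuinely different from the paper's, and the gap you flag as ``the main obstacle'' is real and is not closed by the light/heavy sketch you give. The paper does \emph{not} bucket the eigenvalues of $\bv A$; it buckets the \emph{test vector} $\bv x$ into $O(\log(1/\epsilon))$ level sets by entry magnitude, and for each level set splits the remaining copy of $\bv x$ into a ``well-spread'' part (small entries) and a ``sparse'' part (large entries). The well-spread part is handled exactly by your Remark~\ref{rem:diag}/Theorem~\ref{thm:reduction} argument, which for unit $\bv x$ with $|\bv x(j)|\le B$ gives error $\le \delta n\cdot B^2\|\bv A\|_1$. The sparse part is handled by an $\ell_1$ argument that uses the bounded-entry hypothesis $\|\bv A\|_\infty\le 1$ on the \emph{original} matrix directly: if $\bv x_H$ has few nonzeros then $\|\bv x_H\|_1$ is small, hence $|\bv x_i^T\bv A\bv x_H|\le \|\bv x_i\|_1\|\bv x_H\|_1$ is small, and $|\bv x_i^T(\bv A\circ\bv S)\bv x_H|$ is controlled by $\bv x_{i,\pm}^T\bv S\bv x_{H,\pm}\le \bv x_{i,\pm}^T\bv 1\bv x_{H,\pm}+\delta n$.

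Your eigenvalue-bucketing route loses exactly this lever: the PSD pieces $\bv A_k^\pm$ and $\bv A_{\mathrm{tail}}^\pm$ have entries up to $\sigma_k$ (resp.\ $\epsilon n$), not $1$, so you cannot invoke $\|\cdot\|_\infty\le 1$ on them. Concretely, your tail bound already fails: applying Theorem~\ref{thm:reduction} to $\bv A_{\mathrm{tail}}^\pm$ gives error $\delta n\cdot O(\epsilon n)=O(\epsilon^3 n^2/\log^2(1/\epsilon))$, which exceeds $\epsilon n$ whenever $n\gtrsim \log^2(1/\epsilon)/\epsilon^2$. The same overshoot appears in each bucket's heavy part. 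Your proposed fix --- ``$\bv S$ restricted to a small index block has small spectral norm'' --- does give $\|\bv S_{H\times H}\|_2\le |H|+\delta n$, but you need to control $\|\bv C_k\circ(\bv 1-\bv S)\|_2$ where $\bv C_k$ has entries of size up to $\sigma_k$ on $H\times[n]$, and Hadamard products do not interact with spectral norms in a way that makes this small; one concentrated row of $\bv C_k$ can blow up by a factor $\sqrt{n/d}$ under $\circ\,\bv S$ for a degree-$d$ expander. To rescue the heavy part you would end up needing an $\ell_1$ argument on $\bv x$ restricted to the heavy indices that again uses $\|\bv A\|_\infty\le 1$ on the original $\bv A$ --- and once you are splitting $\bv x$, the matrix bucketing is superfluous (indeed, $\bv A=\bv A^+-\bv A^-$ already yields the same diagonal bound $\delta n\sum_j\bv x(j)^2|\bv A|_{jj}$ without any buckets).
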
 
Theorem \ref{thm: general matrix eps4 bound} follows by applying Theorem \ref{thm:reductionNonPSD} where $\bv{S}$ is taken to be the scaled adjacency matrix of a Ramanujan expander graph with degree $d = O(\log^4(1/\epsilon)/\epsilon^4)$. 

\medskip

\noindent\textbf{Lower Bounds for Non-PSD Matrices.}
We can prove that Theorem \ref{thm: general matrix eps4 bound} is tight up to logarithmic factors. Our lower bound holds even for the easier problem of top singular value (spectral norm) approximation and against a more general class of algorithms, which non-adaptively and deterministically query entries of the input matrix. The idea  is simple: since the entries read by the deterministic algorithm are fixed, we can construct two very different input instances on which the algorithm behaves identically: one which is the all ones matrix, and the other which is one only on the entries read by the algorithm and zero everywhere else.  We show that if the number of entries $s$ that are read is too small, the top singular value of the second instance is significantly smaller than the first (as compared to their Schatten-1 norms), which violates the desired error bound of $\epsilon \cdot \max(n,\norm{\bv A}_1)$. To obtain a tight bound, we apply the above construction on just a subset of the rows of the matrix on which the algorithm does not read too many entries. Here, we critically use non-adaptivity, as this subset of rows can be fixed, independent of the input instance.
\begin{restatable}[Non-Adaptive Query Lower Bound for Deterministic Spectral Approximation of 
Non-PSD Matrices]{theorem}{lowerboundNonAdaptive}
\label{thm: lower bound nonadaptive}
For any $\epsilon\in (1/n^{1/4},1/4)$, any deterministic algorithm that queries entries of  a bounded entry matrix $\bv A$ non-adaptively and outputs $\widetilde \sigma_1$ satisfying $|\sigma_1(\bv A)-\widetilde \sigma_1| \le \epsilon \cdot \max(n, \|\bv A\|_1)$  must read at least $\Omega\left(\frac{n}{\epsilon^4}\right)$ entries. Here $\sigma_1(\bv A) = \norm{\bv A}_2$ is the largest singular value of $\bv A$.
\end{restatable}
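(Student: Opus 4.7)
The plan is to leverage determinism and non-adaptivity: the algorithm queries a fixed set $Q\subseteq[n]\times[n]$ of size $s$, so its output $\widetilde\sigma_1$ depends only on $\{\bv A_{ij}\}_{(i,j)\in Q}$. I will exhibit two bounded-entry matrices $\bv{A}_1,\bv{A}_2$ that agree on $Q$ but whose top singular values differ by more than the allowed slack $\epsilon\cdot\max(n,\|\bv A_i\|_1)$ unless $s=\Omega(n/\epsilon^4)$, which forces the same $\widetilde\sigma_1$ to violate the error guarantee on at least one instance.

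The construction uses a column-averaging argument. Suppose for contradiction that $s\le c_0 n/\epsilon^4$ for a small constant $c_0$. Fix $k=c_1\epsilon^2 n$ for a large constant $c_1$; the hypothesis $n^{-1/4}\le\epsilon\le 1/4$ ensures $1\le k\le n$. Let $C\subseteq[n]$ be the $k$ columns of $[n]$ containing the fewest queries of $Q$; since the average query count per column is $s/n$, we have $q\le sk/n = c_0 c_1 n/\epsilon^2$, where $q:=|Q\cap([n]\times C)|$. Define $\bv{A}_1\in\{0,1\}^{n\times n}$ as the indicator of $[n]\times C$ and $\bv{A}_2\in\{0,1\}^{n\times n}$ as the indicator of $Q\cap([n]\times C)$. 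Both are bounded-entry, and they agree on $Q$: outside $[n]\times C$ both vanish on $Q$, and inside they coincide by construction. Hence the algorithm outputs the same $\widetilde\sigma_1$ on both.

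Next I bound the relevant norms. The rank-$1$ matrix $\bv A_1$ satisfies $\sigma_1(\bv A_1)=\|\bv A_1\|_1=\sqrt{nk}=\sqrt{c_1}\,\epsilon n$. For $\bv A_2$, its $q$ nonzero entries lie within $k$ columns, so $\sigma_1(\bv A_2)\le\|\bv A_2\|_F=\sqrt{q}\le\sqrt{c_0 c_1 n}/\epsilon$ and $\|\bv A_2\|_1\le\sqrt{\rank(\bv A_2)}\,\|\bv A_2\|_F\le\sqrt{kq}= c_1\sqrt{c_0}\,n$. Choosing $c_1\sqrt{c_0}\le 1$ and $\sqrt{c_1}\epsilon\le 1$ gives $\max(n,\|\bv A_i\|_1)=n$ for $i=1,2$, so adding the two error inequalities and applying the triangle inequality yields $|\sigma_1(\bv A_1)-\sigma_1(\bv A_2)|\le 2\epsilon n$ and thus $\sigma_1(\bv A_2)\ge(\sqrt{c_1}-2)\epsilon n$. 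Combining with $\sigma_1(\bv A_2)\le\sqrt{c_0 c_1 n}/\epsilon$ gives $(\sqrt{c_1}-2)\epsilon^2\sqrt{n}\le\sqrt{c_0 c_1}$. Using $\epsilon\ge n^{-1/4}$ so $\epsilon^2\sqrt{n}\ge 1$, taking $c_1=16$ and $c_0$ a sufficiently small absolute constant (e.g., $c_0=1/256$) produces a contradiction.

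The main obstacle is balancing three competing demands in the parameter choice: $\sqrt{nk}$ must exceed the additive error slack $\epsilon n$ by a large margin; $\|\bv A_2\|_1$ must stay $O(n)$ so that the RHS $\epsilon\cdot\max(n,\|\bv A_2\|_1)$ does not absorb the gap; and the Frobenius bound $\sqrt q$ on $\sigma_1(\bv A_2)$ must remain strictly below $\sigma_1(\bv A_1)$. The $n/\epsilon^4$ improvement over a naive $n/\epsilon^2$ bound comes precisely from restricting to $k=\Theta(\epsilon^2 n)$ columns: averaging then saves a factor of $k/n=\Theta(\epsilon^2)$ on queries inside the block, which propagates through the $\sqrt{\cdot}$ in the spectral-norm bound to yield an additional $\epsilon$ factor in the trade-off. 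Non-adaptivity is essential here, since the set $C$ is determined by $Q$ alone and must be fixed before the adversary chooses between $\bv A_1$ and $\bv A_2$.
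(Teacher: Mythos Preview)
Your proof is correct and takes essentially the same approach as the paper: both pick, by averaging, a set of $\Theta(\epsilon^2 n)$ rows (the paper) or columns (you) containing few queries, then contrast the rank-$1$ all-ones block on that set with its restriction to the queried entries, bounding the latter's spectral norm via $\|\cdot\|_F$ and its nuclear norm via $\sqrt{\rank}\cdot\|\cdot\|_F$. The parametrizations differ only cosmetically---the paper sets the block size to $n^{3/2}/s^{1/2}$, which equals $\Theta(\epsilon^2 n)$ at $s=\Theta(n/\epsilon^4)$---and the contradiction is derived in the same way.
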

Observe that Theorem \ref{thm: general matrix eps4 bound} implies the existence of a non-adaptive deterministic algorithm for the above problem with query complexity $O \left (\frac{n \log^4(1/\epsilon)}{\epsilon^4}\right )$, since  $\bv{A} \circ \bv S$ can be computed with non-adaptive determinstic queries and since $\norm{\bv A - \bv A \circ \bv S}_2 \le \epsilon \cdot \max(n,\norm{\bv A}_1)$ implies via Weyl's inequality that $|\sigma_1(\bv{A})-\sigma_1(\bv A \circ \bv S)| \le \epsilon \cdot \max(n,\norm{\bv A}_1)$. Also recall that while Theorem \ref{thm: general matrix eps4 bound} is stated for bounded entry symmetric matrices, it applies to general (possibly asymmetric) matrices by Remark \ref{rem:asym}.

Note that Theorem \ref{thm: lower bound nonadaptive} establishes a separation between universal sparsifiers and  randomized sparsification since randomly sampling $  O\left (\frac{n\log n }{\epsilon^2}\right )$ entries of any $\bv{A} \in \R^{n \times n}$ achieves error $\epsilon n$  by \cite{DZ11}. In fact, for the problem of just approximating $\sigma_1(\bv{A})$ to error $\pm \epsilon n$, randomized algorithms using just $\poly(\log n,1/\epsilon)$ samples  are known \cite{Bakshi:2020uz,Bhattacharjee:2021wl}. Theorem \ref{thm: lower bound nonadaptive} shows that universal sparsifiers for general bounded entry  matrices (and in fact all non-adaptive deterministic algorithms for spectral approximation)  require a worse $1/\epsilon^4$ dependence and error bound of $\epsilon \cdot \max(n,\norm{\bv A}_1)$. 
We can extend Theorem \ref{thm: lower bound nonadaptive} to apply to general deterministic algorithms that query $\bv{A}$ possibly \emph{adaptively}, however the lower bound weakens to $\Omega(n/\epsilon^2)$. Understanding if this gap between adaptive and non-adaptive query deterministic algorithms is real  is an interesting open question. 

\begin{restatable}[Adaptive Query Lower Bound for Deterministic Spectral Approximation of Non-PSD Matrices]{theorem}{lowerbounda}
\label{thm: lower bound a1}
For any $\epsilon\in (1/\sqrt{n},1/4)$, any deterministic algorithm that queries entries of  a bounded entry matrix $\bv A$ (possibly adaptively) and outputs $\widetilde \sigma_1$ satisfying $|\sigma_1(\bv A)-\widetilde \sigma_1| \le \epsilon \cdot \max(n, \|\bv A\|_1)$  must read at least $\Omega\left(\frac{n}{\epsilon^2}\right)$ entries. Here $\sigma_1(\bv A) = \norm{\bv A}_2$ is the largest singular value of $\bv A$.
\end{restatable}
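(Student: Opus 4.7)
The plan is to leverage determinism by coupling two execution traces. On input $\bv A = \bv 1$, the algorithm deterministically queries a fixed sequence of entries $(i_1, j_1), \ldots, (i_q, j_q)$ and outputs some value $\widetilde\sigma$. Let $T = \{(i_k, j_k)\}$ be this query set. By the correctness hypothesis applied to $\bv 1$ (with $\sigma_1(\bv 1) = n$ and $\|\bv 1\|_1 = n$), we have $|\widetilde\sigma - n| \le \epsilon n$. I will show that $|T| = \Omega(n/\epsilon^2)$, which is the desired query lower bound.

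Now consider the indicator matrix $\bv A' \in \{0,1\}^{n \times n}$ defined by $\bv A'_{ij} = 1$ iff $(i,j) \in T$. This is a bounded-entry matrix that agrees with $\bv 1$ on every position in $T$. A straightforward induction on the query index shows the algorithm's trace on input $\bv A'$ is identical to its trace on $\bv 1$: assuming matching queries and responses through step $k-1$, the $k$-th query is the same function of the (matching) history on both inputs, and since $(i_k, j_k) \in T$ both responses equal $1$. Hence the algorithm queries exactly $T$ on $\bv A'$ and outputs the same value $\widetilde\sigma$. Correctness on $\bv A'$ then gives $|\widetilde\sigma - \sigma_1(\bv A')| \le \epsilon \max(n, \|\bv A'\|_1)$.

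Since $\bv A'$ has $q = |T|$ nonzero entries each equal to $1$, $\|\bv A'\|_F^2 = q$, so $\sigma_1(\bv A') \le \sqrt q$ and $\|\bv A'\|_1 \le \sqrt{\rank(\bv A')}\cdot \|\bv A'\|_F \le \sqrt{nq}$. Combining both error bounds via the triangle inequality,
\[
n - \sqrt q \;\le\; |n - \sigma_1(\bv A')| \;\le\; \epsilon n + \epsilon \max(n, \sqrt{nq}).
\]
If $q < n$, then $\max(n,\sqrt{nq}) = n$, forcing $\sqrt q \ge (1 - 2\epsilon)n \ge n/2$ (using $\epsilon \le 1/4$), contradicting $q < n$. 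Hence $q \ge n$ and $\max(n,\sqrt{nq}) = \sqrt{nq}$, giving $(1 - \epsilon)n \le \sqrt q\,(1 + \epsilon\sqrt n)$. Using $\epsilon \ge 1/\sqrt n$ so that $\epsilon \sqrt n \ge 1$ and $1 + \epsilon \sqrt n \le 2\epsilon\sqrt n$, I conclude $\sqrt q \ge (1-\epsilon)\sqrt n/(2\epsilon)$, whence $q = \Omega(n/\epsilon^2)$.

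The main (and essentially only) non-routine obstacle is the trace-coupling step: proving that the adaptive algorithm makes the same queries on the sparse input $\bv A'$ as on $\bv 1$. This is where determinism is used in an essential way — a randomized algorithm could break the coupling by consulting internal coins — and it is also why $\bv A'$ must be built \emph{from} the trace on $\bv 1$ (we do not know $T$ in advance). Note that this argument gives only $\Omega(n/\epsilon^2)$, weaker than the non-adaptive $\Omega(n/\epsilon^4)$ of Theorem \ref{thm: lower bound nonadaptive}, because in the adaptive setting the query set $T$ depends on the input and we cannot, in advance, restrict the hard instance to a favorable subset of rows as in the non-adaptive proof.
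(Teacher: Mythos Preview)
Your proof is correct and follows essentially the same approach as the paper: run the deterministic algorithm on the all-ones matrix, build the $\{0,1\}$-indicator matrix $\bv A'$ supported on the query set $T$, use determinism to argue the algorithm behaves identically on $\bv A'$, and then bound $\sigma_1(\bv A') \le \sqrt{q}$ and $\|\bv A'\|_1 \le \sqrt{nq}$ to force $q = \Omega(n/\epsilon^2)$. The only cosmetic difference is that the paper argues by contradiction (assuming $q \le cn/\epsilon^2$) whereas you solve for $q$ directly; your explicit induction for the trace-coupling step is a nice addition.
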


\subsubsection{Applications to Fast Deterministic Algorithms for Linear Algebra}\label{sec:det_intro} Given sampling matrix $\bv S$ such that $\norm{\bv A - \bv A \circ \bv S}_2 \le \epsilon \cdot \max(n,\norm{\bv A}_1)$, one can use $\bv{A} \circ \bv S$ to approximate various linear algebraic properties of $\bv A$. For example, by Weyl's inequality \cite{weyl1912asymptotic}, the eigenvalues of $\bv{A} \circ \bv S$ approximate those of $\bv{A}$ up to additive error $\pm \epsilon \cdot \max(n,\norm{\bv A}_1)$. Thus, our universal sparsification results (Theorems \ref{thm:PSD_Quad}, \ref{thm: general matrix eps4 bound}) immediately give the first known deterministic algorithms for approximating the eigenspectrum of a bounded entry matrix up to small additive error with \emph{sublinear entrywise query complexity}. Previously, only randomized sublinear query algorithms were known \cite{williams2000using,musco2017recursive,MW17,tropp2017randomized,Bakshi:2020uz,Bhattacharjee:2021wl}.

Further, our results yield the first deterministic algorithms for several  problems that run in $o(n^\omega)$  time, where $\omega \approx 2.373$ is the matrix multiplication exponent  \cite{Alman:2021uk}. 
Consider e.g., approximating  the top singular value $\sigma_1(\bv A)$ (the spectral norm) of $\bv A$. A $(1+\epsilon)$-relative error approximation can be computed in $\widetilde O(n^2/\sqrt{\epsilon})$ time with high probability using $O(\log(n/\epsilon)/\sqrt{\epsilon})$ iterations of the Lanczos  method with a random initialization \cite{Kuczynski:1992va,musco2015randomized}. This can be further accelerated e.g., via randomized entrywise sparsification \cite{DZ11}, allowing an additive $\pm \epsilon n$ approximation to $\sigma_1(\bv A)$ to be computed in $\widetilde O(n/\poly(\epsilon))$ time. However, prior to our work, no fast deterministic algorithms were known, even with  coarse approximation guarantees. The fastest approach was just to perform a full SVD of $\bv A$, requiring $\Theta(n^\omega)$ time. In fact, this gap between randomized and deterministic methods exists for many linear algebraic problems, and resolving it is a central open question.

By combining our universal sparsification results with a derandomized power method that uses a full subspace of vectors for initalization, and iteratively approximates the singular values of $\bv A \circ \bv S$ via `deflation'  \cite{Saad:2011wv}, we give to the best of our knowledge, the first $o(n^\omega)$ time deterministic algorithm for approximating \emph{all singular values} of a bounded entry matrix $\bv{A}$ to small additive error.

\begin{restatable}[Deterministic Singular Value Approximation]{theorem}{deterministicsingval}\label{thm:approxSVD} 
Let $\bv{A} \in \mathbb{R}^{n \times n}$ be a bounded entry symmetric matrix  with singular values $\sigma_1(\bv{A}) \geq \sigma_2(\bv{A}) \geq \ldots \geq \sigma_n(\bv{A})$. Then there exists a deterministic algorithm that, given $\epsilon \in (0,1)$, reads $\widetilde{O}\big(\frac{n}{\epsilon^4} \big)$ entries of $\bv{A}$, runs in $\widetilde{O}\big(\frac{n^2}{\epsilon^8} \big)$ time, and returns singular value approximations $\sigmatilde_1(\bv{A}),\ldots, \sigmatilde_n(\bv A)$ satisfying for all $i$, $$|\sigma_i(\bv{A})-\sigmatilde_i(\bv{A})| \leq \epsilon \cdot \max(n, \|\bv{A} \|_1).$$
Further, for all $i \le 1/\epsilon$, the algorithm returns a unit vector $\bv{z}_i$ such that $|\norm{\bv A \bv z_i}_2 - \sigma_i(\bv A)| \le \epsilon \cdot \max(n,\norm{\bv A}_1)$ and the returned vectors are all mutually orthogonal.
\end{restatable}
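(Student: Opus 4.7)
The plan is to combine our universal sparsifier guarantee with a derandomized orthogonal iteration with deflation on the resulting sparse matrix. First, I would apply Theorem \ref{thm: general matrix eps4 bound} with $\bv S$ taken to be the scaled adjacency matrix of an explicit Ramanujan expander of degree $d = \widetilde \Theta(1/\epsilon^4)$: this gives a fixed sampling pattern that does not depend on $\bv A$, and querying the $s = \widetilde O(n/\epsilon^4)$ entries in the support of $\bv S$ yields the matrix $\bv M \eqdef \bv A \circ \bv S$ with $\|\bv A - \bv M\|_2 \le (\epsilon/C)\max(n, \|\bv A\|_1)$ for any desired constant $C$. By Weyl's inequality, $|\sigma_i(\bv A) - \sigma_i(\bv M)| \le (\epsilon/C)\max(n, \|\bv A\|_1)$ for every $i$, so the problem reduces to deterministically producing singular value and singular vector approximations of the explicit sparse matrix $\bv M$ in $\widetilde O(n^2/\epsilon^8)$ time.

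A key observation is that only a few singular values need to be computed accurately. Since $\bv A$ has entries bounded by $1$, $\|\bv A\|_F^2 \le n^2$, so at most $1/\epsilon^2$ singular values of $\bv A$ exceed $\epsilon n$; by Weyl the same (up to constants) holds for $\bv M$. For any index $i$ beyond this threshold, outputting $\widetilde \sigma_i = 0$ already satisfies the claimed error bound. Thus we must carefully estimate only the top $k = O(1/\epsilon^2)$ singular values of $\bv M$, and among those return orthogonal vector approximations for the first $1/\epsilon$.

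To extract these top singular values and vectors deterministically, I would run orthogonal iteration on $\bv M$ initialized with the $n \times n$ identity as its starting block, a trivially deterministic choice that has full projection onto every singular direction, and extract the singular pairs one at a time via deflation as in \cite{Saad:2011wv}. Each multiplication $\bv M \bv V$ costs $O(n \cdot \nnz(\bv M)) = \widetilde O(n^2/\epsilon^4)$, and choosing $T = \widetilde \Theta(1/\epsilon^4)$ power steps, enough to resolve singular values separated by $\Theta(\epsilon \max(n,\|\bv A\|_1))$, gives total runtime $\widetilde O(n^2/\epsilon^8)$. The returned $\bv z_i$ are mutually orthogonal by construction, and the vector guarantee $|\|\bv A \bv z_i\|_2 - \sigma_i(\bv A)| \le \epsilon \max(n, \|\bv A\|_1)$ follows from the triangle inequality $\big|\|\bv A \bv z_i\|_2 - \|\bv M \bv z_i\|_2\big| \le \|\bv A - \bv M\|_2$ together with $\|\bv M \bv z_i\|_2 \approx \sigma_i(\bv M) \approx \sigma_i(\bv A)$.

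The main obstacle is the derandomization of the power iteration. A random Gaussian starting block guarantees nontrivial projection onto every singular direction, which is what fuels gap-free block Krylov bounds like those of \cite{musco2015randomized}. Replacing this with a deterministic starting block forces us to use a full $n$-column initialization and to carefully bound error propagation through the iterated deflation steps, which is why the runtime inflates to $\widetilde O(n^2/\epsilon^8)$ rather than the near-linear-in-$n$ runtime achievable with randomness. Verifying that $\widetilde \Theta(1/\epsilon^4)$ power steps really suffice to separate all relevant top-$O(1/\epsilon^2)$ singular values of $\bv M$ under the weak $\epsilon \cdot \max(n, \|\bv A\|_1)$ additive accuracy, and making the per-step deflation numerically stable, is the principal technical hurdle.
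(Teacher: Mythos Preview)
Your overall strategy matches the paper's: sparsify via Theorem~\ref{thm: general matrix eps4 bound}, then run a derandomized power method with deflation on the sparse matrix, trying all $n$ standard basis vectors as starting points so that at least one has $\Omega(1/\sqrt{n})$ alignment with the top singular direction. The vector guarantee via the triangle inequality $|\|\bv A\bv z_i\|_2 - \|\bv M\bv z_i\|_2| \le \|\bv A-\bv M\|_2$ is exactly what the paper does.

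There is, however, a quantitative gap that prevents your parameters from reaching the stated $\widetilde O(n^2/\epsilon^8)$ runtime. You bound the number of large singular values by $k=O(1/\epsilon^2)$ using $\|\bv A\|_F^2\le n^2$. The paper instead uses the nuclear norm: since $\sum_i\sigma_i(\bv A)=\|\bv A\|_1$, at most $1/\epsilon$ singular values can exceed $\epsilon\|\bv A\|_1$, so only $k=\lceil 1/\epsilon\rceil$ deflation steps are needed. This matters because the deflation error bound (Lemma~\ref{lem:all_sing_val} in the paper) is $\sigma_1(\bv M)\sqrt{i\cdot\tilde\epsilon}$ after $i$ deflations with inner power-method accuracy $\tilde\epsilon$; to keep this below $\epsilon\max(n,\|\bv A\|_1)$ for all $i\le k$ one needs $\tilde\epsilon=O(\epsilon^2/k)$. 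With $k=1/\epsilon$ this gives $\tilde\epsilon=\Theta(\epsilon^3)$ and hence $O(1/\epsilon^3)$ power iterations per deflation, so the total cost is $k\cdot n\cdot(1/\tilde\epsilon)\cdot\nnz(\bv M)=\widetilde O(n^2/\epsilon^8)$. With your $k=1/\epsilon^2$ you would need $\tilde\epsilon=\Theta(\epsilon^4)$, and the same product becomes $\widetilde O(n^2/\epsilon^{10})$.

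Relatedly, your runtime accounting multiplies only the per-block-multiply cost $\widetilde O(n^2/\epsilon^4)$ by the number of power steps $T=\widetilde\Theta(1/\epsilon^4)$, omitting the outer deflation loop. Deflation cannot be done after a single pass: the power iterations for $\bv z_{i+1}$ must be run on the already-deflated matrix $(\bv I-\bv Z_i\bv Z_i^T)\bv M$, so each deflation step incurs its own full set of $n$-start power iterations. Once you switch to the nuclear-norm count $k=1/\epsilon$ and include the deflation loop, the arithmetic lands exactly on $\widetilde O(n^2/\epsilon^8)$.
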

The runtime of Theorem \ref{thm:approxSVD} is stated assuming we have already constructed a sampling matrix $\bv S$ that samples $\tilde O(\frac{n}{\epsilon^4})$ entries of $\bv A$ and satisfies Theorem \ref{thm: general matrix eps4 bound}. It is well known that, if we let $\bv{S}$ be the adjacency matrix of a Ramanujan expander graph, it can indeed be constructed deterministically in $\Tilde{O}(n/\poly(\epsilon))$ time~\cite{alon2021explicit}. This is lower order as compared to the runtime of $\widetilde{O}\big(\frac{n^2}{\epsilon^8} \big)$ unless $\epsilon < 1/n^c$ for some large enough constant $c$. Further, for a fixed input size $n$ (and in fact, for a range of input sizes), $\bv{S}$ needs to be constructed only once -- see Section \ref{section:notation} for further discussion.

Recall that if we further assume $\bv A$ to be PSD, then the error in Theorem \ref{thm:approxSVD} is bounded by $\epsilon \cdot \max(n,\norm{\bv A}_1) \le \epsilon n$. The sample complexity and runtime also improve by $\poly(1/\epsilon)$ factors due to the tighter universal sparisfier bound for PSD matrices given in Theorem \ref{thm:PSD_Quad}. Also observe that while Theorem \ref{thm:approxSVD} gives additive error approximations to all of $\bv{A}$'s singular values, these approximations are only meaningful for singular values larger than $\epsilon \cdot \max(n,\norm{\bv A}_1)$, of which there are at most $1/\epsilon$. Similar additive error guarantees have been given using randomized algorithms \cite{Bhattacharjee:2021wl,Woodruff:2023ul}. Related bounds have also been studied in work on randomized methods for spectral density estimation \cite{weisse2006kernel, lin2016approximating, braverman2022sublinear}.

We further leverage Theorem \ref{thm:approxSVD} to give the first $o(n^\omega)$ time deterministic algorithm for testing if a bounded entry  matrix is either PSD or has at least one large negative eigenvalue  $\le -\epsilon \cdot \max( n,\norm{\bv A}_1)$. Recent work has focused on  optimal randomized methods for this problem \cite{Bakshi:2020uz,nsw22}. We also show that, under the assumption that $\sigma_1(\bv A) \ge \alpha \cdot \max(n,\norm{\bv A}_1)$ for some $\alpha \in (0,1)$, one can deterministically compute $\widetilde \sigma_1(\bv{A})$ with $|\sigma_1(\bv{A})-\widetilde \sigma_1(\bv{A})| \leq \epsilon \cdot \max(n, \|\bv{A} \|_1)$ in $\widetilde O \left (\frac{n^2  \log(1/\epsilon)}{\poly(\alpha)} \right )$ time. That is, one can compute a highly accurate approximation to the top singular value in roughly linear time in the input matrix size. Again, this is the first $o(n^\omega)$ time deterministic algorithm for this problem, and matches the runtime of the best known randomized methods for high accuracy top singular value computation, up to a $\poly(\log(n,1/\alpha))$ factor.

\subsubsection{Beyond Sparse Approximations}
It is natural ask if it is possible to achieve better than $O(n/\epsilon^2)$ sample complexity for spectral approximation by using an algorithm that does not output a sparse approximation to $\bv{A}$, but can output more general data structures, allowing it to avoid the sparsity lower bound of Theorem \ref{theorem:psd_deterministic_lower}.  Theorems \ref{thm: lower bound nonadaptive} and \ref{thm: lower bound a1} already rule this out for both non-adaptive and adaptive deterministic algorithms for non-PSD matrices.
However, it is known that this \emph{is possible} with randomized algorithms for  PSD matrices. For example, following \cite{Avron:2017vt}, one can apply Theorem 3 of \cite{musco2017recursive} with error parameter $\lambda = \epsilon n$. Observing that all ridge leverage score sampling probabilities are bounded by $1/\lambda$ (e.g., via Lemma 6 of the same paper and the bounded entry assumption), one can show that a Nystr\"{o}m approximation $\bv{\widetilde A}$ based on $\widetilde O(1/\epsilon)$ uniformly sampled columns satisfies $\norm{\bv A - \bv{\widetilde A}}_2 \le \epsilon n$ with high probability. Further $\bv{\widetilde A}$ can be constructed by reading just $\widetilde O(1/\epsilon)$ columns and thus $\widetilde O(n/\epsilon)$ entries of $\bv A$, giving a linear rather than quadratic dependence on $1/\epsilon$ as compared to Theorem \ref{thm:PSD_Quad}. Unfortunately, derandomizing such a column-sampling-based approach seems difficult -- any deterministic algorithm must read entries in $\Omega(n)$ columns of $\bv{A}$, as otherwise it will fail when $\bv{A}$ is entirely supported on the unread columns.

Nevertheless, in the special case where $\bv{A}$ is PSD and has entries  in $\{-1,0,1\}^{n \times n}$, we show that a spectral approximation can be obtained by deterministically reading just $\widetilde O(n/\epsilon)$ entries of $\bv A$. 

\begin{restatable}[Deterministic Spectral Approximation of Binary Magnitude PSD Matrices]{theorem}{binaryapprox}
\label{th:binary query complexity}
Let $\bv A \in \{-1, 0,1\}^{n \times n}$ be PSD. Then for any $\epsilon \in (0,1)$,  there exists a deterministic algorithm which reads $O\left(\frac{n\log n}{\epsilon}\right)$ entries of $\bv A$ and returns PSD $\bv{\widetilde A} \in \{-1,0,1\}^{n \times n}$ such that $\norm{\bv A - \bv{\widetilde A}}_2 \le \epsilon n$.
\end{restatable}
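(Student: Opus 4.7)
The plan is to exploit the rigid structure of PSD matrices with entries in $\{-1,0,1\}$, which forces a disjoint-clique block decomposition of $\bv A$, and then identify the few ``large'' blocks by a Misra-Gries style heavy-hitters pass over the diagonal-nonzero indices.

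First I would establish the structural claim: any PSD $\bv A \in \{-1,0,1\}^{n \times n}$ is, after permutation, block diagonal with blocks of the form $\bv v_g \bv v_g^T$ for $\bv v_g \in \{-1,+1\}^{k_g}$, whose supports partition $D \eqdef \{i : \bv A_{ii}=1\}$. For $i \notin D$ we have $\bv A_{ii}=0$, so by PSDness $\bv e_i \in \ker \bv A$ and the $i$th row and column vanish. For $i,j \in D$ the $2 \times 2$ principal submatrix on rows/columns $\{i,j\}$ has determinant $1 - \bv A_{ij}^2 \ge 0$, so $|\bv A_{ij}| \le 1$; moreover when $|\bv A_{ij}|=1$ this submatrix has rank one, forcing the $i$th and $j$th Gram vectors to be $\pm$-parallel and hence column $i$ of $\bv A$ to equal $\pm$ column $j$ throughout. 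Thus $i \sim j \iff |\bv A_{ij}|=1$ is an equivalence on $D$, inducing the block partition and forcing $\bv A_{ij}=0$ between distinct blocks. The nonzero spectrum of $\bv A$ is exactly the multiset of block sizes $\{k_g\}$, so dropping every block with $k_g \le \epsilon n$ yields a block-diagonal error matrix of spectral norm at most $\epsilon n$.

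The algorithm has three phases. (i) Read the full diagonal to determine $D$ ($n$ queries). (ii) Run Misra-Gries over $D$ with $k=\lceil 2/\epsilon \rceil$ counters, each storing a representative $p_r$ and count $c_r$: for each $i \in D$ in order, query $\bv A_{i,p_r}$ for each active $p_r$ (the structural claim guarantees at most one is nonzero); if some $\bv A_{i,p_r}\neq 0$ increment $c_r$, else if a slot is free add the counter $(i,1)$, else decrement all counters and drop zeros. (iii) For each surviving $p_r$, read the column entries $\bv A_{j,p_r}$ for $j \in D$, set $B_r = \{j \in D : \bv A_{j,p_r}\neq 0\}$ and $(\bv v_r)_j = \bv A_{j,p_r}$, and add $\bv v_r \bv v_r^T$ to $\bv{\widetilde A}$ whenever $|B_r| > \epsilon n$. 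Sign recovery is exact because within a block $\bv A_{j,p_r} = (\bv v_r)_j (\bv v_r)_{p_r}$ and $(\bv v_r)_{p_r}^2 = \bv A_{p_r,p_r}=1$, so $\bv v_r \bv v_r^T$ matches the true rank-one block regardless of the overall sign of $\bv v_r$.

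Correctness follows from the standard Misra-Gries guarantee that every item type with stream frequency exceeding $|D|/(k+1) \le \epsilon n/2$ survives in the final counters; in particular every block of size greater than $\epsilon n$ has its representative captured, and phase (iii) recovers that block exactly. All unkept blocks have size at most $\epsilon n$, so $\|\bv A-\bv{\widetilde A}\|_2 \le \epsilon n$, and $\bv{\widetilde A}$ is PSD and $\{-1,0,1\}$-valued by construction. The query cost is $n$ for the diagonal, at most $k \cdot |D| = O(n/\epsilon)$ for the Misra-Gries pass, and at most $k \cdot |D| = O(n/\epsilon)$ for verification, giving $O(n/\epsilon)$ overall, well inside the stated $O(n \log n/\epsilon)$ bound. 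The main conceptual step is the structural decomposition; once one recognizes that a single entry query $\bv A_{ij}$ deterministically answers the ``are $i,j$ in the same block?'' question, the heavy-hitters argument with $k=O(1/\epsilon)$ counters (to capture the at most $1/\epsilon$ large blocks) is essentially off-the-shelf.
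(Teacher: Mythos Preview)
Your argument is correct and in fact proves a stronger statement than the paper does: you obtain $O(n/\epsilon)$ queries, which matches the paper's own lower bound (Theorem~\ref{thm: lower bound for PSD matrices}) up to constants, whereas the paper's proof gives $O(n\log n/\epsilon)$. The structural decomposition you use is essentially the paper's Lemma~\ref{lem:ternary_partition}, and your observation that a single entry $\bv A_{ij}$ is a perfect ``same block?'' oracle is exactly what is needed; the Misra--Gries invariant that all active representatives lie in distinct blocks (since a new representative is only created after checking against all current ones) makes the $O(1/\epsilon)$ queries per stream element rigorous.

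The paper takes a different route: it fixes an $\epsilon n$-expander $G$ on $[n]$ with $O(n\log n/\epsilon)$ edges, reads $\bv A$ only on those edge positions to form a graph $\bar G$, and shows (Theorem~\ref{thm: epsn psd constructive}) that each large block of $\bv A$ corresponds to a large connected component of $\bar G$; one column read per such component then reconstructs the block. What the paper's approach buys is that the bulk of the queries are \emph{non-adaptive} and oblivious to $\bv A$---the expander edge set is fixed in advance---at the cost of a $\log n$ factor (and the need to construct an $\epsilon n$-expander, which costs $n^{o(1)}$ more via Fact~\ref{cor: eps n expanders}). Your approach is fully adaptive but more elementary, needs no expander theory, and is query-optimal. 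Either approach suffices for the theorem as stated.
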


Using Tur\'{a}n's theorem, we  show that Theorem \ref{th:binary query complexity} is information theoretically optimal up to a $\log n$ factor, even for the potentially much easier problem of eigenvalue approximation:

\begin{restatable}[Deterministic Spectral Approximation of Binary PSD Matrices -- Lower Bound]{theorem}{lowerboundPSD}
\label{thm: lower bound for PSD matrices}
Let $\bv A \in \{0,1\}^{n\times n}$ be PSD. Then for any $\epsilon \in (0,1)$,  any possibly adaptive deterministic algorithm which approximates all eigenvalues of $\bv A$ up to $\epsilon n$ additive error must read $\Omega\left(\frac{n}{\epsilon}\right)$ entries of $\bv A$.
\end{restatable}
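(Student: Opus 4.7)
The plan is an adaptive adversary argument combined with Turán's theorem. The adversary commits in advance to answering every diagonal query $(i,i)$ with $1$ and every off-diagonal query $(i,j)$ with $0$, which is the response rule consistent with $\bv A_0 = \bv I$. Against a fixed deterministic adaptive algorithm $\mathcal{A}$, this determines a transcript, a query set $Q \subseteq [n] \times [n]$ with $|Q| = q$, and an output $\tilde\lambda_1,\ldots,\tilde\lambda_n$, all of which are functions only of $\mathcal{A}$ and the adversary's rule.

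The second step is to construct an alternative PSD $\{0,1\}$ matrix $\bv A_1$ with the same transcript but a much larger top eigenvalue. Let $G$ be the graph on $[n]$ whose edges are the off-diagonal pairs $\{i,j\}$ with neither $(i,j)$ nor $(j,i)$ in $Q$; each query deletes at most one edge, so $|E(G)| \ge \binom{n}{2} - q$. Set $k = \lceil 2\epsilon n \rceil + 2$, so $k-1 > 2\epsilon n$. Turán's theorem guarantees $G$ contains $K_k$ whenever $|E(G)| > (1 - 1/(k-1))\binom{n}{2}$, which rearranges to $q < \binom{n}{2}/(k-1) = \Theta(n/\epsilon)$. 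Assuming $q$ is below this threshold, fix such a clique $C$ and define $\bv A_1 \in \{0,1\}^{n \times n}$ by $(\bv A_1)_{ij} = 1$ if $i,j \in C$ or $i = j$, and $0$ otherwise. After permutation $\bv A_1 = \bv J_{|C|} \oplus \bv I_{n-|C|}$, a direct sum of all-ones blocks, hence PSD with $\{0,1\}$ entries.

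Next I would verify that $\bv A_1$ is consistent with the transcript. Every queried diagonal entry is $1$ in both $\bv A_0$ and $\bv A_1$, matching the adversary's answer. For any queried off-diagonal $(i,j) \in Q$, we cannot have $\{i,j\} \subseteq C$ (otherwise $\{i,j\}$ would be an edge of $G$, contradicting $(i,j) \in Q$), so $(\bv A_1)_{ij} = 0$ matches. By induction on the length of the execution, $\mathcal{A}$ makes the same queries and produces the same output on $\bv A_0$ and $\bv A_1$. But $\lambda_1(\bv A_0) = 1$ while $\lambda_1(\bv A_1) = |C| \ge 2\epsilon n + 2$, so no single $\tilde\lambda_1$ can satisfy $|\tilde\lambda_1 - \lambda_1(\bv A)| \le \epsilon n$ for both matrices. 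This forces $q = \Omega(n/\epsilon)$.

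The main subtlety is the adversary's choice of diagonal response. Answering diagonal queries with $1$ rather than $0$ removes any constraint on whether the hidden clique $C$ may contain vertices that the algorithm queried on the diagonal, which is what allows Turán to be applied to the full graph on $[n]$ and yields a $(2\epsilon n + \Theta(1))$-sized clique and the tight $\Omega(n/\epsilon)$ bound. Had we tried the natural all-zero response, we would have to restrict Turán to the set of diagonal-unqueried vertices, which shrinks to less than $n/2$ precisely when $q$ exceeds $n/2$ — and that is deep inside the regime $q = \Theta(n/\epsilon)$ we are trying to rule out, so the argument would not go through cleanly.
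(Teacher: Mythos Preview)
Your proof is correct and follows essentially the same approach as the paper: fix an adversary consistent with $\bv A_0=\bv I$, apply Tur\'an's theorem to the graph of unqueried off-diagonal pairs to locate a large clique $C$, and build the indistinguishable alternative $\bv A_1=\bv J_{|C|}\oplus\bv I_{n-|C|}$. One small slip: the Tur\'an threshold should be $(1-1/(k-1))\,n^2/2$ rather than $(1-1/(k-1))\binom{n}{2}$, but this only shifts constants and the $\Omega(n/\epsilon)$ conclusion is unaffected.
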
 
An interesting open question is if $\widetilde O(n/\epsilon)$ sample complexity can be achieved for deterministic spectral approximation of \emph{any bounded entry PSD matrix}, with a matrix $\bv{\widetilde A}$ of any form, matching what is known for randomized algorithms.

Finally, we show that the PSD assumption is critical to achieve $o(n/\epsilon^2)$ query complexity, for both randomized and deterministic algorithms. Without this assumption,  $\Omega(n/\epsilon^2)$ queries are required, even to achieve (\ref{eqn:error}) with constant probability for a single input ${\bv A}$. For bounded entry matrices, the randomized element-wise sparsification algorithms of \cite{AM07, DZ11} read just $\widetilde O(n/\epsilon^2)$ entries of $\bv A$, and so our lower bounds are the first to show near-optimality of the number of entries read of these algorithms, which may be of independent interest. 

\begin{restatable}[Lower Bound for Randomized Spectral Approximation]{theorem}{adaptiverandomlowerbound}
\label{theorem:adaptive_random_lower_bound}
Any randomized algorithm that (possibly adaptively) reads entries of a binary matrix $\bv A \in \{0, 1\}^{n \times n}$ to construct a data structure $f:\mathbb{R}^n \times \mathbb{R}^n \rightarrow \mathbb{R}$ that, for $\epsilon \in (0,1)$ satisfies with probability at least $99/100$, 
\begin{gather*}
    |f(\bv x, \bv y) - \bv x^T \bv A \bv y| \leq \epsilon n,
    \text{ for all unit }\bv x, \bv y \in \mathbb{R}^n,
\end{gather*}
 must read $\Omega\left(\frac{n}{\epsilon^2}\right)$ entries of $\bv A$ in the worst case, provided $\epsilon = \Omega\left(\frac{\log n}{\sqrt{n}}\right)$.
\end{restatable}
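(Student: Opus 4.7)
My plan is to prove the $\Omega(n/\epsilon^2)$ randomized lower bound via Yao's minimax principle applied to a ``row-biased'' hard distribution, combined with a Fano-style information-theoretic recovery argument. The intuition is that $f$ implicitly encodes $\Theta(n)$ bits of information about a hidden vector $\bv Y\in\{\pm1\}^n$, while each queried entry reveals only $O(\epsilon^2)$ bits about it.

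\medskip
\noindent\textbf{Hard distribution.} Sample $\bv Y\in\{-1,+1\}^n$ uniformly and, conditional on $\bv Y$, let $\bv A_{ij}\sim\mathrm{Bern}(1/2 + \epsilon' Y_i)$ independently, where $\epsilon' = C\epsilon$ for a sufficiently large constant $C$. (For $\epsilon$ larger than a fixed constant the bound $\Omega(n/\epsilon^2)=\Omega(n)$ is trivial, so we may assume $\epsilon'<1/2$.) By Yao's principle, it suffices to exhibit a deterministic query lower bound under this distribution: any deterministic algorithm whose output $f$ satisfies the guarantee with probability at least $99/100$ over $\bv A\sim\mathcal D$ must read $\Omega(n/\epsilon^2)$ entries.

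\medskip
\noindent\textbf{Recovering $\bv Y$ from $f$.} Fix $\bv y=\bv 1/\sqrt n$ and consider $g(\bv x):=f(\bv x,\bv 1/\sqrt n)$ on the unit sphere. The guarantee implies $|g(\bv x)-\bv x^T\bv v|\le\epsilon n$ for every unit $\bv x$, where $\bv v:=\bv A\bv 1/\sqrt n$. Let $\widehat{\bv v}$ be any (deterministically chosen) vector minimizing $\sup_{\|\bv x\|=1}|g(\bv x)-\bv x^T\widehat{\bv v}|$. Since $\bv v$ itself is feasible, the minimizer also attains sup $\le\epsilon n$, and the triangle inequality together with norm duality yields $\|\widehat{\bv v}-\bv v\|_2\le 2\epsilon n$. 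Under $\mathcal D$ we have $\mathbb{E}[\bv v\mid\bv Y]=\sqrt n(\bv 1/2 + \epsilon'\bv Y)$, and a Chebyshev bound on the variance of the row sums gives $\|\bv v-\mathbb{E}[\bv v\mid\bv Y]\|_2=O(\sqrt n)$ with constant probability. Setting $\widehat{\bv Y}:=(\widehat{\bv v}-\sqrt n\bv 1/2)/(\epsilon'\sqrt n)$ then gives
\[
\|\widehat{\bv Y}-\bv Y\|_2 \;\le\; \frac{2\epsilon n + O(\sqrt n)}{\epsilon'\sqrt n} \;=\; \frac{2\sqrt n}{C} + O\!\left(\frac{1}{C\epsilon}\right) \;=\; O(\sqrt n/C),
\]
where the last step uses $\epsilon=\Omega(\log n/\sqrt n)$. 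Coordinate-wise rounding of $\widehat{\bv Y}$ to $\{-1,+1\}$ recovers all but an $O(1/C^2)$ fraction of the entries of $\bv Y$.

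\medskip
\noindent\textbf{Information-theoretic conclusion.} Choosing $C$ large and combining the algorithm's $99/100$ success probability with the row-sum concentration via a union bound, the expected Hamming error between the rounded estimator and $\bv Y$ is a small constant fraction of $n$, so Fano's inequality implies $I(\bv Y;f)=\Omega(n)$. On the other hand, each query returns $Q_j=\bv A_{i_j,k_j}$ which, conditional on $\bv Y$, is an independent $\mathrm{Bern}(1/2+\epsilon'Y_{i_j})$ bit, so $I(\bv Y;Q_j\mid Q_{<j})\le 1-H_2(1/2+\epsilon')=O(\epsilon'^2)=O(\epsilon^2)$ per query. By the chain rule and data processing, $\Omega(n)\le I(\bv Y;f)\le I(\bv Y;Q_1,\ldots,Q_q)\le q\cdot O(\epsilon^2)$, giving $q=\Omega(n/\epsilon^2)$.

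\medskip
\noindent\textbf{Main obstacle.} The crux is the $\ell_2$ extraction step. Naively, one would define $\widehat v_i:=g(\bv e_i)$, which yields only $\|\widehat{\bv v}-\bv v\|_\infty\le\epsilon n$ and hence $\ell_2$ error $\epsilon n\sqrt n$---too weak by a $\sqrt n$ factor to resolve the $\pm1$ labels of $\bv Y$ from the $\Theta(\epsilon'\sqrt n)$ signal. The resolution is to exploit the fact that $g$ is pointwise close to a \emph{linear} functional, so its best linear fit is itself within $2\epsilon n$ of $\bv v^T(\cdot)$ in sup norm on the sphere, which by duality is exactly $\|\widehat{\bv v}-\bv v\|_2$. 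A secondary subtlety is that the algorithm succeeds only with constant probability and $\bv A\bv 1$ concentrates only with constant probability, both of which are handled by a simple union bound that still leaves the Fano information at $\Omega(n)$.
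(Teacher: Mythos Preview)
Your proposal is correct and proceeds by a genuinely different (and somewhat slicker) reduction than the paper's.

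Both arguments use the same hard distribution (rows independently biased by a hidden $\{\pm 1\}^n$ label) and the same information budget: each queried bit reveals $O(\epsilon^2)$ information about the label vector, while recovering most of the labels requires $\Omega(n)$ bits. The difference is in how the spectral-approximation guarantee is converted into a label-recovery procedure. The paper works combinatorially: it maximizes $f(\bv 1,\bv 1_S)$ over $|S|=n/2$ and shows, via concentration of row sums, that the maximizer $\bar S$ can miss at most $O(n)\cdot\epsilon_s/\epsilon_d$ biased rows; it then invokes a prior single-detection lower bound of \cite{srinivas2022memory} plus a direct-sum / permutation argument. You instead observe that the guarantee on $g(\bv x)=f(\bv x,\bv 1/\sqrt n)$ is exactly sup-norm closeness to a linear functional on the unit sphere, which by duality is $\ell_2$ closeness of coefficient vectors; this yields $\|\widehat{\bv v}-\bv A\bv 1/\sqrt n\|_2\le 2\epsilon n$, from which the labels drop out after subtracting the mean and scaling. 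The information step is then a self-contained Fano plus chain-rule calculation rather than a black-box citation.

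What each buys: the paper's route makes the ``$n$ independent coin problems'' structure explicit and plugs into existing multiparty results, which may generalize more readily to other functionals. Your route is more direct, avoids the combinatorial maximization over subsets, and makes transparent why the $\epsilon n$ \emph{operator-norm} guarantee (as opposed to $\ell_\infty$ entrywise) is exactly what is needed: it gives $\ell_2$ control on $\bv A\bv 1$, which is the right scale to separate the $\Theta(\epsilon\sqrt n)$ per-coordinate signal from the $O(1)$ per-coordinate noise. Two minor points to tighten: state explicitly the Hamming-ball Fano bound $H(\bv Y\mid \widehat{\bv Y})\le 1+P(\text{fail})\cdot n+nH_2(\delta)$ you are using, and note that the minimizer $\widehat{\bv v}$ need not exist but any $(\epsilon n+1)$-approximate minimizer suffices and is a deterministic functional of $f$, so data processing applies.
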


\subsubsection{Relation to Spectral Graph Sparsification}
We remark that while our work focuses on general bounded entry symmetric (PSD) matrices, when $\bv{A}$ is a PSD graph Laplacian matrix, it is possible to construct a sparsifier $\bv{\widetilde A}$ with just $\widetilde O(n/\epsilon^2)$ entries, that achieves a \emph{relative error} spectral approximation guarantee that can be much stronger than the additive error guarantee of \eqref{eqn:error} \cite{Batson:2013va}. In particular, one can achieve $(1-\epsilon) \bv{x}^T \bv A \bv{x} \le \bv{x}^T \bv{\widetilde A} \bv{x} \le (1+\epsilon) \bv{x}^T \bv A \bv{x}$ for all $\bv x \in \R^n$. To achieve such a guarantee however, it is not hard to see that the set of entries sampled in $\bv{\widetilde A}$ must depend on $\bv A$, and cannot be universal. Fast randomized algorithms for constructing $\bv{\widetilde A}$ have been studied extensively \cite{ST04,Spielman:2008wm,Batson:2013va}. Recent work has also made progress towards developing fast deterministic algorithms \cite{Chuzhoy:2020wq}.

\subsection{Road Map}

We introduce notation in Section \ref{section:notation}.  In Section \ref{sec:upper}, we prove our main universal sparsification results for PSD and non-PSD matrices (Theorems \ref{thm:PSD_Quad} and \ref{thm: general matrix eps4 bound}). In Section \ref{sec:sparsityLower} we prove  nearly matching lower bounds (Theorems \ref{theorem:psd_deterministic_lower} and \ref{thm: lower bound nonadaptive}). We also prove Theorems \ref{thm: lower bound a1} and \ref{theorem:adaptive_random_lower_bound} which give lower bounds for general deterministic (possibly adaptive) spectral approximation algorithms, and general randomized algorithms, respectively.
 In Section \ref{section:binary_magnitude}, we show how to give tighter deterministic spectral approximation results for PSD matrices with entries  in  $\{-1,0,1\}$. Finally, in Section \ref{sec:applications}, we discuss applications of our results to fast deterministic algorithms for singular value and vector approximation.

\section{Notation and Preliminaries}\label{section:notation}

We start by defining notation  used throughout. For any integer $n$, let $[n]$ denote the set  $\{1,2,\ldots, n\}$. 

\medskip

\noindent\textbf{Matrices and Vectors.} Matrices are represented with bold uppercase literals, e.g., $\bv A$. Vectors are represented with bold lowercase literals, e.g., $\bv x$. $\bv 1$ and $\bv 0$ denote all ones (resp. all zeros) matrices or vectors. The identity matrix is denoted by $\bv I$. The size of these matrices vary based on their applications.
For a vector $\bv{x}$, $\bv{x}(j)$ denotes its $j^{th}$ entry. For a matrix $\bv A$, $\bv A_{ij}$ denotes the entry in the \ith row and $j$\textsuperscript{th} column. For a vector $\bv{x}$ (or matrix $\bv{A}$), $\bv{x}^T$ (resp. $\bv{A}^T)$ denotes its transpose. For two matrices $\bv{A},\bv{B}$ of the same size, $\bv{A} \circ \bv{B}$ denotes the entrywise (Hadamard) product.

\medskip

\noindent\textbf{Matrix Norms and Properties.} For a vector $\bv x$, $\|\bv x\|_2$ denotes its Euclidean norm and $\norm{\bv x}_1 = \sum_{i=1}^n |\bv x(i)|$ denotes its $\ell_1$ norm. We denote the eigenvalues of a symmetric matrix $\bv A$ as $\lambda_1(\bv A) \geq \lambda_2(\bv A) \geq \ldots \geq \lambda_n(\bv A)$ in decreasing order. A symmetric matrix is positive semidefinite (PSD) if $\lambda_i \geq 0$ for all $i \in [n]$. The singular values of a matrix $\bv A$ are denoted as $\sigma_{1}(\bv A) \geq \sigma_2(\bv A) \geq \ldots \geq \sigma_n(\bv A) \ge 0$ in decreasing order. We let $\|\bv A\|_2 = \max_x\frac{\|\bv A\bv x\|_2}{\|\bv x\|_2} = \sigma_1(\bv A)$ denote the spectral norm, $\|\bv A\|_{\infty}$ denote the largest magnitude of an entry, $\|\bv A\|_F = (\sum_{i,j} \bv A_{ij}^2)^{1/2}$ denote the Frobenius norm, and $\|\bv A\|_1 = \sum_{i=1}^n \sigma_{i}(\bv A)$ denote the Schatten-1 norm (also called the trace norm or nuclear norm).

\medskip

\noindent\textbf{Expander Graphs.} Our universal sparsifier constructions are based on Ramanujan expander graphs \cite{lubotzky1988ramanujan,margulis1973explicit,alon1986eigenvalues}, defined below.
\begin{definition}[Ramanujan Expander Graphs \cite{lubotzky1988ramanujan}]
\label{def:ramanujan}
Let $G$ be a connected $d$-regular, unweighted and undirected graph on $n$ vertices. Let $\bv G \in \{0,1\}^{n \times n}$ be the adjacency matrix corresponding to $G$ and $\lambda_i$ be its $i^{th}$ eigenvalue, such that $\lambda_1 \geq \lambda_2 \geq \ldots \geq \lambda_n$. Then $G$ is called a Ramanujan graph if:
\begin{align*}
    |\lambda_i| &\leq 2\sqrt{d-1}, ~ \text{for all} ~ i >1.
\end{align*}
Equivalently, letting $\bv{1}$ be the $n \times n$ all ones matrix, $\norm{\bv{1} - \frac{n}{d} \bv{G}}_2 \le \frac{n}{d} \cdot 2\sqrt{d-1}$.
\end{definition}

\noindent \textbf{Efficient Construction of Ramanujan graphs.} Significant work has studied efficient constructions for Ramanujan Graphs \cite{margulis1973explicit,lubotzky1988ramanujan,Morgenstern:1994vf,Cohen:2016td}, and nearly linear time constructions, called \textit{strongly explicit} constructions~\cite{alon2021explicit}, have been proposed. E.g., by Proposition 1.1 of~\cite{alon2021explicit} we can reconstruct for any $n$ and $d$, a graph on $n(1+o(1))$ vertices with second eigenvalue at most $(2+o(1)) \sqrt{d}$ in $\Tilde{O}(nd + \poly(d))$ time. 
Additionally, in our applications, the expander just needs to be constructed once and can then be used for any input of size $n$. In fact, a single expander can be used for a range of input sizes, by the argument below. 

Though the size of the expander above is $(1+o(1))n$ instead of exactly $n$, and its expansion does not exactly hit the tight Ramanujan bound of $2 \sqrt{d-1}$, we can let our input matrix $\bv{A}$ be the top $n \times n$ principal submatrix of a slightly larger $n(1+o(1)) \times n(1+o(1))$ matrix $\bv{A}'$ that is zero everywhere else. Then, the top $n \times n$ principal submatrix of a spectral approximation to $\bv{A}'$ is a spectral approximation to $\bv{A}$. Thus, obtaining a spectral approximation to $\bv{A}'$ via a Ramanujan sparsifier on $n(1+o(1))$ vertices suffices. Similarly, in our applications, we can adjust $d = 1/\poly(\epsilon)$ by at most a constant factor to account for the $(2+o(1)) \sqrt{d}$ bound on the second eigenvalue, rather than the tight Ramanujan bound of $2\sqrt{d-1}$.


\section{Universal Sparsifier Upper Bounds}\label{sec:upper}

We now prove our main results on universal sparsifiers for PSD matrices (Theorem \ref{thm:PSD_Quad}) and general symmetric matrices (Theorem \ref{thm: general matrix eps4 bound}). Both theorems follow from general reductions which show that any sampling matrix $\bv{S}$ that sparsifies the all ones matrix to sufficient accuracy (i.e., is a sufficiently good spectral expander) yields a universal sparsifier. We prove the reduction for the PSD case in Section \ref{sec:psdNewProof}, and then extend it to the non-PSD case in Section \ref{sec:nonPsdNewProof}.

\subsection{Universal Sparsifiers for PSD Matrices}\label{sec:psdNewProof}

In the PSD case, we prove the following:

\thmReduction*
Theorem \ref{thm:PSD_Quad} follows directly from Theorem \ref{thm:reduction} by letting $\bv{S}$ be the scaled adjacency matrix of a Ramanujan expander graph with degree $d = O(1/\epsilon^2)$ (Definition \ref{def:ramanujan}).

\begin{proof}[Proof of Theorem \ref{thm:reduction}]
To prove the theorem it suffices to show that for any $\bv x \in \R^{n}$ with $\norm{\bv x}_2 = 1$, $|\bv{x}^T \bv{A} \bv{x} - \bv{x}^T (\bv{A} \circ \bv{S}) \bv{x}| \le \epsilon n$.
Let $\bv{v}_1,\ldots,\bv{v}_n$ and $\lambda_1,\ldots,\lambda_n$ be the eigenvectors and eigenvalues of $\bv{A}$ so that $\bv{A} = \sum_{i=1}^n \lambda_i \bv{v}_i \bv{v}_i^T$. Then we can expand out this error as:
\begin{align*}
|\bv{x}^T \bv{A} \bv{x} - \bv{x}^T (\bv{A} \circ \bv{S}) \bv{x}| &= \left | \sum_{i=1}^n \lambda_i \cdot \bv{x}^T \bv{v}_i \bv{v}_i^T \bv{x} - \sum_{i=1}^n \lambda_i \bv{x}^T ( \bv{v}_i \bv{v}_i^T \circ \bv S) \bv{x}) \right |\\
&= \left | \sum_{i=1}^n \lambda_i \bv{x}^T [\bv{v}_i \bv{v}_i^T \circ (\bv{1} - \bv{S})] \bv{x} \right  |,
\end{align*}
where we use that for any matrix $\bv{M}$, $\bv{M} \circ \bv 1 = \bv M$.
Now observe that if we let $\bv{D}_i \in \R^{n \times n}$ be a diagonal matrix with the entries of $\bv{v}_i$ on its diagonal, then we can write $\bv{v}_i \bv{v}_i^T \circ (\bv{1} - \bv{S}) = \bv{D}_i (\bv{1} - \bv{S}) \bv{D}_i$. Plugging this back in we have:
\begin{align}
|\bv{x}^T \bv{A} \bv{x} - \bv{x}^T (\bv{A} \circ \bv{S}) \bv{x}| &= \left |\sum_{i=1}^n \lambda_i \bv{x}^T \bv{D}_i (\bv{1}-\bv{S})\bv{D}_i \bv{x} \right |\nonumber\\
&\le \sum_{i=1}^n \lambda_i \norm{\bv 1 - \bv S}_2 \cdot \bv{x}^T \bv{D}_i^2 \bv{x}\nonumber \\
&\le \epsilon n \cdot \sum_{i=1}^n \lambda_i \bv{x}^T \bv{D}_i^2 \bv{x}.\label{eq:dsquare}
\end{align}
In the second line we use that $\bv{A}$ is PSD and thus $\lambda_i$ is non-negative for all $i$. We also use that $|\bv{x}^T \bv{D}_i (\bv{1}-\bv{S})\bv{D}_i \bv{x}| \le \norm{\bv{1} - \bv S}_2 \cdot \norm{\bv{D}_i \bv x}_2^2 = \norm{\bv{1} - \bv S}_2  \cdot \bv{x}^T \bv D_i^2 \bv{x}$. In the third line we bound $\norm{\bv 1 - \bv S}_2 \le \epsilon n$ using the assumption of the theorem statement.

Finally, writing $\bv{x}^T \bv{D}_i^2 \bv{x} = \sum_{j=1}^n \bv{x}(j)^2 \bv{v}_i(j)^2$ and plugging back into \eqref{eq:dsquare}, we have:
\begin{align*}
|\bv{x}^T \bv{A} \bv{x} - \bv{x}^T (\bv{A} \circ \bv{S}) \bv{x}| &\le \epsilon n \cdot \sum_{i=1}^n \sum_{j=1}^n \lambda_i  \bv{x}(j)^2 \bv{v}_i(j)^2\\
&= \epsilon n \cdot \sum_{j=1}^n \bv{x}(j)^2 \sum_{i=1}^n \lambda_i \bv{v}_i(j)^2\\
&= \epsilon n \cdot \sum_{j=1}^n \bv{x}(j)^2 \bv{A}_{jj}\\ &\le \epsilon n,
\end{align*}
where in the last step we use that $\bv{A}_{jj} \le 1$  by our bounded entry assumption, and that $\bv{x}$ is a unit vector. This completes the theorem.
\end{proof}

\begin{remark}\label{rem:diag}
Note that we can state a potentially stronger bound for Theorem \ref{thm:reduction} by observing that in the second to last step, $|\bv{x}^T \bv{A} \bv{x} - \bv{x}^T (\bv{A} \circ \bv{S}) \bv{x}| \le \epsilon n \cdot \sum_{j=1}^n \bv x(j)^2 \bv{A}_{jj} = \epsilon n \cdot \bv x^T \bv D \bv x$, where $\bv{D}$ is a diagonal matrix containing the diagonal elements of $\bv A$. That is, letting $\bv M \preceq \bv N$ denote that $\bv N - \bv M$ is PSD, we have the following spectral approximation bound:
$-\epsilon n \cdot \bv D \preceq \bv A - (\bv A \circ \bv S) \preceq \epsilon n \cdot \bv D.$
\end{remark}

\subsection{Universal Sparsifiers for Non-PSD Matrices}\label{sec:nonPsdNewProof}

We next extend the above approach to give a similar reduction for universal sparsification of general symmetric matrices.
\reductionNonPSD*
Observe that as compared to the PSD case (Theorem \ref{thm:reduction}), here we require that $\bv{S}$ gives a stronger approximation to the all ones matrix. Theorem \ref{thm: general matrix eps4 bound} follows directly by letting $\bv{S}$ be the scaled adjacency matrix of a Ramanujan expander graph with degree $d = O\left (\frac{\log^4(1/\epsilon)}{\epsilon^4}\right )$ (Definition \ref{def:ramanujan}).

\begin{proof}[Proof of Theorem \ref{thm:reductionNonPSD}]
To prove the theorem, it suffices to show that for any $\bv x \in \R^{n}$ with $\norm{\bv x}_2 = 1$, $|\bv{x}^T \bv{A} \bv{x} - \bv{x}^T (\bv{A} \circ \bv{S}) \bv{x}| \le \epsilon \cdot \max(n, \norm{\bv A}_1)$. We will split the entries of $\bv{x}$ into level sets according to their magnitude. In particular, let $\bepsilon = \frac{\epsilon}{\log_2(1/\epsilon)}$ and let $\ell = \log_2(1/\bepsilon)$. Then we can write $\bv{x} = \sum_{i=0}^{\ell+1}\bv{x}_i$, such that for any $t \in [n]$:
\begin{align}\label{eq:level}
    \bv x_i(t) =
    \begin{cases}
      \bv x(t) & \text{if }i \in \{1,2, \ldots, \ell\}\text{ and } |\bv x(t)|\in\left(\frac{2^{i-1}}{\sqrt{n}}, \frac{2^i}{\sqrt{n}}\right]\\
      \bv x(t) & \text{if }i=0\text{ and }|\bv x(t)|\in\left[0, \frac{1}{\sqrt{n}}\right] \\
        \bv x(t) & \text{if }i=\ell+1\text{ and }|\bv x(t)|\in\left ( \frac{1}{\bepsilon \sqrt{n}}, 1\right] \\
      0 & \text{otherwise}.
    \end{cases}
\end{align}
Via triangle inequality, we have:
\begin{align}\label{eq:firstSplit}
 \left|\bv x^T\bv A\bv x -  \bv x^T(\bv{A} \circ \bv{S})\bv x \right| &\le \sum_{i=0}^{\ell+1}  \left | \bv x_i^T\bv A\bv x -  \bv x_i^T(\bv{A} \circ \bv{S})\bv x \right |.
  \end{align}
  We will bound each term in \eqref{eq:firstSplit} by $O(\bepsilon\cdot \max(n,\norm{\bv A}_1)) = O \left ( \frac{\epsilon \cdot \max(n,\norm{\bv A}_1)}{\log(1/\epsilon)}\right ) = O \left ( \frac{\epsilon \cdot \max(n,\norm{\bv A}_1)}{\ell+2}\right )$. Summing over all $\ell+2$ terms we achieve a final  bound of $O(\epsilon \cdot \max(n,\norm{\bv A}_1))$. The theorem then follows by adjusting $\epsilon$ by a constant factor.
  
 Fixing $i \in \{0,\ldots,\ell\}$, let $\bv{x}_L(t) = \bv{x}(t)$ if $| \bv{x}(t)| \le \frac{1}{2^i \cdot \bepsilon \sqrt{n}}$ and $\bv{x}_L(t) = 0$ otherwise. Let $\bv{x}_H(t)  = \bv{x}(t)$ if $| \bv{x}(t)| > \frac{1}{2^i \cdot \bepsilon \sqrt{n}}$ and $\bv x_H(t) = 0$ otherwise.
 In the edge case, for $i = \ell+1$, let $\bv{x}_H = \bv{x}$ and $\bv{x}_L = \bv{0}$.
 Writing $\bv{x} = \bv{x}_H + \bv{x}_L$ and applying triangle inequality, we can bound:
 \begin{align}\label{eq:highLow}
 \left | \bv x_i^T\bv A\bv x -  \bv x_i^T(\bv{A} \circ \bv{S})\bv x \right | \le  \left | \bv x_i^T\bv A\bv x_L -  \bv x_i^T(\bv{A} \circ \bv{S})\bv x_L \right |+ \left | \bv x_i^T\bv A\bv x_H -  \bv x_i^T(\bv{A} \circ \bv{S})\bv x_H \right |.
 \end{align}
 In the following, we separately bound the two terms in \eqref{eq:highLow}. Roughly, since $\bv{x}_L$ has relatively small entries and thus is relatively well spread, we will be able to show, using a similar approach to Theorem \ref{thm:reduction}, that the first term is small since sparsification with $\bv S$ approximately preserves $\bv{x}_i^T\bv A \bv{x}_L$. On the otherhand, since $\bv{x}_H$ has relatively large entries and thus is relatively sparse, we can show that the second term is small since $\bv{x}_i^T\bv A \bv{x}_H$ is small (and $\bv{x}_i^T(\bv A \circ \bv S) \bv x_H$ cannot be much larger).
 
 \medskip
 
 \noindent \textbf{Term 1: Well-Spread Vectors}. We start by bounding $ \left | \bv x_i^T\bv A\bv x_L-  \bv x_i^T(\bv{A} \circ \bv{S})\bv x_L \right |$. Let $\bv{v}_1,\ldots,\bv{v}_n$ and $\lambda_1,\ldots,\lambda_n$ be the eigenvectors and eigenvalues of $\bv{A}$ so that $\bv{A} = \sum_{k=1}^n \lambda_k \bv{v}_k \bv{v}_k^T$. Then we  write:
\begin{align*}
|\bv{x}_i^T \bv{A} \bv{x}_L - \bv{x}_i^T (\bv{A} \circ \bv{S}) \bv{x}_L| &= \left | \sum_{k=1}^n \lambda_k \cdot \bv{x}_i^T \bv{v}_k\bv{v}_k^T \bv{x}_L - \sum_{k=1}^n \lambda_k \bv{x}_i^T ( \bv{v}_k \bv{v}_k^T \circ \bv S) \bv{x}_L) \right |\\
&\le  \sum_{k=1}^n |\lambda_k| \cdot \left |\bv{x}_i^T [\bv{v}_k \bv{v}_k^T \circ (\bv{1} - \bv{S})] \bv{x}_L \right  |.
\end{align*}
As in the proof of Theorem \ref{thm:reduction}, if we let $\bv{D}_k \in \R^{n \times n}$ be a diagonal matrix with the entries of $\bv{v}_k$ on its diagonal, then we have $\bv{v}_k \bv{v}_k^T \circ (\bv{1} - \bv{S}) = \bv{D}_k (\bv{1} - \bv{S}) \bv{D}_k$. Further,  $\bv{x}_i^T\bv{D}_k (\bv{1} - \bv{S}) \bv{D}_k \bv{x}_L = \bv{v}_k^T \bv{D}_i (\bv{1} - \bv{S}) \bv{D}_L \bv{v}_k$, where $\bv{D}_i,\bv{D}_L$ are diagonal with the entries of $\bv{x}_i$ and $\bv{x}_L$ on their diagonals.  Plugging this back in we have:
\begin{align}
|\bv{x}_i^T \bv{A} \bv{x}_L - \bv{x}_i^T (\bv{A} \circ \bv{S}) \bv{x}_L| &\le \sum_{k=1}^n| \lambda_k| \cdot \left |\bv{v}_k^T \bv{D}_i (\bv{1}-\bv{S})\bv{D}_L \bv{v}_k^T \right |\nonumber\\
&\le \norm{\bv 1 - \bv S}_2 \cdot \sum_{k=1}^n |\lambda_k| \cdot  \| \bv{v}_k^T\|_2 \cdot \| \bv{D}_i \|_2 \cdot \| \bv{D}_L \|_2 \cdot \| \bv{v}_k \|_2 \nonumber\\
&\le \norm{\bv 1 - \bv S}_2 \cdot \sum_{k=1}^n |\lambda_k| \cdot \| \bv{D}_i \|_2 \cdot \| \bv{D}_L \|_2.\label{eq:line3}
\end{align}
Finally, observe that by definition, for any $i \in \{0,\ldots,\ell\}$, $\| \bv{D}_i \|_2 =\max_{t \in [n]}|\bv{x}_i(t)|\leq\frac{2^i}{\sqrt{n}}$ and $\|\bv{D}_L \|_2=\max_{t \in [n]}|\bv{x}_L(t)| \leq \frac{1}{2^i\bepsilon \sqrt{n}}$. Thus, $\| \bv{D}_i \|_2 \cdot \| \bv{D}_L \|_2 \leq \frac{1}{\bepsilon n}$. For $i = \ell+1$, $\bv{x}_L = \bv 0$ by definition and so, the bound $\| \bv{D}_i \|_2 \cdot \| \bv{D}_L \|_2 \leq \frac{1}{\bepsilon n}$ holds vacuously.

Also, by the assumption of the theorem, $\norm{\bv{1}-\bv{S}}_2 \le \frac{\epsilon^2 n}{c \log^2(1/\epsilon)} \le \bepsilon^2 n$. Plugging back into \eqref{eq:line3}, 
\begin{align}
|\bv{x}_i^T \bv{A} \bv{x}_L - \bv{x}_i^T (\bv{A} \circ \bv{S}) \bv{x}_L| \le \bepsilon^2 n \cdot \frac{1}{\bepsilon n} \cdot \norm{\bv{A}}_1 \le \bepsilon \cdot \norm{\bv{A}}_1,\label{eq:spreadBound}
\end{align}
as required.

\medskip

\noindent \textbf{Term 2: Sparse Vectors.}  We next bound  the second term of \eqref{eq:highLow}: $ \left | \bv x_i^T\bv A\bv x_H -  \bv x_i^T(\bv{A} \circ \bv{S})\bv x_H \right |$. We 
 write $\bv{x}_i = \bv{x}_{i,P} + \bv{x}_{i,N}$, where $\bv{x}_{i,P}$ and $\bv{x}_{i,N}$ contain its positive and non-positive entries respectively. Similarly, write $\bv{x}_H = \bv{x}_{H,P} + \bv{x}_{H,N}$. We can then bound via triangle inequality:
\begin{align}
|\bv x_i^T \bv A \bv x_H - \bv x_i^T (\bv A \circ \bv S) \bv x_H| &\le |\bv x_i^T \bv A \bv x_H| + |\bv x_i^T (\bv A \circ \bv S) \bv x_H|\nonumber\\
&\le |\bv x_i^T \bv A \bv x_H| +|\bv x_{i,P}^T( \bv A \circ \bv S) \bv x_{H,P}|  + |\bv x_{i,P}^T ( \bv A \circ \bv S) \bv x_{H,N}| \nonumber\\ &\hspace{2em}+ |\bv x_{i,N}^T ( \bv A \circ \bv S) \bv x_{H,P}|  + |\bv x_{i,N}^T ( \bv A \circ \bv S) \bv x_{H,N}|.\label{eq:pnsplit}
\end{align}
We will bound each term in \eqref{eq:pnsplit} by $O(\bepsilon n)$, giving that overall  $|\bv x_i^T \bv A \bv x_H - \bv x_i^T (\bv A \circ \bv S) \bv x_H| = O(\bepsilon n)$. 
We first observe that $$|\bv{x}_i^T \bv A \bv{x}_H| \le \norm{\bv x_i}_1 \cdot \norm{\bv{x}_H}_1 \cdot \norm{\bv{A}}_\infty.$$ 
By assumption $\norm{\bv A}_\infty \le 1$. Further, for $i \in \{1,\ldots,\ell+1\}$, since $|\bv{x}_i(t)| \ge \frac{2^{i-1}}{\sqrt{n}}$ for all $t$ and since $\norm{\bv{x}_i}_2 \le 1$, $\bv{x}_i$ has at most $\frac{n}{2^{2i-2}}$ non-zero entries. Thus,  $\norm{\bv{x}_i}_1 \le \frac{ \sqrt{n}}{2^{i-1}}$. For $i = 0$, this bound holds trivially since $\norm{\bv{x}_i}_1 \le \sqrt{n} \cdot \norm{\bv x_i}_2 \le \sqrt{n} \le \frac{\sqrt{n}}{2^{i-1}}$.  

Similarly, for $i \in \{0,\ldots \ell\}$, since $\norm{\bv{x}_H}_2 \le 1$ and since $|\bv{x}_H(t)| \ge \frac{1}{2^i \bepsilon \sqrt{n}}$ by definition, $\bv{x}_H$ has at most $2^{2i} \cdot \bepsilon^2 n$ non-zero entries. So $\norm{\bv{x}_H}_1 \le 2^i \bepsilon \cdot \sqrt{n}$. For $i = \ell+1 = \log_2(1/\bepsilon)+1$ this bound holds trivially since $\norm{\bv x_H}_1 \le \sqrt{n}\cdot \norm{\bv x_H}_2 \le 2^i \bar \epsilon \cdot \sqrt{n}$. Putting these all together, we have
\begin{align}\label{eq:xH}|\bv{x}_i^T \bv A \bv{x}_H| \le \norm{\bv x_i}_1 \cdot \norm{\bv{x}_H}_1\cdot  \norm{\bv{A}}_\infty \le  \frac{ \sqrt{n}}{2^{i-1}} \cdot 2^i \bepsilon \cdot \sqrt{n} = 2 \bepsilon n.
\end{align}
We next bound $|\bv x_{i,P}^T( \bv A \circ \bv S) \bv x_{H,P}| $. Since $\bv{x}_{i,P}$ and $\bv{x}_{H,P}$ are both all positive vectors, and since by assumption $\norm{\bv{A}}_\infty \le 1$, 
\begin{align}
|\bv x_{i,P}^T( \bv A \circ \bv S) \bv x_{H,P}| &\le |\bv x_{i,P}^T \bv S \bv x_{H,P}|\nonumber\\
&\le |\bv x_{i,P}^T \bv 1 \bv x_{H,P}| + |\bv x_{i,P}^T (\bv 1 - \bv S) \bv x_{H,P}|\nonumber\\
&\le \norm{\bv x_{i,P}}_1 \cdot \norm{\bv x_{H,P}}_1 \cdot \norm{\bv{1}}_\infty + \norm{\bv x_{i,P}}_2  \cdot \norm{\bv x_{H,P}}_2 \cdot \norm{\bv{1}-\bv{S}}_2.\label{eq:secondtoLast}
\end{align}
Following the same argument used to show \eqref{eq:xH}, $\norm{\bv x_{i,P}}_1 \cdot \norm{\bv x_{H,P}}_1 \cdot \norm{\bv{1}}_\infty \le 2 \bepsilon n$. Further, since by the assumption of the theorem, $\norm{\bv 1 - \bv S}_2 \le \frac{\epsilon^2 n}{c \log^2(1/\epsilon)}\le  \bepsilon^2 n \le \bepsilon n$, and since $\bv{x}_{i,P}$ and $\bv{x}_{H,P}$ both are at most unit norm, $\norm{\bv x_{i,P}}_2 \cdot \norm{\bv x_{H,P}}_2 \cdot \norm{\bv{1}-\bv{S}}_2 \le \bepsilon n$. Thus, plugging back into \eqref{eq:secondtoLast}, we  have $|\bv x_{i,P}^T( \bv A \circ \bv S) \bv x_{H,P}| \le 3 \bepsilon n$. Identical bounds will hold for the remaining three terms of \eqref{eq:pnsplit}, since for each, the two vectors in the quadratic form have entries that either always match or never match on sign. Plugging these bounds  and \eqref{eq:xH} back into \eqref{eq:pnsplit}, we obtain
\begin{align}\label{eq:sparseBound}
|\bv x_i^T \bv A \bv x_H - \bv x_i^T (\bv A \circ \bv S) \bv x_H| \le 2 \bepsilon n + 4 \cdot 3 \bepsilon n \le 14 \bepsilon n,
\end{align}
which completes the required bound in the sparse case.

\medskip

\noindent\textbf{Concluding the Proof.} We finally plug our bounds for the sparse case \eqref{eq:sparseBound} and the well-spread case \eqref{eq:spreadBound} into \eqref{eq:highLow} to obtain:
\begin{align*}
 \left | \bv x_i^T\bv A\bv x -  \bv x_i^T(\bv{A} \circ \bv{S})\bv x \right | \le 14 \bepsilon n + \bepsilon \norm{\bv A}_1 \le 15 \bepsilon \cdot \max(n,\norm{\bv A}_1).
\end{align*}
Plugging this bound into \eqref{eq:firstSplit} gives that $|\bv{x}^T \bv{A} \bv{x} - \bv{x}^T (\bv A \circ \bv S)\bv x| = O(\epsilon \cdot  \max(n,\norm{\bv A}_1))$, which completes the theorem after adjusting $\epsilon$ by a constant factor.
\end{proof}


\section{Spectral Approximation Lower Bounds}\label{sec:sparsityLower}

We now show that our universal sparsifier upper bounds for both PSD matrices (Theorem \ref{thm:PSD_Quad}) and non-PSD matrices (Theorem \ref{thm: general matrix eps4 bound}) are nearly tight. We also give $\Omega(n/\epsilon^2)$ query lower bounds against general deterministic (possibly adaptive) spectral approximation algorithms (Theorem \ref{thm: lower bound a1}) and general randomized spectral approximation algorithms (Theorem \ref{theorem:adaptive_random_lower_bound}).

\subsection{Sparsity Lower Bound for PSD Matrices}

We first prove that every matrix which is an $\epsilon n$ spectral approximation to the all-ones matrix must have $\Omega(\frac{n}{\epsilon^2})$ non-zero entries. This shows that Theorem \ref{thm:PSD_Quad} is optimal up to constant factors, even for algorithms that sparsify just a single bounded entry PSD matrix. The idea of the lower bound is simple: if a matrix $\bv{S}$ spectrally approximates the all-ones matrix, its entries must sum to  $\Omega(n^2)$. Thus, if $\bv S$ has just $s$ non-zero entries, it must have Frobenius norm at least $\Omega(n^2/\sqrt{s})$. Unless $s = \Omega(n/\epsilon^2)$, this Frobenius norm is too large for $\bv S$ to be a $\epsilon n$-spectral approximation of the all ones matrix (which has $\norm{\bv 1}_F = n$.)

\alloneslower*

\begin{proof}
    If we let $\bv{x} \in \R^n$ be the all ones vector, then since $\norm{\bv{x}}_2^2 = n$, $\|\bv 1 - \bv S\|_2 \leq \epsilon n$ implies that $|\bv x ^T \bv S \bv x - \bv{x}^T \one \bv{x}| =  |\bv x^T \bv S \bv x - n^2| \leq \epsilon n^2$. This in turn implies:
    \begin{gather*}
        \sum_{i,j \in [n]} |\bv S_{ij}|
        \geq \sum_{i,j \in [n]} \bv{S}_{ij}
        \geq n^2 - \epsilon n^2 \geq \frac{n^2}{2},
    \end{gather*}
    where the last inequality follows from the assumption that $\epsilon \leq \frac{1}{2}$. If $\bv{S}$ has $s$ non-zero entries, since $\sum_{i,j \in [n]} |\bv S_{ij}|$ is the $\ell_1$-norm of the entries, and $\|\bv S\|_F$ is the $\ell_2$-norm of the entries, we conclude by $\ell_1-\ell_2$ norm equivalence that $\|\bv S\|_F \geq \frac{1}{\sqrt{s}}\sum_{i,j \in [n]} |\bv S_{ij}| \geq \frac{n^2}{2\sqrt{s}}$. Therefore, by the triangle inequality,
    \begin{align*}
    \norm{\bv{1}-\bv{S}}_F \ge \norm{\bv{S}}_F - \norm{\bv{1}}_F = \frac{n^2}{2\sqrt{s}}-n \ge \frac{n^2}{4\sqrt{s}},
    \end{align*}
    as long as $s \le \frac{n^2}{16}$ so that $\frac{n^2}{2\sqrt{s}} \ge 2n$. Now, using that $\norm{\bv{A}}_F^2 = \sum_{i=1}^n \sigma_i^2(\bv A) \le n \cdot \norm{\bv{A}}_2^2$, we have:
    \begin{align*}
      \norm{\bv{1}-\bv{S}}_F \ge \frac{n^2}{4\sqrt{s}} \implies   \norm{\bv{1}-\bv{S}}_F^2 \ge \frac{n^4}{16 s} \implies \norm{\bv{1}-\bv{S}}_2^2 \ge \frac{n^3}{16 s} \implies \norm{\bv{1}-\bv{S}}_2 \ge \frac{n^{3/2}}{4\sqrt{s}}.
    \end{align*}
    By our assumption that $\|\bv 1- \bv S\|_2 \leq \epsilon n$, this means that $s = \Omega\left(\frac{n}{\epsilon^2}\right)$, concluding the theorem.    
\end{proof}

\subsection{Lower Bounds for Deterministic Approximation of Non-PSD Matrices}

We next show that our universal sparsification bound for non-PSD matrices (Theorem \ref{thm: general matrix eps4 bound}) is tight up to a $\log^4(1/\epsilon)$ factor. Our lower bound holds even for the easier problem of top singular value (spectral norm) approximation and against a more general class of algorithms, which non-adaptively and deterministically query entries of the input matrix. We show how to extend the lower bound to possibly adaptive deterministic algorithms  in Theorem \ref{thm: lower bound a1}, but with a $1/\epsilon^2$ factor loss.

\lowerboundNonAdaptive*
\begin{proof}

Assume that we have a deterministic algorithm $\mathcal{A}$ that non-adaptively reads $s$ entries of any bounded  symmetric matrix $\bv{A}$ and outputs $\widetilde \sigma_1$ with $|\sigma_1(\bv A)-\widetilde \sigma_1| \le  \epsilon \max(n,\norm{\bv A}_1)$. 
Assume for the sake of contradiction that $s \le \frac{cn}{\epsilon^4}$ for some sufficiently small constant $c$.

Let $T \subset [n]$ be a set of $n^{3/2} /s^{1/2}$ rows on which $\mathcal{A}$ reads at most $\sqrt{ns}$ entries. Such a subset must exist since the average number of entries read in any set of $n^{3/2}/s^{1/2}$ rows is $\frac{n^{3/2}}{s^{1/2}} \cdot n \cdot \frac{s}{n^2} = \sqrt{ns}$.

Let $\bv 1_T$ be the matrix which is $1$ on all the rows in $T$ and zero everywhere else. Let $\bv{S}_T$ be the matrix which matches $\bv{1}_T$ on all entries, except is $0$ on any entry in the rows of $T$ that is not read by the algorithm $\mathcal{A}$. Observe that $\mathcal{A}$ reads the same entries (all ones) and thus outputs the same approximation $\widetilde \sigma_1$ for $\bv 1_T$ and $\bv S_T$.
We now bound the allowed error of this approximation. $\bv{1}_T$ is rank-1 and so:
$$\norm{\bv{1}_T}_1  = \sigma_1(\bv{1}_T) =  \norm{\bv{1}_T}_F = \frac{n^{5/4}}{s^{1/4}}.$$
We have $\sigma_1(\bv{S}_T) \le \norm{\bv S_T}_F \le n^{1/4} s^{1/4}$ since $\bv{S}_T$ has just $\sqrt{ns}$ entries set to $1$ -- the entries that $\mathcal{A}$ reads in the rows of $T$.  Using that $\bv{S}_T$ is supported only on the $n^{3/2}/s^{1/2}$ rows of $T$, $\bv{S}_T$ has rank at most  $n^{3/2}/s^{1/2}$. Thus, we can bound:
$$\norm{\bv{S}_T}_1 \le \frac{n^{3/4}}{s^{1/4}} \cdot \norm{\bv S_T}_F \le n.$$

Now, since $|\sigma_1(\bv{1}_T) - \sigma_1(\bv{S}_T| \ge \left |\frac{n^{5/4}}{s^{1/4}} - n^{1/4} s^{1/4}\right |$, our algorithm incurs error on one of the two input instances at least 
$$\frac{\left |\frac{n^{5/4}}{s^{1/4}} - n^{1/4} s^{1/4}\right |}{2} \ge \frac{n^{5/4}}{4 s^{1/4}},$$
where the inequality follows since, by assumption, $s \le \frac{cn}{\epsilon^4}$ for some small constant $c$ and $\epsilon \ge \frac{1}{n^{1/4}}$, and thus $n^{1/4} s^{1/4} \le \frac{n^{5/4}}{2s^{1/4}}$. 

The above is a contradiction when $\epsilon < 1/4$ since the above error is at least $1/4 \cdot \norm{\bv 1_T}_1$ and further, for $s \le \frac{cn}{\epsilon^4}$, the error it at least $\frac{\epsilon n}{4 c^{1/4}} > \epsilon n \ge \epsilon \cdot \norm{\bv S_T}_1$ if we set $c$ small enough. Thus, on at least one of the two input instances the error exceeds $\epsilon \cdot \max(n,\norm{\bv A}_1)$, yielding a contradiction.
\end{proof}

We can prove a variant on Theorem \ref{thm: lower bound nonadaptive} when the algorithm is allowed to make adaptive queries. Here, our lower bound reduces to $\Omega(n/\epsilon^2)$, as we are not able to restrict our hard case to a small set of rows of the input matrix. Closing the gap here -- either by giving a stronger lower bound in the adaptive case or giving an adaptive query deterministic algorithm that achieves $\widetilde O(n/\epsilon^2)$ query complexity is an interesting open question.

\lowerbounda*
\begin{proof}

Assume that we have a deterministic algorithm $\mathcal{A}$ that reads at most $s$ entries of any bounded entry matrix $\bv{A}$ and outputs $\widetilde \sigma_1$ with $|\sigma_1(\bv A)-\widetilde \sigma_1| \le  \epsilon \cdot \max(n,\norm{\bv A}_1)$. Assume for the sake of contradiction that $s \le \frac{cn}{\epsilon^2}$ for some sufficiently small constant $c$.

Let $S$ be the set of entries that $\mathcal{A}$ reads when given the all ones matrix $\bv{1}$ as input.
Let $\bv{S}$ be the matrix which is one on every entry in $S$ and zero elsewhere.  Observe that $\mathcal{A}$ reads the same entries and thus outputs the same approximation $\widetilde \sigma_1$ for $\bv 1$ and $\bv S$.
We now bound the allowed error of this approximation. $\bv{1}$ is rank-1 and has $\norm{\bv{1}}_1  = \sigma_1(\bv{1}) =  n$. We have $\sigma_1(\bv{S}) \le \norm{\bv S}_F = \sqrt{s}$ and can bound $\norm{\bv{S}}_1 \le \sqrt{n} \cdot \norm{\bv S}_F \le \sqrt{sn} \le \frac{c^{1/2}n}{\epsilon},$ where the last bound follows from our assumption that $s \le \frac{cn}{\epsilon^2}$. 
Now, since $|\sigma_1(\bv{1}) - \sigma_1(\bv{S}| \ge \left |n - \sqrt{s}\right |$, our algorithm incurs an error on one of the two input instances at least 
$$\frac{\left |n- \sqrt{s}\right |}{2} \ge \frac{n}{4},$$
where the inequality holds since, by assumption, $s \le \frac{cn}{\epsilon^2}$ for some small constant $c$ and $\epsilon \ge \frac{1}{n^{1/2}}$, and thus $\sqrt{s} \le \frac{n}{2}$. 

The above is a contradiction when $\epsilon < 1/4$ since the error mentioned above is at least $1/2 \cdot \norm{\bv 1}_1 = 1/2 \cdot \max(n,\norm{\bv 1}_1)$. Furthermore, since we have bounded $\norm{\bv S}_1 \le \frac{c^{1/2} n}{\epsilon}$, the error is greater than $\epsilon \cdot \max (n,\norm{\bv S}_1)$ when $c < 1$. Thus, on at least one of the two input instances the error exceeds $\epsilon \cdot \max(n,\norm{\bv A}_1)$, yielding a contradiction.
\end{proof}

\subsection{Lower Bound for Randomized Approximation of Non-PSD Matrices}\label{section:randomized_query_lower}

Finally, we prove Theorem \ref{theorem:adaptive_random_lower_bound}, which gives an $\Omega(n/\epsilon^2)$ query  lower bound for spectral approximation of bounded entry matrices that holds even for randomized and adaptive algorithms that approximate $\bv{A}$ in an arbitrary manner (not necessarily via sparsification). In particular, we show the lower bound against algorithms that produce {any} data structure, $f(\cdot, \cdot)$, that satisfies $|f(\bv x, \bv y) - \bv x^T \bv A \bv y| \leq \epsilon n$ for all unit norm $\bv{x,y} \in \mathbb{R}^n$

To prove this lower bound, we let $\bv{A}$ be a random matrix where each row has binary entries that are either i.i.d. fair coin flips, or else are coin flips with bias $+ \epsilon$. Each row is unbiased with probability $1/2$ and biased with probability $1/2$.  We show that an $\epsilon n$-spectral approximation to $\bv{A}$ suffices to identify for at least a $9/10$ fraction of the rows, whether or not they are biased -- roughly since the approximation must preserve the fact that $\bv{x}^T \bv{A} \bv{x}$ is large when $\bv{x}$ is supported just on this biased set.
Consider a communication problem in the blackboard model, in which each of $n^2$ players can access just a single entry of $\bv{A}$. It is known that if the $n$ players corresponding to a single row of the matrix want to identify with good probability whether or not it is biased, they must communicate at least $\Omega(1/\epsilon^2)$ bits \cite{srinivas2022memory}. Further, via a direct sum type argument, we can show that for the $n^2$  players to identify the bias of a $9/10$ fraction of the rows with good probability, $\Omega(n/\epsilon^2)$ bits must be communicated. I.e.,  at least $\Omega(n/\epsilon^2)$  of the players must read their input bits, yielding our query complexity lower bound. 

Note that there may  be other proof approaches here, based on a direct sum of $n$ $2$-player Gap-Hamming instances \cite{BJKS04,CHR12,BGPW13}, but our argument is simple and already gives optimal bounds. 

\medskip

\noindent\textbf{Section Roadmap.} In Section \ref{section:info_complexity_lower}, we formally define the problem of identifying the bias of a large fraction of $\bv{A}$'s rows as the \emph{$(\epsilon, n)$-distributed detection problem} (Definition \ref{def:distdet_n}). We prove a $\Omega(n/\epsilon^2)$ query lower  bound for this problem  in Lemma \ref{lemma:entrywise_lower}..  Then, in Section \ref{section:spectral_info_reduction}, we prove Theorem \ref{theorem:adaptive_random_lower_bound} by showing a reduction from the distributed detection problem to spectral approximation.

\subsubsection{Distributed Detection Lower Bound}\label{section:info_complexity_lower}
Here, we define additional concepts and notation specific to this section. For random variables $X$, $Y$ and $Z$, let $H(X)$ denote the entropy, $H(X|Y)$ denote the conditional entropy, $I(X;Y)$ denote mutual information, and $I(X;Y | Z)$ denote conditional mutual information \cite{cover1999elements}. We also use some ideas from communication complexity. Namely, we work in the \emph{blackboard model}, where $T$ parties communicate by posting messages to a public blackboard with access to public randomness with unlimited rounds of adaptivity. Let $\Pi \in \{0,1\}^*$ denote the transcript of a protocol 
posted to the blackboard, and let $|\Pi|$ denote the length of a transcript.  For a fixed protocol with fixed probability of success, we may assume, without loss of generality, that $\Pi$ has a fixed pre-determined length (see Section 2.2 of \cite{roughgarden2016communication}).

Our query complexity lower bound for spectral approximation will follow from a reduction from the following testing problem.
\begin{definition}
    ($\epsilon$-Distributed detection problem \cite{srinivas2022memory}). For fixed distributions $\mu_0 = \operatorname{Bernoulli}(1/2)$ and $\mu_1 = \operatorname{Bernoulli}(1/2 + \epsilon)$, with $\epsilon \in [0, \frac{1}{2}]$, let $\bv x \in \{0, 1\}^n$ be a random vector such that $\bv x_1, \ldots, \bv x_n$ are sampled i.i.d. from $\mu_V$, for $V \in \{0, 1\}$. The distributed detection problem is the task of determining whether $V = 0$ or $V = 1$, given the values of $\bv x$.
\end{definition}
This decision problem can be naturally interpreted as a communication problem in the blackboard model, where $n$ players each have a single private bit corresponding to whether a unique entry of $\bv x$ is zero or one. Prior work takes this view to lower bound the mutual information between the transcript of a protocol which correctly solves the $\epsilon$-Distributed detection problem with constant advantage and the sampled bits in $\bv x$ in the case $V = 0$.
\begin{theorem}\label{theorem:lower_mutual_distdet}
    (Theorem 6 in \cite{srinivas2022memory}). Let $\Pi$ be the transcript of a protocol that solves $\epsilon$-distributed detection problem with probability $1-p$ for any fixed choice of $p \in [0, 0.5)$. Then    \begin{gather*}
        I(X; \Pi | V = 0) = \Omega(\epsilon^{-2}).
    \end{gather*}
\end{theorem}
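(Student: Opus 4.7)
The goal is to show that any blackboard protocol whose transcript $\Pi$ solves $\epsilon$-distributed detection with constant advantage must satisfy $I(\bv X; \Pi \mid V=0) = \Omega(1/\epsilon^2)$. The plan is to compare the transcript distributions $Q_0$ and $Q_1$ of $\Pi$ under the two hypotheses and, via a Fourier expansion of the likelihood ratio in the parity basis, reduce a lower bound on their divergence to a lower bound on the mutual information with the input.

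First, since the protocol predicts $V$ correctly with probability at least $1-p > 1/2$, a standard distinguishing argument gives $\mathrm{TV}(Q_0,Q_1) = \Omega(1)$ and hence, by Pinsker, $\mathrm{KL}(Q_1 \Vert Q_0) = \Omega(1)$. Viewing the bits as $\pm 1$-valued so that under $V=0$ they are i.i.d.\ uniform, the input likelihood ratio factorizes as $L(\bv X) = \prod_i (1+2\epsilon \bv X_i) = \sum_{S \subseteq [n]} (2\epsilon)^{|S|} \chi_S(\bv X)$ for Walsh characters $\chi_S(\bv X) = \prod_{i \in S} \bv X_i$. The transcript likelihood ratio is the conditional expectation $Q_1(\pi)/Q_0(\pi) = \E[L(\bv X) \mid \Pi = \pi, V=0]$, so the technical core is to argue
\[ \mathrm{KL}(Q_1 \Vert Q_0) \;\le\; C\epsilon^2 \cdot I(\bv X; \Pi \mid V=0) \;+\; \text{(higher-order in $\epsilon$)}. \]

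For the singleton Fourier terms $S=\{i\}$, Pinsker's inequality applied to the posterior $\Pr(\bv X_i=1 \mid \Pi)$ yields $\E_\pi[\E[\bv X_i \mid \Pi]^2] \le 2\ln 2 \cdot I(\bv X_i;\Pi \mid V=0)$, and because under $V=0$ the coordinates are independent, the chain rule then gives $\sum_i I(\bv X_i; \Pi \mid V=0) \le I(\bv X; \Pi \mid V=0)$. Contributions from higher-order Walsh terms $\chi_S$ with $|S| \ge 2$ carry extra factors of $\epsilon^{2(|S|-1)}$ and are shown, after Cauchy--Schwarz and a careful accounting using the tensor structure of $L$, to be subdominant. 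Combining these bounds gives $\Omega(1) \le O(\epsilon^2)\cdot I(\bv X;\Pi \mid V=0)$, which rearranges to the desired lower bound.

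The main obstacle is controlling the Fourier cross-terms: the projections $\E[\chi_S \mid \Pi, V=0]$ are not in general mutually orthogonal across $S$, so $\chi^2(Q_1, Q_0)$ does not trivially reduce to a sum of squared singleton contributions. Showing that the leading $\epsilon^2$-order contribution really is captured by the singletons---rather than being inflated by the $\binom{n}{k}$-sized higher layers of the Fourier basis---is the crux of the argument, and is where the distributed structure of the protocol (each player seeing only one input bit) is essential in limiting how sharply $\Pi$ can encode high-order correlations of $\bv X$.
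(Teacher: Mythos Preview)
The paper does not prove this statement: it is quoted verbatim as Theorem~6 of \cite{srinivas2022memory} and used as a black box in the proof of Lemma~\ref{lemma:independent_det_lower}. So there is no ``paper's own proof'' to compare your attempt against.

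That said, your outline is the right shape for the argument in the cited reference, but the decisive ingredient is not stated. What makes the higher-order Fourier terms controllable is the \emph{rectangle property} of blackboard protocols: conditioned on $\Pi=\pi$ (under $V=0$), the bits $\bv X_1,\ldots,\bv X_n$ remain a product distribution. Hence
\[
\E\bigl[\chi_S(\bv X)\mid \Pi=\pi,\,V=0\bigr]=\prod_{i\in S}\E\bigl[\bv X_i\mid \Pi=\pi,\,V=0\bigr],
\]
and the transcript likelihood ratio factorizes exactly as $\prod_i\bigl(1+2\epsilon\,\E[\bv X_i\mid\Pi]\bigr)$. Taking logarithms then collapses the whole thing to a sum over singletons up to $O(\epsilon^2)$, after which your Pinsker-on-marginals step and the chain rule finish the bound. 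Your proposal gestures at ``the distributed structure of the protocol'' being essential, but without explicitly invoking the rectangle property there is no mechanism for the ``careful accounting'' you promise: an arbitrary function of $\bv X$ could encode a high-order parity $\chi_S$ directly, making the $|S|\ge 2$ layers dominate rather than vanish. So the gap is not in the overall plan but in naming and using this structural fact about blackboard transcripts.
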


Next, we define the $(\epsilon, n)$-Distributed detection problem, which combines $n$ length-$n$ $\epsilon$-Distributed detection problems into a single joint detection problem.
\begin{definition}\label{def:distdet_n}
    ($(\epsilon, n)$-Distributed detection problem) For $\bv v \in \{0, 1\}^n$ distributed uniformly on the Hamming cube, generate the matrix $\bv A \in \{0,1\}^{n \times n}$ such that if $\bv v_i = 0$, then all entries in the $i$-th row of $\bv A$ are i.i.d. samples from $\operatorname{Bernoulli}(1/2)$. Otherwise, let all entries in the $i$-th row be sampled from $\operatorname{Bernoulli}(1/2 + \epsilon)$.  The $(\epsilon, n)$-Distributed detection problem is the task of recovering a vector $\vhat \in \{0,1\}^n$ such that $\|\vhat - \bv v\|_1 \leq \frac{n}{20}$.
\end{definition}
Again, this vector recovery problem has a natural interpretation as a communication problem, where $n^2$ players each hold a single bit of information that corresponds to whether a unique entry of $\bv A$ is zero or one. Throughout this section, we will view Definition \ref{def:distdet_n} as a decision problem or communication problem as needed. Let $\Pi$ be the transcript of a protocol that solves the $(\epsilon, n)$-Distributed detection problem in the communication model introduced at the beginning of this section. Lower bounding the mutual information between the transcript, $\Pi$, and the private information held by the players (the entries of $\bv A$) lower bounds the length of the transcript via the following argument. 
\begin{lemma}\label{lemma:information_to_length}
    If $\Pi$ is the transcript of a protocol that solves the $(\epsilon, n)$-Distributed detection problem and $I(\bv A; \Pi) \geq b$, then $|\Pi| \geq \frac{b}{\log 2}$.
\end{lemma}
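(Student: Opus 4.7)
The plan is to prove this via two standard information-theoretic inequalities: (i) mutual information is upper bounded by entropy, and (ii) the entropy of a binary string of fixed length is bounded by its length (times $\log 2$ when using natural logarithms, consistent with the use of nats implied by the $1/\log 2$ factor in the statement).

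First, I would recall from the setup of Section \ref{section:info_complexity_lower} that without loss of generality $\Pi$ has a predetermined fixed length, so $\Pi$ takes values in $\{0,1\}^{|\Pi|}$. Then $H(\Pi) \le |\Pi| \cdot \log 2$, since the maximum entropy of a distribution supported on a set of size $2^{|\Pi|}$ is achieved by the uniform distribution and equals $|\Pi| \log 2$ (in nats).

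Next, I would use the basic identity $I(\bv A;\Pi) = H(\Pi) - H(\Pi\mid \bv A)$, together with the nonnegativity of conditional entropy $H(\Pi\mid \bv A)\ge 0$, to conclude $I(\bv A;\Pi) \le H(\Pi)$. Combining with the previous paragraph gives
\begin{equation*}
b \;\le\; I(\bv A;\Pi) \;\le\; H(\Pi) \;\le\; |\Pi|\cdot \log 2,
\end{equation*}
so that $|\Pi| \ge b/\log 2$, as claimed. There is no real obstacle here; the only subtlety is ensuring we may assume $|\Pi|$ is fixed a priori (so that the bound $H(\Pi)\le |\Pi|\log 2$ makes sense), which is exactly the standard reduction recalled earlier in the section, and being consistent about the units (nats vs.\ bits) to match the $1/\log 2$ factor in the statement.
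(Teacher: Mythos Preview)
Your proof is correct and follows essentially the same approach as the paper: both use that $I(\bv A;\Pi)\le H(\Pi)$ and bound $H(\Pi)$ by $\log|\operatorname{supp}(\Pi)| = |\Pi|\log 2$, invoking the fixed-length assumption on $\Pi$. The only cosmetic difference is that you justify $I(\bv A;\Pi)\le H(\Pi)$ via the decomposition $H(\Pi)-H(\Pi\mid\bv A)$, whereas the paper simply cites the inequality $I(X;Y)\le\min\{H(X),H(Y)\}$.
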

\begin{proof}
    For random variables $X,Y$, $I(X;Y) \leq \min \{H(X), H(Y)\}$ \cite{cover1999elements}.
    If a random variable $X$ has finite support, then $H(X) \leq \log |\operatorname{supp}(X)|$. Recall that $|\Pi|$ is the number of bits in the transcript and hence $\log 2^{|\Pi|} = \log 2 \cdot |\Pi| \geq H(\Pi) \geq  I(\bv A; \Pi) \geq b$.  Hence, we conclude the  statement.
\end{proof}

Next, we lower bound the number of entries in the matrix $\bv A$ that must be observed to solve a variant of the $(\epsilon, n)$-Distributed detection problem, where we aim to correctly decide each index of $\bv v$ with constant success probability, rather than constructing $\vhat$ such that $\|\vhat - \bv v\|_1 \leq \frac{n}{20}$. There are three sources of randomness in this problem: 1) the initial randomness in sampling $\bv v$, 2) the random variable $\bv A$ which depends on $\bv v$, and 3) the random transcript $\Pi$ which depends on $\bv A$.  When necessary to avoid confusion, we will be explicit regarding which of these variables are considered fixed and which are considered as random in probabilistic statements.

\begin{lemma}\label{lemma:independent_det_lower}
    Let $\bv A$ and $\bv v$ be distributed as in the $(\epsilon, n)$-Distributed detection problem.  Any protocol with transcript $\Pi$ that constructs a vector $\vhat$ such that $\Pr(\bv v_i = \vhat_i) \geq \frac{9}{10}$, for all $i\in [n]$, (where the randomness is with respect to $\bv A$, $\bv v$, and $\Pi$), must observe $\Omega(\frac{n}{\epsilon^2})$ entries of $\bv A$ in the worst case.
\end{lemma}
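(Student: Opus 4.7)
The plan is to prove the lemma by embedding $n$ independent copies of the single-instance $\epsilon$-Distributed detection problem into the $(\epsilon,n)$ problem and combining the resulting per-row mutual information lower bounds via a chain-rule direct sum. Let $\Pi$ be the transcript of any protocol meeting the hypothesis, and let $q$ be the number of entries of $\bv A$ it reads. Because each query reveals a single bit and public randomness is independent of $\bv A$, $I(\bv A;\Pi) \le q$, so it suffices to prove $I(\bv A;\Pi) = \Omega(n/\epsilon^2)$.

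For the embedding, fix $i \in [n]$ and construct a single-instance protocol $P_i$ on private input $\bv X \in \{0,1\}^n$ as follows: using public randomness, sample $V_{-i}$ uniformly from $\{0,1\}^{n-1}$ and sample rows $\bv A_{-i}$ from their conditional distribution given $V_{-i}$; set $\bv A_i = \bv X$; run the given $(\epsilon,n)$-protocol on the assembled $\bv A$; and output $\hat V = \vhat_i$. Since public randomness is a free resource in the blackboard model, the transcript of $P_i$ is identical to $\Pi$, and the joint distribution of $(\bv A_i = \bv X,\, V = V_i,\, V_{-i},\, \bv A_{-i})$ under $P_i$ coincides with the $(\epsilon,n)$-distribution. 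The per-coordinate success guarantee $\Pr(\vhat_i = \bv v_i) \ge 9/10$ transfers directly to give $P_i$ success probability at least $9/10$, so invoking Theorem \ref{theorem:lower_mutual_distdet} with $p = 1/10 \in [0,1/2)$ yields $I(\bv A_i; \Pi \,|\, V_i = 0) = \Omega(1/\epsilon^2)$ under the $(\epsilon,n)$-distribution.

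To combine the per-row bounds, I would first note that $I(\bv A;\Pi) \ge I(\bv A;\Pi \,|\, \bv v)$, which follows from the identity $I(\bv A;\bv v,\Pi) = I(\bv A;\bv v) + I(\bv A;\Pi \,|\, \bv v) = I(\bv A;\Pi) + I(\bv A;\bv v \,|\, \Pi)$ together with $I(\bv A;\bv v \,|\, \Pi) \le I(\bv A;\bv v)$. Then, applying the chain rule and the conditional independence of the rows given $\bv v$,
\[
I(\bv A;\Pi \,|\, \bv v) \;=\; \sum_{i=1}^{n} I(\bv A_i;\Pi \,|\, \bv A_{<i}, \bv v) \;\ge\; \sum_{i=1}^{n} I(\bv A_i;\Pi \,|\, \bv v),
\]
where the inequality uses $H(\bv A_i \,|\, \bv A_{<i}, \bv v) = H(\bv A_i \,|\, \bv v)$ combined with the monotonicity of conditional entropy under additional conditioning. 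Next, since $V_{-i}$ is independent of $\bv A_i$ given $V_i$, the chain rule gives $I(\bv A_i; \Pi, V_{-i} \,|\, V_i = 0) = I(\bv A_i; \Pi \,|\, V_i = 0, V_{-i})$, and data processing yields $I(\bv A_i; \Pi \,|\, V_i = 0, V_{-i}) \ge I(\bv A_i; \Pi \,|\, V_i = 0)$. Averaging over $V$ with the weight $\Pr(V_i = 0) = 1/2$ then produces $I(\bv A_i;\Pi \,|\, \bv v) \ge \tfrac{1}{2} I(\bv A_i; \Pi \,|\, V_i = 0) = \Omega(1/\epsilon^2)$. Summing over $i$ and chaining with $q \ge I(\bv A;\Pi)$ completes the bound.

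The main obstacle I anticipate is rigorously justifying that $P_i$ is a bona fide single-instance protocol whose transcript equals $\Pi$: the reduction must be carefully formulated so that $V_{-i}$ and $\bv A_{-i}$ are drawn entirely from public randomness (so they need not be posted to the blackboard), and one must verify that the resulting conditional distribution exactly matches the hypothesis of Theorem \ref{theorem:lower_mutual_distdet}. Beyond this verification and the choice of success parameter $p$, the argument reduces to routine information-theoretic bookkeeping with the chain rule, conditional independence, and data processing.
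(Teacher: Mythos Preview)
Your proposal is correct and follows essentially the same direct-sum mutual-information argument as the paper: invoke Theorem~\ref{theorem:lower_mutual_distdet} per row to obtain $I(\bv A_i;\Pi\mid V_i=0)=\Omega(1/\epsilon^2)$, then sum across rows using independence. The only cosmetic difference is that the paper works with the unconditional $I(\bv A_i;\Pi)$ (exploiting that the rows $\bv A_i$ are unconditionally independent, since each depends only on its own $V_i$) and absorbs an additive $-2$ loss from the entropy of $V_i$, whereas you condition on $\bv v$ throughout and use the Markov chain $\bv v\to\bv A\to\Pi$ to justify $I(\bv A;\Pi)\ge I(\bv A;\Pi\mid\bv v)$; both routes yield the same bound.
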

\begin{proof}
    Consider $\bv A$ as $n^2$ players each holding a single bit of information corresponding to whether a unique entry of $\bv A$ is zero or one. Here, let $\Pi \in \{0,1\}^*$ be the transcript of a protocol that satisfies $\Pr(\bv v_i = \vhat_i) \geq \frac{9}{10}$ for every $i \in [n]$. Note that any algorithm which solves this problem by querying $k$ entries of $\bv A$ implies a protocol for the communication problem which communicates $k$ bits, since the algorithm could be simulated by (possibly adaptively) posting each queried bit to the blackboard.

    First, we lower bound the un-conditional mutual information between the private information in the $i$-th row of $\bv A$ and the transcript $\Pi$.  Note that $\bv v_i$ is one or zero with equal probability, therefore,
    \begin{gather*}
        I(\bv A_i; \Pi | \bv v_i) = \frac{1}{2}I(\bv A_i; \Pi | \bv v_i = 0) + \frac{1}{2}I(\bv A_i; \Pi | \bv v_i = 1) 
        \geq \frac{1}{2}I(\bv A_i; \Pi | \bv v_i = 0).
    \end{gather*}
    Next, we decompose the mutual information by its definition and the chain rule, then, we use the fact that $0 \leq H(\bv v_i) \leq 1$ and that conditioning can only decrease the entropy of a random variable.
    \begin{gather*}
        I(\bv A_i; \Pi) =  I(\bv A_i; \Pi | \bv v_i)  + H(\bv v_i | \bv A_i, \Pi) - H(\bv v_i | \bv A_i) - H(\bv v_i | \Pi) + H(\bv v_i)\\
        \Rightarrow I(\bv A_i; \Pi) \geq I(\bv A_i; \Pi | \bv v_i) - 2.
    \end{gather*}
    Observe that determining $\vhat_i$ with the guarantee $\Pr(\bv v_i = \vhat_i) \geq \frac{9}{10}$ solves the $\epsilon$-Distributed detection problem with probability at least $\frac{9}{10}$. Therefore, by Theorem \ref{theorem:lower_mutual_distdet} $I(\bv A_i; \Pi | \bv v_i = 0) = \Omega(\epsilon^{-2})$, and so we can use the previous two equations to lower bound for the conditional mutual information:
    \begin{gather}\label{eq:unconditional_mutual_info}
        I(\bv A_i; \Pi) \geq  \frac{1}{2}I(\bv A_i; \Pi | \bv v_i = 0) - 2 = \Omega(\frac{1}{\epsilon^2}).
    \end{gather}

    Next, we lower bound the total mutual information between $\bv A$ and $\Pi$. Mirroring the argument of Lemma 1 in \cite{srinivas2022memory}, we use that, for independent random variables, entropy is additive and conditional entropy is subadditive to lower bound the mutual information between $\bv A$ and $\Pi$:
    \begin{align*}
        I(\{\bv A_i\}_{i\in [n]}; \Pi) &= H(\{\bv A_i\}_{i\in [n]}) - H(\{\bv A_i\}_{i\in [n]} | \Pi) \\
        &\geq \sum_{i=1}^n H(\bv A_i) - H(\bv A_i, \Pi) \\ 
        &= \sum_{i = 1}^n I(\bv A_i; \Pi) = \Omega\left(\frac{n}{\epsilon^2}\right),
    \end{align*}
    where the last step follows from \eqref{eq:unconditional_mutual_info}. By Lemma \ref{lemma:information_to_length}, $|\Pi| = \Omega(I(\{\bv A_i\}_{i\in [n]}; \Pi)) = \Omega(\frac{n}{\epsilon^2})$.  Therefore, every algorithm which samples entries of $\bv A$ to construct a vector $\vhat$ satisfying $\Pr(\bv v_i = \vhat_i) \geq \frac{9}{10}$ must observe at least $\Omega(\frac{n}{\epsilon^2})$ entries of $\bv A$.
\end{proof}

We now have the necessary results to prove a lower bound on the number of entries of $\bv A$ that must be observed to solve the $(\epsilon, n)$-Distributed detection problem, which we do by reducing to the problem in Lemma \ref{lemma:independent_det_lower}.

\begin{lemma}\label{lemma:entrywise_lower}
    Any adaptive randomized algorithm which solves the $(\epsilon, n)$-Distributed detection problem with probability at least $\frac{19}{20}$ (with respect to the randomness in $\bv v$, $\bv A$, and $\Pi$) must observe $\Omega(\frac{n}{\epsilon^2})$ entries of $\bv A$.
\end{lemma}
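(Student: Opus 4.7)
The plan is to reduce from the problem in Lemma \ref{lemma:independent_det_lower}, which established an $\Omega(n/\epsilon^2)$ query lower bound for the stronger guarantee of correctly recovering each coordinate $\bv v_i$ with probability at least $9/10$. Given any algorithm $\mathcal{A}$ for the $(\epsilon, n)$-Distributed detection problem with success probability at least $19/20$, I would exhibit an algorithm $\mathcal{A}'$ making the same number of queries that achieves the stronger per-coordinate guarantee, whence the lemma follows.

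The reduction uses random symmetrization. Given an instance $(\bv v, \bv A)$ of the problem in Lemma \ref{lemma:independent_det_lower}, sample a uniformly random permutation $\pi$ of $[n]$ using public randomness, and let $\bv A'$ be the matrix whose $i$-th row equals the $\pi^{-1}(i)$-th row of $\bv A$, so $\bv A'$ has bias vector $\bv v'$ with $\bv v'_i = \bv v_{\pi^{-1}(i)}$. Since $\bv v$ is uniform on $\{0,1\}^n$, so is $\bv v'$, so $\bv A'$ is a valid input to $\mathcal{A}$. Run $\mathcal{A}$ on $\bv A'$ to obtain $\vhat' \in \{0,1\}^n$, and output $\vhat$ with $\vhat_i = \vhat'_{\pi(i)}$. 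Note that the number of entries of $\bv A$ read by $\mathcal{A}'$ is exactly the number of entries of $\bv A'$ read by $\mathcal{A}$.

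The key calculation is to bound, for each fixed $i$, the error probability $\Pr[\vhat_i \neq \bv v_i] = \Pr[\vhat'_{\pi(i)} \neq \bv v'_{\pi(i)}]$. By assumption, $\|\vhat' - \bv v'\|_1 \le n/20$ with probability at least $19/20$ (over $\bv v, \bv A$, $\pi$, and the internal randomness of $\mathcal{A}$), so bounding the Hamming distance by $n$ in the failure case gives $\E[\|\vhat' - \bv v'\|_1] \le n/20 + n/20 = n/10$. Crucially, because $\pi$ is uniform and independent of everything else and $\bv v'$ is exchangeable, conditioning on $\pi(i) = j$ does not change the joint distribution of $(\vhat', \bv v')$. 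Therefore
\begin{align*}
\Pr[\vhat'_{\pi(i)} \neq \bv v'_{\pi(i)}] = \sum_{j=1}^n \Pr[\pi(i) = j]\cdot \Pr[\vhat'_j \neq \bv v'_j \mid \pi(i) = j] = \frac{1}{n}\sum_{j=1}^n \Pr[\vhat'_j \neq \bv v'_j] = \frac{\E[\|\vhat' - \bv v'\|_1]}{n} \le \frac{1}{10}.
\end{align*}

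Thus $\mathcal{A}'$ satisfies $\Pr[\vhat_i = \bv v_i] \ge 9/10$ for every $i \in [n]$, so by Lemma \ref{lemma:independent_det_lower} it must read $\Omega(n/\epsilon^2)$ entries of $\bv A$; hence so must $\mathcal{A}$. The main conceptual step is verifying the exchangeability claim that justifies the symmetry argument; the rest is a routine Markov/union-bound conversion from average-case to per-coordinate success. No step appears to present a serious obstacle.
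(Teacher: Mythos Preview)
Your proposal is correct and takes essentially the same approach as the paper: both reduce to Lemma~\ref{lemma:independent_det_lower} by applying a uniformly random permutation to the rows so that the per-coordinate error probability equals the expected fractional Hamming error. The only cosmetic difference is that the paper conditions on the success event and multiplies $\tfrac{19}{20}\cdot\tfrac{19}{20}$, whereas you bound $\E[\|\vhat'-\bv v'\|_1]\le n/20+n/20$ directly; both yield the needed $\Pr[\vhat_i=\bv v_i]\ge 9/10$.
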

\begin{proof}
    Any algorithm that can produce a vector $\vhat$ such that $\|\vhat - \bv v \|_1 \leq \frac{n}{20}$ with probability at least $\frac{19}{20}$ could be used to guarantee that $\Pr(\vhat_i = \bv v_i) \geq \frac{9}{10}$ by the following argument.

    For any input matrix $\bv A$, create the matrix $\Abar$ such that $\Abar_{i} = \bv A_{\sigma(i)}$, where $\sigma$ is a permutation sampled uniformly from the symmetric group of size $n$. Let $\bv v_\sigma$ be the vector which satisfies $[\bv v_\sigma]_i = \bv v_{\sigma(i)}$. Run the considered protocol on $\Abar$ to recover $\bv u$ such that $\|\bv u - \bv v_\sigma\|_1 \leq \frac{n}{20}$ with probability $\frac{19}{20}$.  Finally, let $\vhat_i = \bv u_{\sigma^{-1}(i)}$ for all $i \in [n]$.

    Then, $\Pr(\vhat_{i} = \bv v_{i}) = \Pr(\|\bv u - \bv v_\sigma\|_1 \leq \frac{n}{20}) \cdot \Pr(\bv u_{\sigma^{-1}(i)} - \bv v_{\sigma^{-1}(i)} | \|\bv u - \bv v_\sigma\|_1 \leq \frac{n}{20}) = \frac{19}{20} \cdot \frac{19}{20} \geq \frac{9}{10}$. Since $\sigma(i)$ is uniformly distributed over $[n]$, $\Pr(\bv u_{\sigma^{-1}(i)} - \bv v_{\sigma^{-1}(i)} | \|\bv u - \bv v_\sigma\|_1 \leq \frac{n}{20})$ is the number of correctly decided entries divided by $n$.

    By Lemma \ref{lemma:independent_det_lower}, we conclude that solving the $(\epsilon, n)$-Distributed detection with probability at least $\frac{19}{20}$ requires observing $\Omega(\frac{n}{\epsilon^2})$ entries of $\bv A$ in the worst case.
\end{proof}

\subsubsection{Spectral Approximation Query Lower Bound}\label{section:spectral_info_reduction}

Next, we show that the $(\epsilon, n)$-Distributed detection problem can be solved using a spectral approximation of $\bv A$, thereby implying a query complexity lower bound for spectral approximation.
\adaptiverandomlowerbound*

\begin{proof}

    We reduce solving the $(\epsilon, n)$-Distributed detection problem to constructing a spectral approximation satisfying the guarantee in the above theorem statement.  Throughout this proof, let $\epsilon_d$ be the parameter associated with the $(\epsilon, n)$-Distributed detection problem and $\epsilon_s$ be the accuracy of the spectral approximation.

    Let $\bv A$ be the matrix associated with the $(\epsilon_d, n)$-Distributed detection problem. Suppose that by reading $r$ entries of $\bv A$, we can create a data structure $f(\cdot, \cdot)$ satisfying:
    \begin{gather*}
        |f(\bv x, \bv y) - \bv x^T \bv A \bv y| \leq \epsilon_s n \|\bv x\|_2 \|\bv y\|_2.
        \text{ for all input }\bv x, \bv y \in \mathbb{R}^n.
    \end{gather*}

    We show that for $\epsilon_s$ sufficiently small respect to $\epsilon_d$, such a spectral approximation is sufficient to solve the $(\epsilon_d, n)$-Distributed detection problem with no further queries to $\bv A$.

     First, let $B_i$ denote the number of ones in the $i$-th row of $\bv A$. Observe that $B_i \sim \operatorname{Binomial}(n, 1/2)$ if $\bv v_i = 0$, and, $B_i \sim \operatorname{Binomial}(n, 1/2 + \epsilon_d)$ if $\bv v_i = 1$. By Hoeffding's inequality \cite{vershynin2018high},
    \begin{gather*}
        P\left(|B_i - E[B_i]| >  \sqrt{4n \log n}\right)
        \leq 2\exp\left(\frac{-4n \log n}{2n} \right)
        = \frac{2}{n^2}.
    \end{gather*}
    Therefore, by the union bound over all $i \in [n]$,
    \begin{gather*}
        P\left(\max_{i\in S} |B_i - E[B_i] |>  \sqrt{4n \log n}\right)
        \leq  \frac{2}{n}.
    \end{gather*}
    Let $S_1, S_2 \subset [n]$, such that $|S_1|=|S_2|=\frac{n}{2}$.  Then with probability at least $1 - \frac{2}{n}$,
    \begin{align}
        \left|\sum_{i \in S_1} B_i - \sum_{i \in S_2} B_i\right|
        &\leq \left|E\left[\sum_{i \in S_1} B_i - \sum_{i \in S_2} B_i\right]\right| + \frac{2n}{2} \cdot \sqrt{4n \log n} \nonumber\\ 
        &= \left|E\left[\sum_{i \in S_1} B_i - \sum_{i \in S_2} B_i\right]\right| + O(n^{3/2} \log n). \label{eq;row_concentration}
    \end{align}
    For large enough $n$, we have with probability at least $\frac{99}{100}$ that there are at least $\frac{n}{4}$ biased rows, since the number of biased rows is distributed as $\operatorname{Binomial}(n, 1/2)$.  Define the set $S^*$ such that $|S^*| = \frac{n}{2}$ and $S^*$ contains a maximal number of biased rows.  Define the set $\Sbar$ as,
    \begin{gather*}
        \Sbar = \argmax_{|S|=\frac{n}{2}} f(\one, \one_S).
    \end{gather*}
    We will show that $\Sbar$ can contain at most $k \leq \frac{n}{80} + o(n)$ fewer biased rows than $S^*$.  By guarantee of the spectral approximation and \eqref{eq;row_concentration}, (note $\one_S^T \bv A \one_S = \sum_{i,j \in S} \bv A_{ij}),$
    \begin{align*}
        f(\one, \one_{S^*}) - f(\one, \one_{\Sbar})
        &\geq \one^T \bv A \one_{S^*} - \one^T \bv A \one_{\Sbar} - 2\epsilon_s n\|\one\|_2\|\one_{S^*}\|_2 \\
        &\geq E[\one^T \bv A \one_{S^*} - \one^T \bv A \one_{\Sbar}] + O(n^{3/2} \log n) - 2\epsilon_s n^2.
    \end{align*}
    If $\Sbar$ has $k$ fewer biased rows than $S^*$, then, $E[\one^T \bv A \one_{S^*} - \one^T \bv A \one_{\Sbar}] = \epsilon_d nk$.  By the optimality of $\Sbar$, $f(\one, \one_{S^*}) - f(\one, \one_{\Sbar}) \leq 0$. Therefore, if $\epsilon_s = \frac{\epsilon_d}{160}$, then,
    \begin{gather*}
        0 \geq f(\one, \one_{S^*}) - f(\one, \one_{\Sbar})
        \geq \epsilon_d nk - 2\epsilon_s n^2 + O(n^{3/2} \log n)
        = \epsilon_d nk - \frac{\epsilon_d n^2}{80} + O(n^{3/2} \log n).
    \end{gather*}
    Solving for $k$ implies, $k \leq \frac{n}{80} + O(\frac{n^{1/2} \log n}{\epsilon_d})$. By assumption of the theorem statement, $\frac{1}{\epsilon_d} = o(\frac{\sqrt{n}}{\log n})$, therefore, $k \leq \frac{n}{80} + o(n)$.

    Let $b$ be the number of biased rows in $\bv A$.  First, we consider the case where $b = \frac{n}{2}$ exactly. In this case, $S^*$ contains $\frac{n}{2}$ biased rows, and hence, $\Sbar$ contains at least $\frac{n}{2} - \frac{n}{80} - o(n) \geq \frac{n}{2} - \frac{n}{40}$ biased rows for large enough $n$.  Therefore, deciding all rows in $\Sbar$ (i.e., $\vhat_i = 1$ for all $i \in \Sbar)$ are biased will correctly decide $(\frac{n}{2} - \frac{n}{40})/\frac{n}{2} \geq \frac{19}{20}$ of the biased rows. By looking at the complement of $S^*$ and $\Sbar$, we conclude that $\frac{19}{20}$ of the unbiased rows are decided correctly as well. Hence, we can construct $\vhat$ such that $\|\bv v - \vhat\|_1 \leq \frac{n}{20}$ by the assignment $\vhat_i = 1$ for $i \in \Sbar$ and $\vhat_i = 0$ otherwise.

    The number of biased rows is distributed as a binomial random variable, i.e., $b \sim \operatorname{Binomial}(n, 1/2)$.  Therefore, by Hoeffding's inequaliy, $\Pr(|b - \frac{n}{2}| > 10\sqrt{n}) < 0.01$.  For large enough $n$, $10\sqrt{n} \leq 0.01 \cdot n$.  Therefore, with probability at least $\frac{99}{100}$, we can reduce to the case $b = \frac{n}{2}$ by assuming that at most an additional $0.01 \cdot n$ rows are misclassified as biased or unbiased.  Hence, with probability at least $\frac{99}{100}$, a spectral approximation of $\bv A$ with $\epsilon_s = \Theta(\epsilon_d)$ accuracy is sufficient to recover at least $\frac{9}{10}$ of the biased rows with probability at least $\frac{99}{100}$.

    Correctly classifying $\frac{9}{10}$ of the rows as biased or unbiased requires observing $\Omega(\frac{n}{\epsilon_d^2})$ entries of $\bv A$ by Lemma \ref{lemma:entrywise_lower}. Since $\Omega(\frac{n}{\epsilon_d^2}) = \Omega(\frac{n}{\epsilon_s^2})$ entries of $\bv A$ must be observed to construct the spectral approximation used to solve the $(\epsilon, n)$-Distributed detection problem, we conclude the lower bound of the theorem statement.
\end{proof}

Note that the assumption $\frac{1}{\epsilon} = o\left(\frac{\sqrt{n}}{\log n}\right)$ in the previous theorem is mild, since if this assumption does not hold, then we must read nearly all entries of the matrix anyways. We also show that our construction  provides a lower bound even when the input is restricted to be symmetric.

\begin{corollary}
    The lower bound in Theorem \ref{theorem:adaptive_random_lower_bound} applies when the input is restricted to \textbf{symmetric} binary matrices.
\end{corollary}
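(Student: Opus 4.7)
The plan is to reduce from the asymmetric lower bound of Theorem \ref{theorem:adaptive_random_lower_bound} via a standard block symmetrization. Given the random hard instance $\bv A \in \{0,1\}^{n \times n}$ from the proof of that theorem, form the $2n \times 2n$ matrix
\[
\bv B \eqdef \begin{pmatrix} \bv 0 & \bv A \\ \bv A^T & \bv 0 \end{pmatrix},
\]
which is symmetric and has entries in $\{0,1\}$. So $\bv B$ is a valid input instance for any algorithm restricted to symmetric binary matrices, and the hardness for $\bv A$ will be inherited by $\bv B$ through the reduction below.

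Next, I would argue that a spectral approximation data structure $f_{\bv B}$ for $\bv B$ (with size parameter $N = 2n$ and accuracy $\epsilon_{\mathrm{sym}}$, so that $|f_{\bv B}(\bv u,\bv v) - \bv u^T \bv B \bv v| \le \epsilon_{\mathrm{sym}} \cdot 2n$ for all unit $\bv u,\bv v \in \R^{2n}$) immediately yields a spectral approximation data structure for $\bv A$. For any unit vectors $\bv x,\bv y \in \R^n$, the embedded vectors $\bv u = (\bv x;\bv 0)$ and $\bv v = (\bv 0;\bv y)$ are unit in $\R^{2n}$ and satisfy $\bv u^T \bv B \bv v = \bv x^T \bv A \bv y$. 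Defining $g_{\bv A}(\bv x,\bv y) \eqdef f_{\bv B}((\bv x;\bv 0),(\bv 0;\bv y))$ gives $|g_{\bv A}(\bv x,\bv y) - \bv x^T \bv A \bv y| \le 2\epsilon_{\mathrm{sym}} n$, i.e., an $\epsilon_{\mathrm{asym}} n$-spectral approximation of $\bv A$ with $\epsilon_{\mathrm{asym}} = 2 \epsilon_{\mathrm{sym}}$.

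Finally, I would translate query complexity: every entry of $\bv B$ is either identically $0$ (in one of the diagonal blocks) or equals an entry of $\bv A$ (in one of the off-diagonal blocks, $\bv A$ or $\bv A^T$). Hence any adaptive algorithm making $q$ queries to $\bv B$ can be simulated with at most $q$ adaptive queries to $\bv A$. By Theorem \ref{theorem:adaptive_random_lower_bound}, constructing an $\epsilon_{\mathrm{asym}} n$ spectral approximation of $\bv A$ with success probability $99/100$ requires $\Omega(n/\epsilon_{\mathrm{asym}}^2) = \Omega(n/\epsilon_{\mathrm{sym}}^2)$ queries, which in terms of $N = 2n$ is $\Omega(N/\epsilon_{\mathrm{sym}}^2)$. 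The technical precondition $\epsilon_{\mathrm{asym}} = \Omega(\log n / \sqrt{n})$ carries over, up to constants, to $\epsilon_{\mathrm{sym}} = \Omega(\log N/\sqrt{N})$.

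There is no serious obstacle here; the only small thing to check is that the embeddings $(\bv x;\bv 0)$ and $(\bv 0;\bv y)$ are unit vectors in $\R^{2n}$, so the error guarantee picks up only the expected factor of $2n$ and no extraneous scaling. Everything else follows mechanically from Theorem \ref{theorem:adaptive_random_lower_bound}.
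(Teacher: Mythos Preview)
Your proposal is correct and is essentially identical to the paper's proof: both use the Hermitian dilation $\bv B = \begin{pmatrix}\bv 0 & \bv A\\ \bv A^T & \bv 0\end{pmatrix}$, embed unit $\bv x,\bv y\in\R^n$ as $(\bv x;\bv 0),(\bv 0;\bv y)\in\R^{2n}$ to simulate $\bv x^T\bv A\bv y$, and absorb the resulting factor of $2$ into the accuracy parameter.
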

\begin{proof}
    While construction used in Theorem \ref{theorem:adaptive_random_lower_bound} is not symmetric, we can modify it to give the same lower bound for \textbf{symmetric} input.
    Let $\bv A$ be defined as in the proof of Theorem \ref{theorem:adaptive_random_lower_bound}, and let $\Abar$ be the {Hermitian dilation} of $\bv A$, i.e.,
    \begin{gather*}
        \Abar = \begin{bmatrix}
            \bv 0 & \bv A \\ \bv A^T & \bv 0
        \end{bmatrix}.
    \end{gather*}
    Any query $\bv x^T\bv{ A y}$ can be simulated by the query $\bar {\bv x}^T \Abar \bar{\bv{y}}$, where $\bar{\bv{x}} = [\bv x , \bv 0]^T$ and $\bar{\bv{y}} = [\bv 0, \bv y]^T$. The size of $\bv A$ is $n$ and the size of $\Abar$ is $2n$, hence, if $\bar{f}$ is a spectral approximation of $\Abar$ such that,
    \begin{gather*}
        |\bar{f}(\bar{\bv{x}}, \bar{\bv{y}}) - \bar{\bv{x}^T} \Abar \bar{\bv{y}}| \leq \left(\frac{\epsilon_s}{2}\right) (2n) \|\bar{\bv{x}}\|_2 \|\bar{\bv{y}}\|_2,
        \text{ for all }\bar{\bv{x}}, \bar{\bv{y}} \in \mathbb{R}^n,
    \end{gather*}
    then we can simulate $f(\cdot, \cdot)$ such that,
    \begin{gather*}
        |f(\bv x, \bv y) - \bv x^T \bv A \bv y| \leq \epsilon_s n \|\bv x\|_2 \|\bv y\|_2,
        \text{ for all }\bv x, \bv y \in \mathbb{R}^n.
    \end{gather*}
    Therefore, the proof of Theorem \ref{theorem:adaptive_random_lower_bound} applies to the Hermitian dilation $\Abar$ after adjusting for a constant factor in the accuracy parameter $\epsilon_s$.
\end{proof}


\section{Improved Bounds for Binary Magnitude PSD Matrices}\label{section:binary_magnitude}

We call a PSD matrix $\bv A \in \R^{n\times n}$  a \emph{binary magnitude PSD matrix} if $\forall i,j\in [n], \left|\bv A_{ij}\right| \in \{0,1\}$, i.e., if $\bv A \in \{-1,0,1\}^{n\times n}$. In this section, we give deterministic spectral approximation algorithms for such matrices with improved $\epsilon$ dependencies. Specifically, we show that there exists a deterministic $\epsilon n$ error spectral approximation algorithm reading just $O \left ( \frac{n \log n}{\epsilon}\right)$ entries of the input matrix. This improves on our bound for general bounded entry PSD matrices (Theorem \ref{thm:PSD_Quad}) by a $1/\epsilon$ factor, while losing a $\log n $ factor. Further, we show that it is tight up to a logarithmic factor, via a simple application of Tur\'{a}n's theorem \cite{turaan1941extremal}.

The key idea of our improved algorithm is that for any PSD $\bv{A}$, we can write $\bv{A} = \bv{B}^T \bv{B}$, so that $\bv{A}_{ij} = \langle \bv b_i,\bv b_j \rangle$, where $\bv b_i,\bv b_j$ are the $i^{th}$ and $j^{th}$ columns of $\bv B$ respectfully. Since $\bv A$ has bounded entries, for any $i$, $\bv{A}_{ii} = \norm{\bv b_i}_2^2 \le 1$. Thus, we can apply transitivity arguments: if $|\bv{A}_{ij} | = |\langle \bv b_i,\bv b_j \rangle|$ is large (close to $1$) and $|\bv{A}_{jk} | = |\langle \bv b_j,\bv b_i \rangle|$ is also large, then $|\bv{A}_{ik} | =  |\langle \bv b_i,\bv b_k \rangle|$ must be relatively large as well. Thus, we can hope to infer the contribution of an entry to $\bv{x}^T \bv{A} \bv{x}$ without necessarily reading that entry, allowing for reduced sample complexity.

The logic is particularly clean when $\bv{A}$ is binary. In this case, we can restrict to looking at the principal submatrix corresponding to $i \in [n]$ for which $\bv{A}_{ii} = 1$.  The remainder of the matrix is all zeros. For such $i$, we have $\norm{\bv b_i}_2 = 1$, and thus if $\bv A_{ij} = 1$ and $\bv{A}_{jk} = 1$, we can conclude that $\bv{A}_{ik} = 1$. This implies that, up to a permutation of the row/column indices, $\bv{A}$ is a block diagonal matrix, with $\rank(\bv{A})$ blocks of all ones, corresponding to groups of indices $S_k$ with $\bv{b}_i = \bv{b}_j$ for all $i,j \in S_k$. The eigenvalues of this matrix are exactly the sizes of these blocks, and to recover a spectral approximation, it suffices to recover the set $S_k$ corresponding to any block of size $\ge \epsilon n$. 

To do so, we employ a weak notion of expanders \cite{pippenger1987sorting,wigderson1993expanders}, that requires that any two subsets of vertices both of size $\ge \epsilon n$, are connected by at least one edge. Importantly, such expanders can be constructed with $\widetilde O(n/\epsilon)$ edges -- beating Ramanujan graphs, which would require $\Omega(n/\epsilon^2)$ edges, but satisfy much stronger notions of edge discrepancy. Further, if we consider  the graph whose edges are in the intersection of those in the expander graph and of the non-zero entries in $\bv{A}$, we can show that  any block of ones in $\bv{A}$ of size $\Omega( \epsilon n)$ will correspond to a large $\Omega(\epsilon n)$ sized connected component in this graph. We can form this graph by querying $\bv{A}$ at $\widetilde O(n/\epsilon)$ positions (the edges in the expander) and then computing these large connected components to recover a spectral approximation. Each sparse connected component becomes a dense block of all ones in our approximation, avoiding the sparsity lower bound  of Theorem \ref{theorem:psd_deterministic_lower}. Noting that we can extend this logic to matrices with entries in $\{-1,0,1\}$, yields the main result of this section, Theorem \ref{th:binary query complexity}.

It is not hard to see that $\widetilde O(n/\epsilon)$ is optimal, even for binary PSD matrices (Theorem \ref{thm: lower bound for PSD matrices}). By Tur\'{a}n's theorem, any sample set of size $o(n/\epsilon)$ does not read any entries in some principal submatrix of size $\epsilon n \times \epsilon n$. By  letting $\bv{A}$ be the identity plus a block of all ones on this submatrix, we force our algorithm to incur $\epsilon n$ approximation error, since it cannot find this large block of ones.

\medskip

\noindent\textbf{Section Roadmap.} We formally prove the block diagonal structure of binary magnitude PSD matrices in Section \ref{sec:block}. We leverage this structure to give our improved spectral approximation algorithm for these matrices in Section \ref{sec:block2}. Finally, in Section \ref{sec:block3} we prove a nearly matching lower bound via Tur\'{a}n's theorem.

\subsection{Structure of Binary Magnitude PSD Matrices}\label{sec:block}

As discussed, our improved algorithm hinges on the fact that, up to a permutation of the rows and columns, binary magnitude PSD matrices are {block diagonal}.  Further,  each  block is itself a rank-$1$ matrix, consisting of two on-diagonal blocks of $1$'s and two off-diagonal blocks of $-1$'s. Formally, 
\begin{restatable}[Binary Magnitude PSD Matrices are Block Matrices]{lemma}{blockStructureBinaryPSD}
\label{lem:ternary_partition}
Let $\bv A \in \{-1,0,1\}^{n\times n}$ be PSD and let $\lambda_i$ for $i \in [n]$ be the eigenvalues of $\bv{A}$, with $\lambda_1 \geq \lambda_2 \geq \ldots \geq \lambda_p >0$ and $\lambda_{p+1} =\ldots = \lambda_n=0$. Then, there exists a permutation matrix $\bv{P}$ such that 
\begin{align*}
    \bv P\bv A\bv P^T=
    \begin{bmatrix}
    \bv A^{(1)} & \bv 0 & \ldots & \bv 0 & \bv {0} \\
    \bv 0 & \bv A^{(2)} & \ldots & \bv 0 & \bv{0} \\
    \vdots & \vdots & \ddots & \vdots & \bv 0 \\    
    \bv 0 & \bv 0 & \ldots & \bv A^{(p)} & \bv{0}\\
    \bv 0 & \bv 0 & \ldots & \bv 0 & \bv 0
    \end{bmatrix}.
\end{align*}
Further, 
Each $\bv{A}^{(i)} \in \{-1,1\}^{\lambda_i \times \lambda_i}$ is a $\lambda_i \times \lambda_i$ rank-1 PSD matrix. 
Letting $S_i$ denote the set of indices corresponding to the principal submatrix $\bv{A}^{(i)}$, which we call its \emph{support set}, $S_i$ can be partitioned in to two subsets $S_{i1}$ and $S_{i2}$ such that, if we let $\bv{v}_i(j) = 1$ for $j \in S_{i1}$ and $\bv{v}_i(j) = -1$ for $j \in S_{i2}$, then $\bv{v}_1,\ldots,\bv{v}_p$ are orthogonal and $\bv{A} = \sum_{i=1}^p \bv{v}_i \bv{v}_i^T$.
\end{restatable}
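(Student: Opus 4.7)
The plan is to exploit the Gram matrix factorization of a PSD matrix. Since $\bv{A}$ is PSD, write $\bv{A} = \bv{B}^T \bv{B}$ so that $\bv{A}_{ij} = \langle \bv{b}_i, \bv{b}_j \rangle$, where $\bv{b}_i$ is the $i$-th column of $\bv{B}$. The diagonal entry $\bv{A}_{ii} = \|\bv{b}_i\|_2^2$ lies in $\{-1,0,1\}$ and is nonnegative, so $\bv{A}_{ii} \in \{0,1\}$. Any index $i$ with $\bv{A}_{ii} = 0$ satisfies $\bv{b}_i = \bv{0}$, so the entire $i$-th row and column of $\bv{A}$ vanish; these indices will comprise the trailing zero block of $\bv{P}\bv{A}\bv{P}^T$ and can be set aside.

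Next I would restrict attention to $I = \{i \in [n] : \bv{A}_{ii} = 1\}$, on which $\|\bv{b}_i\|_2 = 1$. For $i,j \in I$, Cauchy--Schwarz gives $|\bv{A}_{ij}| = |\langle \bv{b}_i, \bv{b}_j \rangle| \le 1$, and since by hypothesis $|\bv{A}_{ij}| \in \{0,1\}$, it is exactly $0$ or $1$. The key observation is that the equality case of Cauchy--Schwarz, combined with $\|\bv{b}_i\|_2 = \|\bv{b}_j\|_2 = 1$, forces $\bv{b}_i = \pm \bv{b}_j$ whenever $|\bv{A}_{ij}| = 1$. Declare $i \sim j$ iff $\bv{b}_i = \pm \bv{b}_j$; this is an equivalence relation partitioning $I$ into classes $S_1, \ldots, S_p$. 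By construction, $\bv{A}_{ij} \in \{-1,+1\}$ for $i,j$ in the same class, and $\bv{A}_{ij} = 0$ for $i,j$ in distinct classes (the $\pm 1$ values are ruled out precisely because $\bv{b}_i \ne \pm \bv{b}_j$).

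Within each $S_k$, fix a representative index $i_k \in S_k$ and define $\bv{v}_k \in \{-1,0,+1\}^n$ by setting $\bv{v}_k(j) = +1$ for $j \in S_k$ with $\bv{b}_j = \bv{b}_{i_k}$, $\bv{v}_k(j) = -1$ for $j \in S_k$ with $\bv{b}_j = -\bv{b}_{i_k}$, and $\bv{v}_k(j) = 0$ otherwise. Then for any $j,\ell \in S_k$, $\langle \bv{b}_j, \bv{b}_\ell \rangle = \bv{v}_k(j) \bv{v}_k(\ell)$, so the principal submatrix of $\bv{A}$ indexed by $S_k$ equals $\bv{v}_k|_{S_k}(\bv{v}_k|_{S_k})^T$, a rank-$1$ PSD $\pm 1$ matrix. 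The vectors $\bv{v}_1, \ldots, \bv{v}_p$ have disjoint supports (the $S_k$ partition $I$), hence are orthogonal, and $\bv{A} = \sum_{k=1}^p \bv{v}_k \bv{v}_k^T$ follows by checking both sides entrywise across the three cases (both indices in the same $S_k$, in different classes, or at least one outside $I$).

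Finally I would read off the eigenvalues. Each rank-$1$ block $\bv{v}_k|_{S_k}(\bv{v}_k|_{S_k})^T$ contributes a single nonzero eigenvalue $\|\bv{v}_k|_{S_k}\|_2^2 = |S_k|$, so the multiset of nonzero eigenvalues of $\bv{A}$ is exactly $\{|S_1|, \ldots, |S_p|\}$. Relabeling the classes so that $|S_1| \ge \cdots \ge |S_p|$ matches the ordering $\lambda_1 \ge \cdots \ge \lambda_p$, so $|S_k| = \lambda_k$. The desired permutation $\bv{P}$ is the one that places the indices of $S_1$ first, then $S_2$, and so on, ending with the indices in $[n]\setminus I$; this produces the stated block-diagonal form with $\bv{A}^{(k)} = \bv{v}_k|_{S_k}(\bv{v}_k|_{S_k})^T$. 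The partition $S_k = S_{k1} \cup S_{k2}$ required by the statement is precisely $S_{k1} = \{j \in S_k : \bv{b}_j = \bv{b}_{i_k}\}$ and $S_{k2} = S_k \setminus S_{k1}$. There is no serious obstacle here; the one subtlety worth checking carefully is that the unit-norm condition on $\bv{b}_i$ (for $i \in I$) is what upgrades the Cauchy--Schwarz equality case from ``$\bv{b}_i \parallel \bv{b}_j$'' to the stronger ``$\bv{b}_i = \pm \bv{b}_j$,'' and this in turn is what forces the $\pm 1$ structure of each block rather than allowing arbitrary rank-$1$ scalings.
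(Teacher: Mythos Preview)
Your proposal is correct and follows essentially the same approach as the paper: both use the Gram factorization $\bv A = \bv B^T\bv B$ (the paper writes $\bv A = \bv B\bv B^T$ with rows $\bv B_i$), observe that $\bv A_{ii}\in\{0,1\}$, and use the Cauchy--Schwarz equality case on unit vectors to conclude $\bv b_i = \pm \bv b_j$ whenever $|\bv A_{ij}|=1$, yielding the block structure via the resulting equivalence classes. The only cosmetic difference is that the paper verifies $\bv A\bv v_i = |S_i|\bv v_i$ directly and uses $\tr(\bv A)$ to account for the remaining eigenvalues, whereas you read off the spectrum from the rank-$1$ block decomposition; both are equivalent.
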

\begin{proof}
Since $\bv A$ is PSD, there exists $\bv B \in \R^{n\times n}$ with rows $\bv B_1,\ldots, \bv B_n \in \R^n$ such that $\bv A = \bv B \bv B^T$. Note that either $\bv{A}_{ii} = 0$ or $\bv{A}_{ii} = 1$ for all $i$ since $\bv{A}_{ii} = \norm{\bv B_i}_2^2 \ge 0$.
Also for any $i,j \in [n]$, if $\bv A_{ij}=1$ or $\bv{A}_{ij}=-1$, then $\bv A_{ii} = \bv A_{jj} = 1$. Otherwise, if $\bv{A}_{ii}$ or $\bv A_{jj}$ were $0$, we would have $\norm{\bv{B}_i}_2 = 0$ or $\norm{\bv B_j}_2 = 0$ and thus $\bv A_{ij} = \bv B_i \bv B_j^T = 0$.

Thus, if $\bv A_{ij}=1$, we have $\bv{A}_{ii} = \|\bv B_i\|_2 = 1$, $\bv{A}_{jj} = \|\bv B_j \|_2=1$, and $\bv B_i^T \bv B_j=1$. I.e., we must have $\bv B_i = \bv B_j$. Similarly, if $\bv{A}_{ij}=-1$, we have $\|\bv{B}_i\|_2=\| \bv{B}_j\|_2=1$ and $\bv B_i^T \bv B_j = -1$ and so $\bv{B}_i=-\bv{B}_j$.  Finally, if $\bv A_{ij} = 0$, then $\bv B_i \perp \bv B_j$. This implies that the indices $[n]$ can be grouped into disjoint subsets $S_1, S_2,\ldots, S_p$ ($S_i \subset [n]$) such that for any $S_i$ and any $j,k \in S_i$, either $\bv{B}_i=\bv{B}_j$ or $\bv{B}_i=-\bv{B}_j$. Also for any $k \in S_i$ and $\ell \in S_j$ with $i \neq j$, we must have $\bv{B}_k \perp \bv{B}_\ell$. Label these subsets such that $|S_1| \geq |S_2| \geq \ldots \geq |S_p|$. Now observe that any subset $S_i$ can be further divided into two disjoint subsets $S_{i1}$ and $S_{i2}$ such that if $j,k \in S_{i1}$ or $j,k \in S_{i2}$, $\bv{B}_j=\bv{B}_k$ but if $j \in S_{i1}$ and $k \in S_{i2}$ (or vice-versa), $\bv{B}_{j}=-\bv{B}_k$. Thus, the principal submatrix corresponding to $S_i$, which we label $\bv{A}^{(i)}$ is a rank-1 block matrix with two blocks of all ones on the principal submatrices corresponding to $S_{i1}$ and $S_{i2}$ and $-1$'s on all remaining entries.

For any $S_i$, let $\bv v_i \in \{0,1\}^n$ be a vector such that $\bv v_i(j)=1$ if $j \in S_{i1}$, $\bv v_i(j)=-1$ if $j \in S_{i2}$ and $\bv v_i(j)=0$ otherwise. Then, observe that $\bv A \bv v_i=\lvert S_i \rvert \bv v_i$. Thus, each $\lvert S_i \rvert$ is an eigenvalue of $\bv A$ and each $\bv{v}_i$ is an eigenvector. Since $S_1,\ldots,S_p$ has disjoint supports, $\bv{v}_1,\ldots,\bv{v}_p$ are orthogonal. Further, $\tr(\bv{A}) = \sum_{i=1}^n \lambda_i = \sum_{i=1}^p |S_i|$ and thus, all other eigenvalues of $\bv{A}$ are $0$. Thus, we can write $\bv{A} = \sum_{i=1}^p \bv{v}_i \bv{v}_i^T$. This concludes the lemma. 
\end{proof}

\subsection{Improved Spectral Approximation of Binary Magnitude PSD matrices}\label{sec:block2}

Given Lemma \ref{lem:ternary_partition}, the key idea to efficiently recovering a spectral approximation to a binary magnitude PSD matrix $\bv{A}$ is that we do not need to read very many entries in each block $\bv{A}^{(i)}$ to identify the block. We just need enough to identify a large fraction of the indices in the support set $S_i$. 

To ensure that our sampling is able to do so, we will sample according to the edges of a certain type of weak expander graph, which ensures that 
any two vertex sets of large enough size are connected. I.e., that we read at least one entry in any large enough off-diagonal block of our matrix. 

\begin{definition}[$\epsilon n$-expander graphs \cite{wigderson1993expanders, pippenger1987sorting}] 
For any $\epsilon \in \R$, an $\epsilon n$-expander graph is any undirected graph on $n$ vertices such that any two disjoint subsets of vertices containing at least $\epsilon n$ vertices each are joined by an edge. 
\end{definition}

Moreover, it is not hard to show that $\epsilon n$-expanders with just $ O(n\log n/\epsilon)$ edges exist.
\begin{fact}\label{fact:eps n expanders}\cite{wigderson1993expanders}
A random $d = O \left (\frac{\log n}{\epsilon} \right )$ regular graph is an $\epsilon n$-expander with high probability.
\end{fact}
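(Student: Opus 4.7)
The plan is to establish this via a classical union bound, showing that with high probability every pair of disjoint vertex subsets of size exactly $\lceil \epsilon n \rceil$ has an edge between them in the random $d$-regular graph. This suffices for the $\epsilon n$-expansion property since any pair of larger disjoint sets contains such a sub-pair.

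I would work in the configuration model of random $d$-regular graphs: assign each vertex $d$ half-edges and define the graph by a uniformly random perfect matching on the $dn$ half-edges. Fix disjoint $A, B \subseteq [n]$ each of size $\lceil \epsilon n\rceil$ and reveal the matches of the $d\lceil\epsilon n\rceil$ half-edges incident to $A$ one by one. Conditional on the first $k$ such half-edges being matched outside $B$, the $(k+1)$-th has probability $\frac{d\lceil\epsilon n\rceil}{dn - 2k - 1} \geq \epsilon$ of being matched into $B$. Chaining these bounds, the probability that no half-edge from $A$ is matched into $B$ is at most $(1-\epsilon)^{d\lceil\epsilon n\rceil} \leq \exp(-d\epsilon^2 n)$.

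A union bound over all $\binom{n}{\lceil\epsilon n\rceil}^2 \leq (e/\epsilon)^{2\epsilon n}$ choices of $(A,B)$ yields a total failure probability of at most $\exp\bigl(2\epsilon n\log(e/\epsilon) - d\epsilon^2 n\bigr)$, which is $1/\poly(n)$ as soon as $d \geq C \log n/\epsilon$ for a sufficiently large constant $C$. (The argument in fact yields the stronger bound $d = O(\log(1/\epsilon)/\epsilon)$ whenever $\epsilon \geq 1/\sqrt{n}$, but the weaker bound stated in the fact is all that is needed downstream.) Conditioning the resulting multigraph on being simple recovers the uniform distribution on $d$-regular simple graphs, losing only a constant factor in probability for bounded $d$; for $d$ growing with $n$ one invokes standard contiguity/switching arguments.

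The main obstacle is precisely this last step of transferring from the configuration model to the uniform distribution on simple $d$-regular graphs, which is standard but technically delicate for $d = \omega(1)$. A cleaner alternative is to prove the analogous expansion statement in the Erd\H{o}s--R\'enyi model $G(n, d/n)$, where the per-pair no-edge probability $(1-d/n)^{\epsilon^2 n^2} \leq e^{-d\epsilon^2 n}$ makes the same union bound go through directly, and then invoke contiguity between $G(n, d/n)$ and the uniform $d$-regular model. Alternatively, explicit constructions of $\epsilon n$-expanders with $O(n\log n/\epsilon)$ edges from \cite{wigderson1993expanders} can be cited in place of the probabilistic argument altogether.
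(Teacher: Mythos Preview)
The paper does not prove this statement at all: it is stated as a Fact with a citation to \cite{wigderson1993expanders} and is used as a black box. So there is nothing to compare your argument against.

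That said, your proof sketch is the standard and correct probabilistic argument. A few minor remarks. First, in the configuration model, when you reveal the matches of the $d\lceil\epsilon n\rceil$ half-edges in $A$, some of them may pair with each other, so you get at least $d\lceil\epsilon n\rceil/2$ independent reveals rather than $d\lceil\epsilon n\rceil$; this only costs a constant in the exponent. Second, your parenthetical is right that the union bound actually gives the stronger $d = O(\log(1/\epsilon)/\epsilon)$, which is exactly the bound \cite{wigderson1993expanders} proves; the paper's $O(\log n/\epsilon)$ is just a crude but sufficient version. Third, you correctly flag the transfer to simple $d$-regular graphs as the delicate step. One clean way to sidestep it entirely: the property ``every pair of disjoint $\epsilon n$-sets has an edge between them'' survives passing from the configuration-model multigraph to its underlying simple graph (loops are irrelevant and parallel edges still contribute a simple edge), so you get an $\epsilon n$-expander with maximum degree at most $d$ without any contiguity machinery. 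Since the downstream application (Theorem~\ref{thm: epsn psd constructive} and Algorithm~\ref{alg: expander DSA}) only needs an $\epsilon n$-expander with $O(n\log n/\epsilon)$ edges, not exact $d$-regularity, this already suffices.
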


We note that while a spectral expander graph, such as a Ramanujan graph, as used to achieve Theorem \ref{thm:PSD_Quad}, will also be an $\epsilon n$-expander by the expander mixing lemma \cite{chung2002sparse}, such graphs require $O(n/\epsilon^2)$ edges, as opposed to the $\widetilde O(n/\epsilon)$ given by Fact \ref{fact:eps n expanders}. Further, while Fact \ref{fact:eps n expanders} is not constructive, in \cite{wigderson1993expanders} an explicit polynomial time construction of $\epsilon n$-expander graphs is shown, with just an $n^{o(1)}$ loss. This result can be plugged directly into our algorithms to give a polynomial time constructible deterministic spectral approximation algorithm with $O(n^{1+o(1)}/\epsilon)$ sample complexity. 
\begin{fact}[Constructive  $\epsilon n$-expanders \cite{wigderson1993expanders}]\label{cor: eps n expanders}
There is a polynomial time algorithm that, given an integer $n$ and $\epsilon \in (0,1)$, constructs an $\epsilon n$-expanding graph on $n$ vertices with maximum degree $\frac{n^{o(1)}}{\epsilon}$. 
\end{fact}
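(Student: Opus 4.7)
The plan is to reduce the problem of constructing an $\epsilon n$-expander to the problem of constructing an explicit \emph{bipartite disperser}, for which known constructions suffice. Recall that a bipartite graph $H = ([n], [n], E)$ with left-degree $D$ is a $(K, \delta)$-disperser if every subset $A$ of the left side with $|A| \ge K$ satisfies $|\Gamma_H(A)| \ge (1-\delta) n$, where $\Gamma_H(A)$ denotes the right-neighborhood of $A$. I would first observe the following easy implication: if $H$ is an $(\epsilon n, \epsilon)$-disperser with left-degree $D$, then the undirected graph $G$ on $[n]$ obtained by placing an edge $\{u,v\}$ whenever $(u,v)$ or $(v,u)$ is an edge of $H$ is an $\epsilon n$-expander of maximum degree at most $2D$. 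The proof is a one-liner: for any two disjoint $A, B \subseteq [n]$ with $|A|, |B| \ge \epsilon n$, the disperser property gives $|\Gamma_H(A)| \ge (1-\epsilon) n > n - |B|$, so $\Gamma_H(A)$ cannot be contained in $[n] \setminus B$, forcing an edge between $A$ and $B$ in $H$ and hence in $G$.

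Given the reduction, the task becomes to explicitly construct an $(\epsilon n, \epsilon)$-disperser on $[n] \times [n]$ with left-degree $n^{o(1)}/\epsilon$ in polynomial time. I would invoke the standard equivalence between dispersers and (weak) seeded extractors: any strong $(k, \epsilon)$-extractor $E \colon \{0,1\}^{\log n} \times \{0,1\}^s \to \{0,1\}^{\log n}$ with min-entropy threshold $k = \log(\epsilon n) + O(\log(1/\epsilon))$ gives rise, via its bipartite graph representation (where left vertex $x$ connects to right vertex $E(x, y)$ for every seed $y$), to a $(2^k, \epsilon)$-disperser with left-degree $D = 2^s$. Thus it is enough to plug in an explicit strong extractor whose seed length is $s = \log(1/\epsilon) + o(\log n)$. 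Such extractors are obtainable via the line of work following Nisan--Zuckerman (e.g., Trevisan-style constructions using error-correcting codes composed with Nisan--Wigderson pseudorandom generators), which achieve seed length $O(\log n \cdot \operatorname{polylog}(\log n / \epsilon)) \cdot (1 + \log(1/\epsilon)/\log n)$, giving degree $D = n^{o(1)}/\epsilon$ after standard parameter tuning. All steps run in time polynomial in $n$.

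The main obstacle is getting the tight $1/\epsilon$ dependence and the $n^{o(1)}$ slack simultaneously. There are two ``easy'' constructions that each miss one of the parameters: explicit Ramanujan or Margulis-style spectral expanders give an $\epsilon n$-expander by the expander mixing lemma, but the required second eigenvalue bound forces degree $\Theta(1/\epsilon^2)$, losing the optimal $\epsilon$-dependence; on the other hand, a random $d$-regular graph with $d = O(\log n / \epsilon)$ works by Fact \ref{fact:eps n expanders}, but derandomizing this by brute-force search over $d$-regular graphs is exponential. The crux of the construction is therefore to find an extractor whose seed-length overhead over the information-theoretic optimum is sub-logarithmic in $n$; this is exactly where the $n^{o(1)}$ loss arises, and matching a smaller loss would require substantially stronger extractor constructions. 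Finally, I would double-check that the disperser-to-expander reduction preserves maximum degree (rather than average degree) and that symmetrization at most doubles it, completing the claim.
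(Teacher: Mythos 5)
This statement is a \emph{Fact} in the paper: it is imported from \cite{wigderson1993expanders} with no proof given, so there is no internal argument to compare against. Your sketch does follow the general strategy of the cited construction (build an explicit disperser from an extractor, then observe that dispersion onto more than $n-|B|$ right vertices forces an edge into $B$), and your disperser-to-expander reduction is correct up to a trivial boundary issue (when $|B|=\epsilon n$ exactly you need dispersion to strictly more than $(1-\epsilon)n$ vertices, so take error $\epsilon/2$, say).

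However, two concrete gaps remain. First, the parameters you quote do not yield the claimed degree: a seed length of $O(\log n \cdot \operatorname{polylog}(\log n/\epsilon))$ gives degree $D = 2^{s} = n^{O(\operatorname{polylog}\log n)}$, which is \emph{superpolynomial} in $n$, not $n^{o(1)}/\epsilon$. To get degree $n^{o(1)}/\epsilon$ you need seed length $\log(1/\epsilon) + o(\log n)$, which is impossible for general min-entropy $k$ but is achievable here because the source has entropy deficiency only $\ell - k = \log(1/\epsilon)$; the construction must exploit this high-min-entropy regime (as Wigderson--Zuckerman do), and invoking a generic Trevisan-style extractor does not suffice as stated. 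Second, the issue you flag at the end but do not resolve is genuine: a disperser with left-degree $D$ has $nD$ edges and hence average right-degree $D$, but its \emph{maximum} right-degree can be much larger, so symmetrization need not give maximum degree $2D$. The Fact claims a maximum-degree bound, so one must either use a construction that is (near-)regular on the right or argue that high-degree right vertices can be repaired without destroying the two-disjoint-sets property. Both points are handled in \cite{wigderson1993expanders}, but as written your argument does not establish the stated bound.
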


We now prove our main technical result, which shows that if we read the entries of a binary magnitude PSD matrix $\bv A$ according to an $\epsilon n$-expander graph, then we can approximately recover all blocks of this matrix by considering the \emph{connected components} of the expander graph restricted to edges where $\bv A$ is nonzero. 

\begin{theorem}[Approximation via Sampled Connected Components]\label{thm: epsn psd constructive}
Let $\bv A \in \{-1,0,1\}^{n \times n}$ be a PSD matrix with eigenvalues $\lambda_1\geq \lambda_2 \geq \ldots \geq \lambda_p > 0$ and $\lambda_{p+1}=\ldots=\lambda_n=0$. Let $G$ be an $\epsilon/6 \cdot n$-expanding graph on $n$ vertices with adjacency matrix $\bv{B}_G$. Let $\bv{A}_G$ be the binary adjacency matrix with $(\bv A_G)_{ij} = 1$ whenever $(\bv{B}_G)_{ij} = 1$ and $|\bv{A}_{ij}| = 1$.
Let $\bar G$ be the graph whose adjacency matrix is $\bv{A}_G$. Let $S_1,\ldots,S_p$ denote the support sets of $\bv{A}$'s blocks (as defined in Lemma \ref{lem:ternary_partition}) with $|S_i| = \lambda_i$, and let $\bar G_{S_i}$ be the induced subgraph of $\bar G$ restricted to $S_i$. Finally, let $C_i \subset [n]$ be the largest connected component of $\bar G_{S_i}$. We have:
\begin{align*}
    \lambda_i - \epsilon/2 \cdot n < |C_i| \leq \lambda_i.
\end{align*}
\end{theorem}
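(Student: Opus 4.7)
The plan is to first reduce the structural claim about $\bar G_{S_i}$ to a statement purely about induced subgraphs of the expander $G$, and then argue by contradiction using the $\epsilon n/6$-expansion property.

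The key preliminary observation is that, for any $j,k \in S_i$, Lemma~\ref{lem:ternary_partition} tells us the principal submatrix of $\bv A$ on $S_i \times S_i$ is $\pm 1$ everywhere, so $|\bv A_{jk}|=1$. Consequently an edge $(j,k)$ of $G$ with both endpoints in $S_i$ is automatically kept in $\bar G$, i.e., $\bar G_{S_i}$ is exactly the induced subgraph $G[S_i]$. The upper bound $|C_i|\le \lambda_i$ is immediate, since $C_i\subseteq S_i$ and $|S_i|=\lambda_i$.

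For the lower bound, I would dispose of the trivial regime $\lambda_i=|S_i|<\epsilon n/2$ first, where the claim $|C_i|>\lambda_i-\epsilon n/2$ holds vacuously as $|C_i|\ge 0$. Assume then that $|S_i|\ge \epsilon n/2$ and, for contradiction, suppose $|C_i|\le \lambda_i-\epsilon n/2$. Write $D \eqdef S_i \setminus C_i$, so that $|D| \ge \epsilon n/2 \ge \epsilon n/6$. The strategy is to exhibit two disjoint subsets of $S_i$, each of size $\ge \epsilon n/6$, with no edge of $G$ between them; this contradicts the $\epsilon n/6$-expansion of $G$.

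I split on the size of $C_i$. If $|C_i| \ge \epsilon n/6$, then $C_i$ and $D$ are disjoint, both of size $\ge \epsilon n/6$, and by construction of connected components there is no edge of $G[S_i]=\bar G_{S_i}$ between them, contradicting expansion. If instead $|C_i|<\epsilon n/6$, then every connected component of $\bar G_{S_i}$ has size $<\epsilon n/6$ (since $C_i$ is the largest). I then partition $S_i$ greedily: order the components arbitrarily and accumulate them into a set $T_1$ until its size first reaches $\epsilon n/6$. Since each component is smaller than $\epsilon n/6$, we have $\epsilon n/6 \le |T_1| < \epsilon n/3$. Taking $T_2 = S_i\setminus T_1$, we get $|T_2| > \epsilon n/2 - \epsilon n/3 = \epsilon n/6$. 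The sets $T_1,T_2$ are disjoint, each of size $\ge \epsilon n/6$, and being a union of whole components of $\bar G_{S_i}=G[S_i]$, there are no $G$-edges between them, again contradicting expansion.

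I don't anticipate a serious obstacle here: the heart of the argument is the observation $\bar G_{S_i}=G[S_i]$ together with the two-case partition-versus-single-component dichotomy. The one point requiring a little care is the greedy partitioning step in the small-$|C_i|$ case, where one must verify both $|T_1|\ge \epsilon n/6$ and $|T_2|\ge \epsilon n/6$ using the assumption $|S_i|\ge \epsilon n/2$; this is the step whose constants force the choice of $\epsilon n/6$ in the expansion hypothesis and $\epsilon n/2$ in the conclusion.
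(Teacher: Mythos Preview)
Your proposal is correct and follows essentially the same approach as the paper: reduce to $\bar G_{S_i}=G[S_i]$ via Lemma~\ref{lem:ternary_partition}, dispose of the trivial regime, and then split on whether $|C_i|\ge \epsilon n/6$ to produce two large disconnected vertex sets contradicting expansion. The only difference is cosmetic: in the small-$|C_i|$ case the paper chooses the partition $(T_1,T_2)$ to minimize $\bigl||V_1|-|V_2|\bigr|$ and derives a contradiction from that minimality, whereas your greedy accumulation argument gets the same two sets more directly.
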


\begin{proof}
First observe that since $|S_i| = \lambda_i$, we trivially have that $|C_i| \le \lambda_i$. Thus, it remains to show that $|C_i| > \lambda_i - \epsilon/2 \cdot n$. This holds vacuously if $\lambda_i \le \epsilon/2 \cdot n$. Thus, we focus our attention to $\lambda_i \ge  \epsilon/2 \cdot n$.
For contradiction assume $|C_i| \le \lambda_i - \epsilon/2 \cdot n$. 
Let $M_1,\ldots,M_r \subseteq S_i$ denote the connected components of $\bar G_{S_i}$ such that $\left|M_1\right| \geq  \ldots \geq \left|M_r\right|$. So $|C_i| = M_1$. Observe that since all entries in $\bv{A}$ in the principal submatrix indexed by $S_i$ (i.e., in $\bv{A}^{(i)}$ from Lemma \ref{lem:ternary_partition}) have magnitude $1$, $\bar G_{S_i} = G_{S_i}$.
Consider two cases.

\medskip

\noindent \textbf{Case 1: $|C_i| \geq \epsilon/6 \cdot n$.} From our assumption: $\lambda_i - |C_i| \ge \epsilon/2 \cdot  n$ which implies that $\sum_{j=2}^r |M_j| \ge \epsilon/2 \cdot n$. 
Let $C' = \bigcup_{j=2}^r M_j$ so  $|C'| \ge \epsilon/2 \cdot n$. Since $M_1,\ldots,M_r$ are disconnected, there are no edges from $C'$ to $C_i$ in $\bar G$, and thus no edges in $G$, since $\bar G_{S_i} = G_{S_i}$. However, this contradicts the fact that $G$ is an $\epsilon/6 \cdot n$ expander and thus must have at least one edge between any two set of vertices of size at least $\epsilon/6 \cdot n$ (Definition \ref{fact:eps n expanders}).
Thus we have a contradiction and our assumption $\lambda_i - |C_i| \ge  \epsilon n$ is incorrect. So we must have $|C_i| > \lambda_i - \epsilon n$ as needed.

\medskip

\noindent \textbf{Case 2: $|C_i| < \epsilon/6 \cdot n$.} Since $|C_i| < \epsilon/6 \cdot n$, we have $\max_{i\in [r]} |M_j| < \epsilon/6 \cdot n$. Consider  partitioning these connected components into two sets, $T_1$ and $T_2$. Let $V_1,V_2$ be the vertex sets corresponding to these two partitions, i.e., $V_1 = \bigcup_{M_i \in T_1} M_i$ and $V_2 = \bigcup_{M_i \in T_2} M_i$. Pick $T_1$ and $T_2$ to minimize $||V_1|-|V_2||$. Since $\max_{i\in [r]} |M_j| < \epsilon/6 \cdot n$, for this optimal $T_1,T_2$, we must have $||V_1|-|V_2|| < \epsilon/6 \cdot n$.

 Since $V_1$ and $V_2$ are disconnected in $\bar G_{S_i}$, they must be disconnected in $G_{S_i}$. Thus, we must have either $|V_1| < \epsilon/6 \cdot  n$ or $|V_2| < \epsilon/6 \cdot n$ since any two subsets of vertices with size $\ge \epsilon/6 \cdot n$ must have an edge between them in $G$ due to it being an $\epsilon/6 \cdot n$-expander.
 Assume w.l.o.g. that $|V_2| < \epsilon/6 \cdot n$. Then since $|V_1| + |V_2| = |S_i| = \lambda_i \ge \epsilon/2 \cdot n$, we must have $|V_1| > \epsilon/3 \cdot n$. However, this means that $||V_1| -|V_2|| > \epsilon/6 \cdot n$, which contradicts the fact that we picked $T_1,T_2$ such that $||V_1| -|V_2||  < \epsilon/6 \cdot n$.
 
\end{proof}

We now present our deterministic algorithm for spectral approximation of binary magnitude PSD matrices, based on Theorem \ref{thm: epsn psd constructive}. The key idea is to compute all connected components of $\bar G$ -- the graph whose edges lie in the intersection of the edges of $G$ and the non-zero entries of $\bv A$. Any large enough block in $\bv{A}$ (see Lemma \ref{lem:ternary_partition}) will correspond to a large connected component in this graph and can thus be identified. 

\begin{algorithm}[H] 
\caption{Deterministic spectral approximator using expanders}
\label{alg: expander DSA}
\begin{algorithmic}[1]
\Require{PSD matrix $\bv A \in \{-1, 0,1\}^{n\times n}$ and $\epsilon \in (0,1)$.}
\State Construct an $\epsilon/6 \cdot  n$-expanding graph $G$ on $n$ nodes with adjacency matrix $\bv{B}_G$.
\State Let $\bv A_G$ be an $n \times n$ matrix such that $( \bv A_G)_{ij} =1$ if $(\bv{B}_G)_{ij} = 1$ and $|\bv A_{ij}| = 1$. Let $\bar G$ be the graph corresponding to $\bv A_G$. 
 \State Compute all connected components of $\bar G$,  $C_1,C_2,\ldots, C_r \subseteq [n]$.
\State Initialize $\widetilde{\bv A} = \bv 0^{n \times n}$. 
\For{$i=1,2,\ldots,r$}
    \If{$|C_i| >  \frac{\epsilon n}{2}$}
        \State Pick any $j \in C_i$ and let $\Atilde = \Atilde+\bv{A}_{j} \bv{A}_{j}^T$, where $\bv{A}_j$ is the $j^{th}$ column of $\bv{A}$.
    \EndIf
\EndFor
\State \Return{$\widetilde{\bv A}$}. 
\end{algorithmic}
\end{algorithm}
\noindent\textbf{Query complexity:} Observe that Algorithm \ref{alg: expander DSA} requires querying $\bv{A}$ at the locations where $\bv{B}_G$ is non-zero (to perform step 2). By Fact \ref{fact:eps n expanders}, there are $\epsilon/6 \cdot n$-expander graphs with just $O(\frac{n \log n}{\epsilon})$ edges, so this requires $O(\frac{n \log n}{\epsilon})$ queries.
Further, in a graph with $n$ nodes,there can be at most $2/\epsilon$  connected components of size $\ge \frac{\epsilon n}{2}$. Thus, line (7) requires total query complexity $\le \frac{2n}{\epsilon}$ to read one column of $\bv{A}$ corresponding to each large component identified. 

\binaryapprox*
\begin{proof}
We will show that Algorithm \ref{alg: expander DSA} satisfies the requirements of the theorem. We have already argued above that its query complexity is $O(\frac{n \log n}{\epsilon})$. 

By Lemma \ref{lem:ternary_partition}, we know that, up to a permutation, $\bv A$ is a block matrix with rank-$1$ blocks $\bv A^{(1)}, \bv A^{(2)}, \ldots, \bv A^{(k)}$ with support sets $S_1,\ldots,S_k$. Since $\bar G$ has edges only where $\bv{A}$ is non-zero, each connected component $C_i$ in $\bar G$ is entirely contained within one of these support sets. Further, if $|S_i| = \lambda_i \ge \epsilon n$, then by Theorem \ref{thm: epsn psd constructive}, there is exactly one connected component, call it $D_i$ (the largest connected component of $\bar G_{S_i}$) with vertices in $S_i$ and $|D_i| \ge \epsilon/2$. If $|S_i| = \lambda_i < \epsilon n$, then there it at most one such connected component by Theorem \ref{thm: epsn psd constructive} (there may be none).

So, Step 7 is triggered for exactly one connected component $D_i \subseteq S_i$ for each $S_i$ with $|S_i| = \lambda_i \ge \epsilon n$ and at most one connected component for all other $S_i$. If Line 7 is triggered, then we add $\bv{A}_j \bv{A}_j^T$ to $\bv {\widetilde A}$. Observe that $\bv{A}_j$ is a column for $j \in S_i$, and thus by Lemma  \ref{lem:ternary_partition}, $\bv{A}_j$ is supported only on $\bv{S}_i$. We have that either $\bv{A}_j(t) = 1$ on $S_{i1}$ and $\bv{A}_j(t) = -1$ on $S_{i2}$, or $\bv{A}_j(t) = -1$ on $S_{i1}$ and $\bv{A}_j(t) = 1$. That is, $\bv{A}_j$ is equal to either $\bv{v}_i$ or $-\bv{v}_i$ as defined in Lemma  \ref{lem:ternary_partition}.

So overall, letting $Z \subseteq \{S_1,\ldots,S_k\}$ be the set of blocks for which one connected component is recovered in Line 7, $\bv{\widetilde A} = \sum_{i \in Z} \bv{v}_i \bv{v}_i^T$. From Lemma \ref{lem:ternary_partition}, $\bv{A} = \sum_{i=1}^k \bv{v}_i \bv{v}_i^T$ and so $\bv{A} -  \bv{\widetilde A} = \sum_{i \notin Z} \bv{v}_i \bv{v}_i^T$. Since the $\bv{v}_i$ are orthogonal and since for $i \notin Z$ we must have $\lambda_i = |S_i| =  \norm{\bv{v}_i \bv{v}_i^T}_2 \le \epsilon n$, this gives that $\norm{\bv A - \bv{\widetilde A}}_2 \le \epsilon n$, completing the theorem.

\end{proof} 

\subsection{Lower Bounds for Binary PSD Matrix Approximation}\label{sec:block3}

We can prove that our $ O\left (\frac{n \log n}{\epsilon}\right )$ query complexity for binary magnitude matrices is optimal  up to a $\log n$ factor via Tur\'{a}n's Theorem, stated below. Our lower bound holds for the easier problem of eigenvalue approximation for PSD $\bv A \in \{0,1\}^{n \times n}$.
\begin{fact}[Turan's theorem \cite{turaan1941extremal, aigner1995turan}]\label{fact:turan's theorem}
Let $G$ be a graph on $n$ vertices that does not include a $(k+1)$-clique as a subgraph. Let $t(n,k)$ be the number of edges in $G$. Then
\begin{align*}
    t(n,k) \leq \frac{(k-1)n^2}{2k}.
\end{align*}
\end{fact}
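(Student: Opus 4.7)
The plan is to prove Turán's theorem by induction on $n$, using a clean edge-counting decomposition that plays off a maximum clique in $G$. The base case is direct: whenever $n \le k$, every graph on $n$ vertices is automatically $K_{k+1}$-free, so the largest possible value of $t(n,k)$ is $\binom{n}{2} = n(n-1)/2$. A short calculation shows $n(n-1)/2 \le (k-1)n^2/(2k)$ iff $n \le k$, so the base case holds with room to spare.

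For the inductive step, assume the bound for all smaller vertex counts, and let $G$ be a $K_{k+1}$-free graph on $n > k$ vertices achieving the maximum number of edges. I would first argue that without loss of generality $G$ contains a copy of $K_k$: if not, then for any non-edge $\{u,v\}$, adding it would create a $K_{k+1}$ only if $u$ and $v$ had a common $K_{k-1}$ in their neighborhood, and that $K_{k-1}$ together with $u$ would already constitute a $K_k$ in $G$, a contradiction. So a maximum-edge $K_{k+1}$-free graph contains a $K_k$, call its vertex set $A$, and set $B = V\setminus A$, $|B| = n-k$. The edges of $G$ split into three groups, each of which I would bound separately: the $\binom{k}{2}$ edges inside $A$; the edges between $A$ and $B$, of which there are at most $(k-1)(n-k)$ because every vertex of $B$ has at most $k-1$ neighbors in $A$ (else $A$ plus that vertex would form $K_{k+1}$); and the edges inside $B$, which by the inductive hypothesis on $n-k$ vertices number at most $(k-1)(n-k)^2/(2k)$.

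Summing these three bounds gives
\[
|E(G)| \le \binom{k}{2} + (k-1)(n-k) + \frac{(k-1)(n-k)^2}{2k} = \frac{k-1}{2k}\Bigl[k^2 + 2k(n-k) + (n-k)^2\Bigr] = \frac{(k-1)n^2}{2k},
\]
since the bracketed quantity factors as $(k+(n-k))^2 = n^2$. The only real step is the algebraic identity at the end, and the only conceptual obstacle is the reduction to graphs containing a $K_k$; everything else is mechanical. I do not expect any serious difficulty, so the main ``obstacle'' is simply packaging the three-part edge count cleanly so that the telescoping into $n^2$ is transparent.
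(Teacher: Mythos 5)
The paper does not prove this statement: Turán's theorem is imported as a black-box \emph{Fact} with citations to \cite{turaan1941extremal, aigner1995turan}, so there is no in-paper argument to compare against. Your proof is correct and is one of the classical inductions: reduce from $n$ to $n-k$ vertices by peeling off a $K_k$. The three edge counts are right, and the algebra $\tfrac{k-1}{2k}\bigl[k^2 + 2k(n-k) + (n-k)^2\bigr] = \tfrac{(k-1)n^2}{2k}$ checks out; the base case inequality $n(n-1)/2 \le (k-1)n^2/(2k) \iff n \le k$ is also correct. Two small points worth tightening in a written version: (i) the reduction to ``$G$ contains a $K_k$'' uses edge-maximality of a maximum graph, and it needs the observation that a non-edge actually exists — which holds because $n>k$ forces an edge-maximal $K_{k+1}$-free graph to be incomplete; (ii) the phrase ``adding it would create a $K_{k+1}$ only if \dots'' should be read as: maximality guarantees a $K_{k+1}$ is created, and any such $K_{k+1}$ necessarily contains a $K_{k-1}$ in $N(u)\cap N(v)$, which together with $u$ is a $K_k$ already present in $G$. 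With those clarifications the argument is complete; note that the paper only ever invokes the bound with $k=\epsilon n/4$ in the contrapositive direction, for which your statement and proof suffice.
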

Using Turan's theorem we can prove:
\lowerboundPSD*

\begin{proof}
Let $\mathcal{A}$ be a deterministic algorithm that approximates the eigenvalues of a binary PSD input matrix $\bv A$ to $\epsilon n$ additive error. Let $\bv A_0 = \bv I_n$, i.e., the identity matrix. Let $S \subset [n] \times [n]$ be the first $\frac{n}{\epsilon}$ off-diagonal entries read by $\mathcal{A}$ on input $\bv A$.  Since the choices of the algorithm are deterministic, these same entries will be the first $\frac{n}{\epsilon}$ entries queried by $\mathcal{A}$ from any input matrix which has ones on the diagonal and zeroes in all entries of $S$, even if the algorithm is adaptive.

Without loss of generality, we can assume that the entries in $S$ are above the diagonal, since reading any entry below the diagonal of an input matrix would be redundant due to symmetry.  Consider a graph G with vertex set $V = [n]$ and undirected edge set $E = S^c$, that is, the edge set contains all undirected edges $(i,j)$ such that $(i, j),(j, i) \not \in S$ for $i,j \in [n]$. Hence, for all $\epsilon \in (0, 1)$,
\begin{gather*}
    |E| = {n \choose 2} - \frac{n}{\epsilon} = \frac{n^2}{2} - \left(\frac{n}{2} + \frac{n}{\epsilon}\right) > \frac{n^2}{2} - \frac{2n}{\epsilon} = \frac{(\frac{\epsilon n}{4} - 1)n^2}{2 \cdot \frac{\epsilon n}{4}},
\end{gather*}
where the first inequality is true since $\frac{n}{2} < \frac{n}{\epsilon}$. Therefore, if we set $k = \frac{\epsilon n}{4}$, we conclude by the contrapositive of Fact \ref{fact:turan's theorem} that there exists a clique of size $\frac{\epsilon n}{4}$ in $G$.  Let $T \subset E$ denote the set of edges which forms this clique in $G$.

Construct $\bv A_1$ such that $[\bv A_1]_{ij} = [\bv A_0]_{ij}$ for all $i,j$ such that $(i, j) \not \in T$ and $[\bv A_1]_{ij} = [\bv A_1]_{ji} = 1$ for all $i,j$ such that $(i,j) \in T$.  Then, there is a principal submatrix containing all ones in $\bv A_1$ where the off-diagonal entries correspond to edges in $T$ along with the diagonal entries which are all ones by construction.  Hence, $\bv A_1$ contains a non-zero principal submatrix of size $\frac{\epsilon n}{4}$, and so, it has an eigenvalue that is at least $\frac{\epsilon n}{4}$.  Both $\bv A_0$ and $\bv A_1$ have identical values on the diagonal and on the first $\frac{n}{\epsilon}$ off-diagonal entries read by $\mathcal{A}$ (i.e., the entries in $S$). Therefore, the $\frac{n}{\epsilon}$ entries of $\bv A_0$ and $\bv A_1$ queried by $\mathcal{A}$ are identical, and hence $\mathcal{A}$ cannot distinguish these two matrices, despite the fact their maximum eigenvalue differs by $\frac{\epsilon n}{4}$.

After adjusting for constant factors, we conclude that any deterministic (possibly adaptive) algorithm that estimates the eigenvalues of a $\{0,1\}$-matrix to $\epsilon n$ additive error must observe $\Omega(\frac{n}{\epsilon})$ entries of the input matrix in the worst case.

\end{proof}

\section{Applications to Fast Deterministic Algorithms for Linear Algebra}\label{sec:applications}

We finally show how to leverage our universal sparsifiers to design the first $o(n^{\omega})$ time deterministic algorithms for several problems related to singular value and vector approximation. As discussed in Section \ref{sec:det_intro}, throughout this section, runtimes are stated assuming that we have already constructed a deterministic sampling matrix $\bv{S}$ satisfying the universal sparsification guarantees of Theorem \ref{thm:PSD_Quad} or Theorem \ref{thm: general matrix eps4 bound}. When $\bv S$ is the adjacency matrix of a Ramanujan graph, this can be done in $\Tilde{O}(n/\poly(\epsilon))$ time -- see Section \ref{sec:det_intro} and Section \ref{section:notation} for further discussion. 

In Section \ref{sec:singvalapprox}, we consider approximating the singular values of $\bv{A}$ to additive error $\pm \epsilon  \max(n,\norm{\bv A}_1)$. Observe that if we deterministically compute $\bv{\widetilde A} = \bv A \circ \bv S$ with $\norm{\bv A - \bv{\widetilde A}}_2 \le \epsilon  \max(n,\norm{\bv A}_1)$ via Theorem \ref{thm: general matrix eps4 bound}, by Weyl's inequality, all singular values of $\bv{\widetilde A}$ approximate those of $\bv{A}$ up to additive error $\pm \epsilon  \max(n,\norm{\bv A}_1)$. Thus we can focus on approximating $\bv{\widetilde A}$'s singular values.

 To do so efficiently, we will use the fact that $\bv{\widetilde A}$ is sparse and so admits very fast $\widetilde O\left (\frac{n}{\epsilon^4} \right )$ time matrix-vector products. Focusing on the top singular value, it is well known that the power method, initialized with a vector that has non-negligible (i.e.,  $\Omega(1/\poly(n))$) inner product with the top singular vector $\bv{v}_1$ of $\bv{\widetilde A}$, converges to a $(1 \pm \epsilon)$ relative error approximation in  $O(\log n/\epsilon)$ iterations. I.e., the method outputs $\bv{\widetilde v}_1$ with $(1-\epsilon) \sigma_1(\bv{\widetilde A}) \le \norm{\bv{\widetilde A} \bv{\widetilde v}_1}_2 \le \sigma_1(\bv{\widetilde  A})$. Each iteration requires a single matrix-vector product with $\bv{\widetilde A}$, yielding total runtime $\widetilde O\left( \frac{n}{\epsilon^5}\right )$. Since $\bv{v}_1$ is unknown, one typically  initializes the power method with a random vector, which has non-negligible inner product with $\bv{v}_1$ with high probability. To derandomize this approach, we simply try $n$ orthogonal starting vectors, e.g., the standard basis vectors, of which at least one must have large inner product with $\bv v_1$. This simple brute force approach yields total runtime $\widetilde O\left( \frac{n^2}{\epsilon^5}\right )$, which, for large enough $\epsilon$, is  $o(n^\omega)$.

To approximate more than just the top singular value, one would typically use a block power or Krylov method. However, derandomization is difficult here -- unlike in the single vector case, it is not clear how to pick a small set of deterministic starting blocks for which at least one gives a good approximation. Thus, we instead use  \emph{deflation} \cite{Saad:2011wv,Allen-Zhu:2016vf}. We first use our deterministic power method to compute $\bv{\widetilde v}_1$ with $\norm{\bv{\widetilde A} \bv{\widetilde v}_1}_2 \approx \sigma_1(\bv{\widetilde A})$. We then run the method again on the deflated matrix $\bv{\widetilde A}(\bv I - \bv{\widetilde v}_1 \bv{\widetilde v}_1^T)$, which we can still multiply by in $O\left (\frac{n}{\epsilon^4} \right )$ time. This second run outputs $\bv{\widetilde v}_2$ which is orthogonal to $\bv{\widetilde v}_1$ and which has $ \norm{\bv{\widetilde A}\bv{\widetilde v}_2}_2 \approx \sigma_1(\bv{\widetilde A}(\bv I - \bv{\widetilde v}_1 \bv{\widetilde v}_1^T))$. Since $\bv{\widetilde v}_1$ gives a good approximation to $\sigma_1(\bv{\widetilde A})$, we  then argue that $\sigma_1(\bv{\widetilde A}(\bv I - \bv{\widetilde v}_1 \bv{\widetilde v}_1^T)) \approx \sigma_2(\bv{\widetilde A})$ and so $\norm{\bv{\widetilde A} \bv{\widetilde v}_2}_2 \approx \sigma_2(\bv{\widetilde A})$. We repeat this process to approximate all large singular values of $\bv{\widetilde A}$. The key challenge is bounding the accumulation of error that occurs at each step.

In Section \ref{sec:psdtesting} we apply the above derandomized power method approach to the problem of testing whether $\bv{A}$ is PSD or has at least on large negative eigenvalue $< -\epsilon \max(n,\norm{\bv A}_1)$. We reduce this problem to approximating the top singular value of a shifted version of $\bv{A}$.

Finally, in Section \ref{sec:highaccuracy}, we consider approximating $\sigma_1(\bv{A})$ to high accuracy -- e.g., to relative error $(1 \pm \epsilon)$ with $\epsilon = 1/\poly(n)$, under the assumption that $\bv{\sigma}_1(\bv{A}) \ge \alpha \max(n,\norm{\bv A}_1)$ for some $\alpha$. Here, one cannot simply approximate $\sigma_1(\bv{\widetilde A})$ in place of $\sigma_1(\bv A)$, since the error due to sparsification is too large. Instead, we use our deflation approach to compute a `coarse' approximate set of top singular vectors for $\bv{\widetilde A}$. We then use these singular vectors to initialize a block Krylov method run on $\bv{A}$ itself, which we argue converges in $O\left(\frac{\log(n/\epsilon)}{\poly(\alpha)}\right)$ iterations, giving total run time $\widetilde O\left(\frac{n^2 \log(1/\epsilon)}{\poly(\alpha)}\right)$. The key idea in showing convergence is to argue that since $\bv{\sigma}_1(\bv{A}) \ge \alpha \max(n,\norm{\bv A}_1)$ by assumption, we have $\bv{\sigma}_{2/\alpha}(\bv{A}) \le \alpha/2  \norm{\bv A}_1 \le \sigma_1(\bv{A})/2$. So, there must be at least some singular value $\sigma_p(\bv{A})$ with $p \le 2/\alpha$ that is larger than $\sigma_{p+1}(\bv A)$ by $\alpha^2/4 \cdot \max(n,\norm{\bv A}_1)$. The presence of this spectral gap allows us to (1) argue that the first $p$ approximate singular vectors output by our deflation approach applied to $\bv{\widetilde A}$ have non-neglible inner product with the true top $k$ singular vectors of $\bv{A}$ and (2) that the block Krylov method initialized with these vectors converges rapidly to a high accuracy approximation to $\sigma_1(\bv{A})$, via known gap-dependent convergence bounds \cite{musco2015randomized}.

\subsection{Deterministic Singular Value Approximation}\label{sec:singvalapprox}

We start by giving  a deterministic algorithm to approximate all  singular values of a symmetric bounded entry matrix $\Ab$ up to error $\pm \epsilon \max(n,\|\Ab \|_1)$ in $\widetilde O(n^2/\poly(\epsilon))$ time. Our main result appears as Theorem \ref{thm:approxSVD}.

As discussed,  we first deterministically compute a  sparsified matrix $\Atilde$ from $\Ab$ with $\norm{\bv A - \Atilde}_2 \le \epsilon \max(n,\norm{\bv A}_1)$. We then use a simple brute-force derandomization of the classical power method to approximate to the top singular value of $\Atilde$ (Lemma \ref{lem:top_eig}).  To approximate the rest of $\bv{\widetilde A}$'s singular values, we  repeatedly deflate the matrix and compute the top singular value of the deflated matrix (Lemma \ref{lem:all_sing_val}). By Weyl's inequality, the singular values of $\Atilde$ approximate those of $\bv{A}$ up to $\pm \epsilon \max(n,\| \Ab\|_1)$ error. Thus, we can argue that our approximations will approximate to the singular values of $\Ab$ itself, yielding Theorem \ref{thm:approxSVD}.
We first present the derandomized power method that will be applied to $\bv{\widetilde A}$  in Algorithm \ref{alg:pow_it}.

\begin{algorithm}[H] 
\caption{Deterministic Power Method for Largest Singular Value Estimation}
\label{alg:pow_it}
\begin{algorithmic}[1]
\Require{$\bv A \in \mathbb{R}^{n\times n}$, error parameter $\epsilon \in (0,1)$}
\State Initialize $\sigmatilde=0$, $t=\frac{c \log (n/\epsilon)}{\epsilon}$ where $c$ is a sufficiently large constant.
\For{$k=1\to n$}
    \State $\bv{y}_k=\bv{e}_{k}$ where $\bv{e}_k$ is the k\textsuperscript{th} standard basis vector.
    \For{$j=1 \to t$}
        \State $\bv{y}_k \gets (\bv{A}\bv{A}^T)\bv{y}_k$.
        \State $\bv{y}_k \gets \frac{\bv{y}_k}{\|\bv{y}_k\|_2}$.
    \EndFor
    \If{$\|\Ab^T\bv{y}_k\|_2 \geq \sigmatilde$}
        \State $\bv{z} \gets \bv{y}_k$.
        \State $\sigmatilde \gets \|\Ab^T\bv{z}\|_2$.
    \EndIf
\EndFor
\State \Return {\ $\bv{z}$}
\end{algorithmic}
\end{algorithm}

Let $\bv{A} \in \mathbb{R}^{n \times n}$ be a matrix with its SVD given by $\Ab=\bv{U}\bv{\Sigma} \bv{V}^T$ where $\bv{U}, \bv{V} \in \R^{n \times n}$ are matrices with orthonormal columns containing the left and right singular vectors respectively and $\bv{\Sigma}$ is a diagonal matrix containing the  singular values $\sigma_1(\bv{A}) \geq \sigma_2(\bv{A}) \geq \ldots \geq \sigma_n(\bv{A}) \geq 0$. Observe that any vector $\bv{y} \in \mathbb{R}^{n}$ can be written as $\bv{y}=\sum_{i=1}^n c_i \bv{u}_i$ where $c_i$ are scalar coefficients and $\bv{u}_i$ are the columns of $\bv{U}$. Then, we have the following well-known result which shows that  when the starting unit vector $\bv{y}$ has a large enough inner product with $\bv{u}_1$ i.e., $|c_1| \geq \frac{1}{\sqrt{n}}$, power iterations on $\bv{A}\bv{A}^T$ converge to a $1-\epsilon$ approximation to $\sigma_1(\bv{A})$ within $\widetilde O(1/\epsilon)$ iterations. We will  leverage this lemma to prove the correctness and runtime of our deterministic power method, Algorithm~\ref{alg:pow_it}.
 
\begin{lemma}[Power iterations -- gap independent bound]\label{lem:power_it}
 Let $\bv{A} \in \mathbb{R}^{n \times n}$ be a matrix with largest singular value $\sigma_1(\bv{A})$ and left singular vectors $\bv{u}_1,\ldots,\bv{u}_n$. Let $\bv{y}=\sum_{i=1}^n c_i \bv{u}_i$ be a unit vector where $|c_1| \geq \frac{1}{\sqrt{n}}$. Then, for  $t = O\big( \frac{\log (n/\epsilon)}{\epsilon}\big)$, if we set $\bv{z}=\frac{(\bv{A}\bv{A}^T)^t\bv{y}}{\|(\bv{A}\bv{A}^T)^t\bv{y} \|_2}$, 
 \begin{align*}
     \|\bv{A}^T\bv{z} \|_2 \geq (1-\epsilon)\sigma_1(\bv{A}).
 \end{align*}
\end{lemma}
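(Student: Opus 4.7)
The plan is to carry out the standard gap-independent analysis of the power method on the PSD matrix $\bv{A}\bv{A}^T$, whose eigenpairs are exactly $(\sigma_i(\bv A)^2, \bv u_i)$. First I would expand the iterate in the singular basis: since $(\bv{A}\bv{A}^T)^t \bv y = \sum_{i=1}^n c_i \sigma_i^{2t} \bv u_i$, after normalization we obtain
\[
\|\bv A^T \bv z\|_2^2 \;=\; \bv z^T \bv A \bv A^T \bv z \;=\; \frac{\sum_{i=1}^n c_i^2 \sigma_i^{4t+2}}{\sum_{i=1}^n c_i^2 \sigma_i^{4t}}.
\]
So the task reduces to showing this Rayleigh quotient is at least $(1-\epsilon)^2\sigma_1^2$ once $t = \Theta(\log(n/\epsilon)/\epsilon)$.

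Next I would split the indices into ``large'' singular values $L = \{i : \sigma_i \ge (1-\epsilon)\sigma_1\}$ and ``small'' ones $S = [n]\setminus L$. For indices in $L$ we have $\sigma_i^{4t+2} \ge (1-\epsilon)^2 \sigma_1^2 \cdot \sigma_i^{4t}$, which gives the clean lower bound
\[
\sum_{i=1}^n c_i^2 \sigma_i^{4t+2} \;\ge\; (1-\epsilon)^2 \sigma_1^2 \sum_{i \in L} c_i^2 \sigma_i^{4t}.
\]
Therefore it suffices to prove that the ``small'' indices contribute at most an $\epsilon$-fraction of the denominator, i.e., $\sum_{i \in S} c_i^2 \sigma_i^{4t} \le \epsilon \cdot \sum_{i=1}^n c_i^2 \sigma_i^{4t}$.

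The hypothesis $|c_1| \ge 1/\sqrt n$ yields a lower bound $\sum_i c_i^2 \sigma_i^{4t} \ge c_1^2 \sigma_1^{4t} \ge \sigma_1^{4t}/n$ on the denominator. For the numerator of the ``bad'' term, I would use $\sigma_i \le (1-\epsilon)\sigma_1$ for $i \in S$ and $\sum_i c_i^2 \le 1$ (since $\bv y$ is a unit vector) to conclude
\[
\sum_{i \in S} c_i^2 \sigma_i^{4t} \;\le\; (1-\epsilon)^{4t}\,\sigma_1^{4t} \;\le\; e^{-4\epsilon t}\,\sigma_1^{4t}.
\]
Combining these two bounds yields a ratio at most $n\, e^{-4\epsilon t}$, which is at most $\epsilon$ as soon as $t \ge \frac{1}{4\epsilon}\log(n/\epsilon)$, matching the stated choice of $t$ up to constants.

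Finally I would assemble the pieces: with the ``small'' indices contributing at most an $\epsilon$-fraction of the denominator, the Rayleigh quotient satisfies $\|\bv A^T \bv z\|_2^2 \ge (1-\epsilon)^2(1-\epsilon)\sigma_1^2$, which after taking square roots and re-scaling $\epsilon$ by a constant gives the stated bound $\|\bv A^T \bv z\|_2 \ge (1-\epsilon)\sigma_1(\bv A)$. There is no real obstacle here; the only minor care points are handling the degenerate cases $\sigma_1 = 0$ (trivial) and multiplicities in the top singular value (the analysis goes through verbatim for any orthonormal singular basis satisfying the hypothesis on $c_1$), and ensuring that the exponent-juggling between $(1-\epsilon)^{4t}$ and $e^{-4\epsilon t}$ uses the standard inequality $1-x \le e^{-x}$.
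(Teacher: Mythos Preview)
Your proof is correct and follows essentially the same approach as the paper's: both expand the iterate in the left singular basis, split indices by a threshold on $\sigma_i/\sigma_1$, use $|c_1|\ge 1/\sqrt{n}$ to lower bound the ``large'' component, use $(1-\epsilon)^{\Theta(t)}$ decay to bound the ``small'' component, and combine to get the Rayleigh quotient bound. The only cosmetic differences are that the paper sets the threshold at $\sigma_i^2 > (1-\epsilon)\sigma_1^2$ rather than $\sigma_i \ge (1-\epsilon)\sigma_1$ and works with the unnormalized vector $\bv r$ split as $\bv r_1+\bv r_2$ instead of writing out the Rayleigh quotient as a ratio of sums, but the argument is the same.
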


\begin{proof} We prove Lemma \ref{lem:power_it} for completeness.
Let $\bv{r}=(\bv{A}\bv{A}^T)^t\bv{y}$. Then, $\bv{z}=\frac{\bv{r}}{\|\bv{r} \|_2}$. First observe that $\bv{A}\bv{A}^T=\sum_{i=1}^n \sigma^2_i(\bv{A})\bv{u}_i\bv{u}_i^T$. Thus, $\bv{r}=\sum_{i=1}^n c_i (\sigma_i(\bv{A}))^{2t} \bv{u}_i$. Let $k \in [n]$ be the smallest $k$ such that $\sigma^2_k(\bv{A}) \leq (1-\epsilon)\sigma^2_1(\bv{A})$. Observe that if no such $k$ exists, then for any unit vector $\bv z$, $\norm{\bv A^T \bv z}_2 \ge \sigma_n(\bv A) > (1-\epsilon) \sigma_1(\bv A)$, and thus the lemma holds trivially.

Let $\bv{r}=\bv{r}_1+\bv{r}_2$ where $\bv{r}_1=\sum_{i=1}^{k-1}c_i\sigma^{2t}_i(\bv{A})\bv{u}_i$ and $\bv{r}_2=\sum_{i=k}^{n}c_i\sigma^{2t}_i(\bv{A})\bv{u}_i$. Then:
\begin{align*}
    \frac{\|\bv{r}_2\|_2^2}{\|\bv{r}_1\|_2^2} =\frac{\sum_{i=k}^{n}c^2_i\sigma^{4t}_i(\bv{A})}{\sum_{i=1}^{k-1}c^2_i\sigma^{4t}_i(\bv{A})} \leq \frac{\sum_{i=k}^{n}c^2_i\sigma^{4t}_i(\bv{A})}{c^2_1\sigma^{4t}_1(\bv{A})} \leq \sum_{i=k}^{n} \bigg(\frac{c_i}{c_1}\bigg)^2 \bigg(\frac{\sigma_i(\bv{A})}{\sigma_1(\bv{A})} \bigg)^{4t}.
\end{align*}
Since $c_i \leq 1$ for all $i \in [n]$, and since by assumption $|c_1| \ge \frac{1}{\sqrt{n}}$, $\left |\frac{c_i}{c_1} \right | \leq \sqrt{n}$ for all $i \in [n]$. Also, for any $i \geq k$, $\big(\frac{\sigma_i(\bv{A})}{\sigma_1(\bv{A})} \big)^2 \leq (1-\epsilon)$ and $(1-\epsilon)^{2t} \leq \delta$ for $t \geq O\big(\frac{\log (1/\delta)}{\epsilon} \big)$. Thus, we have $\frac{\|\bv{r}_2\|_2^2}{\|\bv{r}_1\|_2^2} \leq n^2 \delta$. Setting $\delta=\frac{\epsilon^2}{n^2}$, we get $\frac{\|\bv{r}_2\|_2^2}{\|\bv{r}_1\|_2^2} \leq \epsilon^2$.

Next, observe that $\|\bv{r} \|^2_2=\|\bv{r}_1 \|^2_2+\|\bv{r}_2 \|^2_2$ which implies that $\| \bv{r}\|_2^2 \leq (1+\epsilon^2) \| \bv{r}_1\|_2^2$ or $\| \bv{r}_1\|_2^2 \geq (1-\epsilon)\| \bv{r}\|_2^2$. Thus, using the fact that $\bv{r}_1^T\bv{A}\bv{A}^T\bv{r}_2=0$ and $\bv{r}_2^T\bv{A}\bv{A}^T\bv{r}_2 \geq 0$, and the fact that $\sigma_{i}^2(\bv A) \ge (1-\epsilon) \sigma_1^2(\bv A)$ for $i < k$, we get:

\begin{align*}   \bv{r}^T\bv{A}\bv{A}^T\bv{r}=\bv{r}_1^T\bv{A}\bv{A}^T\bv{r}_1+\bv{r}_2^T\bv{A}\bv{A}^T\bv{r}_2 
&\geq  \bv{r}_1^T\bv{A}\bv{A}^T\bv{r}_1 \\
&\ge \sigma_{k-1}^2(\bv A) \cdot \norm{\bv{r}_1}_2^2\\
&\ge (1-\epsilon)^2 \cdot \sigma_1^2  (\bv A) \cdot \norm{\bv{r}}_2^2.
\end{align*}
Finally dividing both sides of the above equation by $\|\bv{r} \|^2_2$, we get $\bv{z}^T\bv{A}\bv{A}^T\bv{z} \geq (1-\epsilon)^2\sigma^2_1(\bv{A})$. Taking square root on both sides gives us, $\|\bv{A}^T\bv{z} \|_2 \geq (1-\epsilon)\sigma_1(\bv{A})$.
\end{proof}
Next we prove the correctness of Algorithm~\ref{alg:pow_it} using Lemma~\ref{lem:power_it}. The idea is to run power iterations using all $n$ basis vectors as starting vectors. At least one of these vectors will have high inner product with $\bv{u}_1$, and so so this starting vector will give us a good approximation to $\sigma_1(\bv{A})$ by Lemma~\ref{lem:power_it}. We also prove that the squared singular values of the deflated matrix $\bv{A}-\bv{z}\bv{z}^T\bv{A}$ are close to the squared singular values of the original matrix up to an additive error of $\epsilon \cdot \sigma^2_1(\bv{A})$. This will be used to as part of the  correctness proof for our main algorithm for singular value approximation.
\begin{lemma}\label{lem:top_eig}
Let $\bv{z}$ be the output of Algorithm~\ref{alg:pow_it} for some matrix $\bv A \in \mathbb{R}^{n\times n}$ with singular values $\sigma_1(\bv{
A}) \geq \sigma_2(\bv{
A}) \geq \ldots \geq \sigma_n(\bv{
A})$ and error parameter $\epsilon \in (0,1)$ as input. Then:
\begin{equation*}
    (1-\epsilon)\sigma_1(\bv{A}) \leq \|\bv{A}^T\bv{z} \|_2 \leq \sigma_1(\bv{A}).
\end{equation*}
Further, for any $i \in [n]$:
\begin{equation*}
    \sigma_i^2 (\Ab-\zb\zb^T \Ab) \leq \sigma_{i+1}^2(\Ab) + \epsilon \sigma^2_1(\Ab).
\end{equation*}
\end{lemma}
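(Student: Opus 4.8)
The plan is to treat the two claims separately: the spectral-norm sandwich is essentially bookkeeping on top of Lemma~\ref{lem:power_it}, while the deflation inequality is a short gap-free perturbation bound that I would prove via a trace/interlacing telescoping argument.

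\textbf{The spectral-norm sandwich.} The upper bound is immediate: Algorithm~\ref{alg:pow_it} returns one of the normalized vectors $\bv{y}_k$, so $\bv{z}$ is a unit vector and $\|\Ab^T\bv{z}\|_2\le\|\Ab^T\|_2=\sigma_1(\Ab)$. For the lower bound, since the top left singular vector $\bv{u}_1$ is a unit vector we have $\sum_{k}\bv{u}_1(k)^2=1$, so some coordinate $k^\ast$ satisfies $|\bv{u}_1(k^\ast)|\ge 1/\sqrt n$; expanding $\bv{e}_{k^\ast}=\sum_i c_i\bv{u}_i$ in the left singular basis of $\Ab$ gives $|c_1|=|\bv u_1(k^\ast)|\ge 1/\sqrt n$. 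Normalizing after each step of the inner loop is equivalent to a single normalization at the end, so the iterate produced from the start vector $\bv{e}_{k^\ast}$ equals $(\Ab\Ab^T)^t\bv{e}_{k^\ast}/\|(\Ab\Ab^T)^t\bv{e}_{k^\ast}\|_2$ with $t=\Theta(\log(n/\epsilon)/\epsilon)$; Lemma~\ref{lem:power_it} then yields $\|\Ab^T\bv{y}_{k^\ast}\|_2\ge(1-\epsilon)\sigma_1(\Ab)$ (in fact, taking the constant in $t$ large enough, one may assume $\|\Ab^T\bv{y}_{k^\ast}\|_2^2\ge(1-\epsilon/2)\sigma_1^2(\Ab)$, which I use below). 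Since the loop keeps the $\bv{y}_k$ with largest $\|\Ab^T\bv{y}_k\|_2$, the returned $\bv{z}$ satisfies $\|\Ab^T\bv{z}\|_2\ge\|\Ab^T\bv{y}_{k^\ast}\|_2\ge(1-\epsilon)\sigma_1(\Ab)$. If $\sigma_1(\Ab)=0$ the claim is trivial, and when $\sigma_1(\Ab)>0$ the fact that $c_1\ne0$ forces every intermediate iterate to be nonzero, so no normalization degenerates.

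\textbf{The deflation bound.} Write $\bv{M}=\Ab\Ab^T$ and $\bv{P}=\Ib-\zb\zb^T$. Then $\Ab-\zb\zb^T\Ab=\bv P\Ab$, and since $\bv P^2=\bv P$ we get $\sigma_i^2(\Ab-\zb\zb^T\Ab)=\lambda_i(\bv P\bv M\bv P)$, while $\sigma_j^2(\Ab)=\lambda_j(\bv M)$. Set $\alpha=\zb^T\bv M\zb=\|\Ab^T\zb\|_2^2\ge(1-\epsilon/2)\sigma_1^2(\Ab)$, so that $\lambda_1(\bv M)-\alpha\le(\epsilon/2)\sigma_1^2(\Ab)$. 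The structural point is that $\zb$ lies in the kernel of $\bv P\bv M\bv P$ while its range lies in $\zb^\perp$; hence $\lambda_n(\bv P\bv M\bv P)=0$ and, for $i\le n-1$, $\lambda_i(\bv P\bv M\bv P)=\mu_i$, the $i$-th eigenvalue of the compression $\bv K$ of $\bv M$ onto $\zb^\perp$ (the case $i=n$ of the lemma is immediate since $\bv P\bv M\bv P$ has rank at most $n-1$). Fix $i\le n-1$ and let $\bv w_1,\dots,\bv w_i\in\zb^\perp$ be orthonormal eigenvectors of $\bv K$, so $\bv w_j^T\bv M\bv w_j=\mu_j$. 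I would then apply Cauchy interlacing twice: to the $(i+1)$-dimensional compression $\bv C$ of $\bv M$ onto $\operatorname{span}\{\zb,\bv w_1,\dots,\bv w_i\}$, giving $\alpha+\sum_{j=1}^i\mu_j=\tr(\bv C)\le\sum_{j=1}^{i+1}\lambda_j(\bv M)$; and to the codimension-$1$ compression $\bv K$, giving $\mu_j\ge\lambda_{j+1}(\bv M)$ and hence $\sum_{j=1}^{i-1}\mu_j\ge\sum_{j=2}^{i}\lambda_j(\bv M)$. Subtracting these inequalities yields $\alpha+\mu_i\le\lambda_1(\bv M)+\lambda_{i+1}(\bv M)$, i.e.\ $\mu_i\le\lambda_{i+1}(\bv M)+(\lambda_1(\bv M)-\alpha)\le\sigma_{i+1}^2(\Ab)+(\epsilon/2)\sigma_1^2(\Ab)\le\sigma_{i+1}^2(\Ab)+\epsilon\sigma_1^2(\Ab)$, which is the claim.

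\textbf{Expected main obstacle.} The delicate part is obtaining $\sigma_{i+1}^2$ rather than $\sigma_i^2$ on the right: the trivial monotonicity $\bv P\bv M\bv P\preceq\bv M$ gives only $\lambda_i(\bv P\bv M\bv P)\le\lambda_i(\bv M)$, and a direct argument that $\zb$ is close to $\bv u_1$ leaks a spectral-gap-dependent $\sqrt\epsilon$-type loss (indeed when $\sigma_1\approx\sigma_2$ the vector $\zb$ need not be near $\bv u_1$ at all). The required "$+1$ index shift" has to be extracted from the single scalar inequality $\alpha\ge(1-O(\epsilon))\sigma_1^2(\Ab)$, and the telescoping trace/interlacing computation above is the clean, gap-free way to do so. A minor secondary matter is keeping the $\epsilon$ consistent between the two parts of the statement, which is why I run the power iteration to accuracy $\epsilon/4$ (still $\Theta(\log(n/\epsilon)/\epsilon)$ iterations) so that $\alpha\ge(1-\epsilon/2)\sigma_1^2(\Ab)$.
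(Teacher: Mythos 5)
Your proof is correct. The first half (the sandwich on $\norm{\Ab^T\zb}_2$) is essentially identical to the paper's argument: some standard basis vector has inner product at least $1/\sqrt{n}$ with $\bv u_1$, Lemma~\ref{lem:power_it} applies to that starting vector, and the algorithm's max over $k$ finishes it. The deflation bound, however, is proved by a genuinely different route. The paper argues globally via the Pythagorean theorem on Frobenius norms: $\norm{\Ab-\zb\zb^T\Ab}_F^2 = \norm{\Ab}_F^2 - \norm{\Ab^T\zb}_2^2 \le \norm{\Ab - \Ab_1}_F^2 + 2\epsilon\sigma_1^2(\Ab)$, and then combines this with the min-max interlacing bound $\sigma_j(\Ab-\zb\zb^T\Ab)\ge\sigma_{j+1}(\Ab)$ to isolate the $i$-th term from the full sum of squared singular values. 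You instead work locally with $\bv M = \Ab\Ab^T$: one application of Cauchy interlacing to the $(i+1)$-dimensional compression spanned by $\zb$ and the top $i$ eigenvectors of the deflated operator (giving $\alpha+\sum_{j\le i}\mu_j\le\sum_{j\le i+1}\lambda_j(\bv M)$), and one to the codimension-one compression onto $\zb^\perp$ (giving $\mu_j\ge\lambda_{j+1}(\bv M)$, which is the same interlacing fact the paper uses). Subtracting yields $\mu_i\le\lambda_{i+1}(\bv M)+(\lambda_1(\bv M)-\alpha)$ directly. The two arguments are cousins — both extract the index shift from interlacing and pay only the scalar deficit $\lambda_1(\bv M)-\norm{\Ab^T\zb}_2^2$ — but yours avoids the full Frobenius norm and involves only the top $(i+1)$-dimensional subspace, while the paper's is slightly more elementary (one Pythagorean identity plus one interlacing step). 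Your handling of the $\epsilon$ bookkeeping (running the power method to accuracy $\epsilon/4$ so that $\alpha\ge(1-\epsilon/2)\sigma_1^2(\Ab)$) matches the paper's "adjust $\epsilon$ by a factor of $2$" and is fine since $t=\Theta(\log(n/\epsilon)/\epsilon)$ absorbs the constant.
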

\begin{proof}
    Let $\bv u_1$ be the left singular vector corresponding to the largest singular value of $\bv{A}$. Observe that since $\norm{\bv u_1}_2 = 1$, there exists at least one standard basis vector $\bv{e}_k$ such that $|\bv{e}_k^T\cdot \bv{u}_1| \geq \frac{1}{\sqrt{n}}$. 
     After $t$ iterations of the inner loop of Algorithm~\ref{alg:pow_it}, we have $\bv{y}_k=\frac{(\bv{A}\bv{A}^T)^t\bv{e}_k}{\|(\bv{A}\bv{A}^T)^t\bv{e}_k \|_2}$. Using Lemma~\ref{lem:power_it}, we thus have $\|\bv{A}^T\bv{y}_k\|_2^2 \geq (1-\epsilon)\sigma_1(\bv{A})$. Also since $\bv{y}_k$ is a unit vector, we trivially have $\|\Ab^T\bv{y}_k\|_2 \leq \sigma_1(\bv{A})$ for all $k$. Since for $\bv z$ output by Algorithm \ref{alg:pow_it}, $\|\Ab^T\bv{z}\|_2=\max_{k} \|\Ab^T\bv{y}_k\|_2$, we thus have:
    \begin{equation}\label{Eq:sing_val}
        (1-\epsilon)\sigma_1(\bv{A}) \leq \| \bv{A}^T\bv{z}\|_2 \leq \sigma_1(\bv{A}),
    \end{equation}
    giving our first error bound.

We now prove the second bound on $\sigma_i^2(\bv{A} - \bv{zz}^T \bv A)$. Using the Pythagorean theorem, $\|\bv{A}-\bv{z}\bv{z}^T \Ab\|_F^2 = \|\Ab\|_F^2 - \|\zb\zb^T \Ab\|_F^2$. We have $\|\zb\zb^T \Ab\|_F^2 = \tr(\bv{zz}^T \bv{AA}^T \bv{zz}^T) = \bv{z}^T \bv{AA}^T \bv{z} = \norm{\bv A^T \bv z}_2^2$. We thus have $\|\Ab-\zb\zb^T \Ab\|_F^2 = \norm{\bv A}_F^2 - \norm{\bv A^T \bv z}_2^2$.
Combining this with~\eqref{Eq:sing_val}:
    \begin{align}
        \|\Ab-\zb\zb^T \Ab\|_F^2 
        &\leq \|\Ab\|_F^2 - (1-\epsilon)^2 \sigma_1^2(\Ab) \notag\\
        &\le \|\Ab\|_F^2- \sigma_1^2(\Ab) +2\epsilon \sigma_1^2(\Ab) \notag \\
        \label{eq:A-zzA_F}
        &= \|\Ab-\Ab_1\|_F^2 + 2\epsilon\sigma_1^2(\Ab),
    \end{align}
    where $\bv{A}_1 = \bv{u}_1 \bv{u}_1^T \bv{A}$ is the best rank-$1$ approximation to $\bv{A}$ in the Frobenius norm, with  $\|\Ab-\Ab_1\|_F^2 = \sum_{i=2}^n \sigma_i^2(\Ab)$. So, we have:
    \begin{align*}
        \sum_{j=1}^n \sigma_j^2 (\Ab-\zb\zb^T \Ab)=\|\Ab-\zb\zb^T \Ab\|_F^2 &\leq \|\Ab-\Ab_1\|_F^2 + 2\epsilon \sigma^2_1(\Ab) = \sum_{j=2}^n \sigma_j^2(\Ab) + 2\epsilon \sigma^2_1(\Ab).
    \end{align*}
Simply subtracting $\sigma_n^2(\bv A - \bv{zz}^T\bv A)$ from the lefthand side and pulling $\sigma_i^2(\bv{A}-\bv{zz}^T)$ out of the summation for some $i \in [n]$ gives that
\begin{align}\label{eq:sum_of_deflated_singular_vals}
        \sigma_i^2 (\Ab-\zb\zb^T \Ab) + \sum_{j\neq i, j\in[n-1]} \sigma_j^2 (\Ab-\zb\zb^T \Ab) &\leq \sum_{j=2}^n \sigma_j^2(\Ab) + 2\epsilon \sigma^2_1(\Ab).
    \end{align}
    Using the eigenvalue min-max theorem (Fact \ref{fact:minimax}), we have for all $j \in [n-1]$, $\sigma_j(\Ab-\zb\zb^T \Ab) \geq \sigma_{j+1}(\Ab)$. Using this fact in \eqref{eq:sum_of_deflated_singular_vals},
    \begin{align*}
        \sigma_i^2 (\Ab-\zb\zb^T \Ab) + \sum_{j\neq i, j\in[n-1]} \sigma_{j+1}^2 (\Ab) &\leq \sum_{j=2}^n \sigma_j^2(\Ab) + 2\epsilon \sigma^2_1(\Ab),
    \end{align*}
    which implies that $\sigma_i^2 (\Ab-\zb\zb^T \Ab) \leq \sigma_{i+1}^2(\Ab) + 2\epsilon \sigma^2_1(\Ab)$. This completes the proof after adjusting $\epsilon$ by a factor of $2$.
\end{proof}

We now state as Algorithm \ref{alg:pow_it2} our main deterministic algorithm for estimating $k$ singular values of a matrix by leveraging Algorithm~\ref{alg:pow_it} as a subroutine. We will then show that by applying Algorithm \ref{alg:pow_it2} to a sparse spectral approximation $\bv{\widetilde A}$ of $\bv{A}$, we can estimate the singular values of $\bv{A}$ up to error $\pm \epsilon \max(n,\norm{\bv A}_1)$ in $\widetilde O(n^2/\poly(\epsilon))$ time.

\begin{algorithm}[H] 
\caption{Deterministic Singular Value Estimation via Deflation}
\label{alg:pow_it2}
\begin{algorithmic}[1]
\Require{$\bv A \in \mathbb{R}^{n\times n}$, error parameter $\epsilon \in (0,1)$, number of singular values to be estimated $k \le n$}
\State $\bv{A}^{(1)} = \Ab$.
\For{$i=1\to k $}
    \State $\bv{z}_i \gets$  output of Algorithm~\ref{alg:pow_it} with input $\bv{A}^{(i)}$, error parameter $\epsilon$.
    \State $\Ab^{(i+1)} \gets \bv{A}^{(i)} - \bv{z}_i\bv{z}_i^T\bv{A}$.
\EndFor
\State \Return {$\zb_1, \zb_2, \ldots, \zb_{k}$}
\end{algorithmic}
\end{algorithm}

\begin{lemma}\label{lem:all_sing_val}
Let $\{\zb_1, \zb_2, \ldots \zb_{k} \}$ be the output of Algorithm~\ref{alg:pow_it2} for some matrix $\bv A \in \mathbb{R}^{n\times n}$ with singular values $\sigma_1(\bv{
A}) \geq \sigma_2(\bv{
A}) \geq \ldots \geq \sigma_n(\bv{
A})$ and error parameter $\epsilon \in (0,1)$ as input. Then, $\{\zb_1, \zb_2, \ldots \zb_{k} \}$ are orthogonal unit vectors and for any $i \in [k]$:
\begin{equation*}
    (1-\epsilon)\sigma_i(\bv{
A}) \leq \|\bv{A}^T\bv{z}_i \|_2 \leq \sigma_i(\bv{A}) +\sigma_1(\bv{A})\sqrt{i \cdot \epsilon}
\end{equation*}
\end{lemma}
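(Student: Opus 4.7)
The plan is to prove all three conclusions --- mutual orthogonality of the $\bv z_i$, the lower bound $(1-\epsilon)\sigma_i(\bv A) \le \|\bv A^T \bv z_i\|_2$, and the upper bound $\|\bv A^T \bv z_i\|_2 \le \sigma_i(\bv A) + \sigma_1(\bv A)\sqrt{i\epsilon}$ --- by induction on $i$, with Lemma~\ref{lem:top_eig} serving as the workhorse at each step. The key preliminary observation is that if $\bv z_1,\ldots,\bv z_{i-1}$ are orthonormal, then unrolling step 4 of Algorithm~\ref{alg:pow_it2} gives $\bv A^{(i)} = (\bv I - \bv Z_{i-1}\bv Z_{i-1}^T)\bv A$, where $\bv Z_{i-1} \in \R^{n\times (i-1)}$ has columns $\bv z_1,\ldots,\bv z_{i-1}$. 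Moreover, since $\bv z_j \bv z_j^T \bv A = \bv z_j \bv z_j^T \bv A^{(j)}$ (using $\bv z_j \perp \bv z_\ell$ for $\ell<j$), the algorithm's update is equivalent to $\bv A^{(j+1)} = (\bv I - \bv z_j \bv z_j^T)\bv A^{(j)}$, which matches the deflation form analyzed in Lemma~\ref{lem:top_eig}.

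For orthogonality, Algorithm~\ref{alg:pow_it} run on $\bv A^{(i)}$ multiplies each standard basis vector at least once by $\bv A^{(i)}(\bv A^{(i)})^T = (\bv I - \bv Z_{i-1}\bv Z_{i-1}^T)\bv A\bv A^T(\bv I - \bv Z_{i-1}\bv Z_{i-1}^T)$, so every candidate $\bv y_k$ (and in particular the returned $\bv z_i$) lies in the range of $\bv I - \bv Z_{i-1}\bv Z_{i-1}^T$, and is therefore orthogonal to $\bv z_1,\ldots,\bv z_{i-1}$. This orthogonality gives $(\bv A^{(i)})^T \bv z_i = \bv A^T(\bv I - \bv Z_{i-1}\bv Z_{i-1}^T)\bv z_i = \bv A^T \bv z_i$, so applying the first inequality of Lemma~\ref{lem:top_eig} to $\bv A^{(i)}$ yields $(1-\epsilon)\sigma_1(\bv A^{(i)}) \le \|\bv A^T \bv z_i\|_2 \le \sigma_1(\bv A^{(i)})$. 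The lower bound then follows from the Weyl-type inequality $\sigma_{j+r}(\bv M) \le \sigma_j(\bv M - \bv B)$ for any rank-$r$ perturbation $\bv B$, applied with $\bv M = \bv A$, $\bv B = \bv Z_{i-1}\bv Z_{i-1}^T\bv A$, $j=1$, $r = i-1$, which gives $\sigma_i(\bv A) \le \sigma_1(\bv A^{(i)})$.

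The main obstacle is the upper bound, which requires controlling the accumulation of approximation error across the $i$ deflation steps. The plan is to iterate the second inequality of Lemma~\ref{lem:top_eig}, which applied to $\bv A^{(k)}$ (using the equivalence noted above) reads $\sigma_j^2(\bv A^{(k+1)}) \le \sigma_{j+1}^2(\bv A^{(k)}) + \epsilon\,\sigma_1^2(\bv A^{(k)})$ for every $j,k$. Telescoping the chain $\sigma_1^2(\bv A^{(i)}) \to \sigma_2^2(\bv A^{(i-1)}) \to \sigma_3^2(\bv A^{(i-2)}) \to \cdots \to \sigma_i^2(\bv A^{(1)})$, with each step contributing an $\epsilon\,\sigma_1^2(\bv A^{(k)})$ error term, gives
\begin{align*}
\sigma_1^2(\bv A^{(i)}) \;\le\; \sigma_i^2(\bv A) + \epsilon\sum_{k=1}^{i-1} \sigma_1^2(\bv A^{(k)}).
\end{align*}
Since each $\bv A^{(k)}$ is a projection applied to $\bv A$, $\sigma_1(\bv A^{(k)}) \le \sigma_1(\bv A)$, so the sum is at most $(i-1)\epsilon\,\sigma_1^2(\bv A)$. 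Applying $\sqrt{a+b}\le\sqrt{a}+\sqrt{b}$ and $\sqrt{i-1}\le\sqrt{i}$ then yields $\|\bv A^T \bv z_i\|_2 \le \sigma_1(\bv A^{(i)}) \le \sigma_i(\bv A) + \sigma_1(\bv A)\sqrt{i\epsilon}$, completing the proof.
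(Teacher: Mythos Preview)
Your proof is correct and follows essentially the same approach as the paper's: both establish orthonormality of the $\bv z_i$ by showing each $\bv z_i$ lies in the range of $\bv I - \bv Z_{i-1}\bv Z_{i-1}^T$, obtain the lower bound from $\sigma_1(\bv A^{(i)}) \ge \sigma_i(\bv A)$ (via the minimax principle / Weyl-type rank-perturbation inequality), and prove the upper bound by telescoping the second conclusion of Lemma~\ref{lem:top_eig} along the chain $\sigma_1^2(\bv A^{(i)}) \le \sigma_2^2(\bv A^{(i-1)}) + \epsilon\sigma_1^2(\bv A) \le \cdots \le \sigma_i^2(\bv A) + i\epsilon\sigma_1^2(\bv A)$. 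Your explicit observation that $\bv z_j\bv z_j^T\bv A = \bv z_j\bv z_j^T\bv A^{(j)}$ (so the update matches the form in Lemma~\ref{lem:top_eig}) is exactly the identification the paper also makes.
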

\begin{proof}
Let $\sigmatilde_i=\|(\bv{A}^{(i)})^T \bv{z}_i \|_2$ for all $i$. Since each $\bv{z}_i$ is the output of Algorithm~\ref{alg:pow_it} with $\bv{A}^{(i)}$ as the input, using the bound of Lemma~\ref{lem:top_eig}, for any $i \in [k]$ we get that
\begin{equation}\label{eq:sigma_bound}
    (1-\epsilon)\sigma_1(\bv{A}^{(i)}) \leq \sigmatilde_i  \leq \sigma_1(\bv{A}^{(i)}).
\end{equation}
Note that $\bv{A}^{(i)}=\bv{A}^{(i-1)}-\bv{z}_{i-1}\bv{z}_{i-1}^T\bv{A}=\bv{A}-\sum_{j=1}^{i-1}\bv{z}_{j}\bv{z}_{j}^T\bv{A}=\bv{A}-\bv{Z}_{i-1}\bv{Z}_{i-1}^T\bv{A}$ where $\bv{Z}_{i-1}$ is a matrix with $i-1$ columns where the $j$\textsuperscript{th} column is equal to $\bv{z}_j$. We first prove that $\bv{Z}_{i-1}$ is a matrix with orthonormal columns. First note that each $\bv{z}_i$ is a unit vector since Algorithm~\ref{alg:pow_it} always outputs a unit vector. We can prove that all $\bv{z}_i$ are orthogonal to each other by induction. Suppose that all $\bv{z}_{k}$ with $k \in [j]$ are orthogonal to each other for some $j<i-1$. Now, $\bv{A}^{(j+1)}=\bv{A}-\bv{Z}_{j}\bv{Z}_{j}^T\bv{A}=(\bv{I}-\bv{Z}_{j}\bv{Z}_{j}^T)\bv{A}$ where $\bv{I}$ is the identity matrix. Observe that $\bv{z}_{j+1}$ lies in the column span of $\bv{A}^{(j+1)}$. Thus, to prove that $\bv{z}_{j+1}$ is orthogonal to all $\bv{z}_k$ with $k \in [j]$, it is enough to show that each such $\bv{z}_{k}$ is orthogonal to the column space of $\bv{A}^{(j+1)}$. This follows since $\bv{z}_k = \bv{Z}_j\bv{Z}_j^T\bv{z}_k$ since $\bv{Z}_j \bv{Z}_j^T$ is a projection onto the span of $\bv{z}_1,\ldots,\bv{z}_j$. Thus, $(\bv{A}^{(j+1)})^T\bv{z}_k= \bv{A}^T(\bv{I}-\bv{Z}_j\bv{Z}_j^T)\bv{z}_k= \bv{A}^T(\bv{z}_k-\bv{z}_k)=0$.
Thus, $\bv{z}_{j+1}$ is orthogonal to all $\bv{z}_k$ with $k \in [j]$. Hence, $\bv{Z}_{i-1}$ is a matrix with orthonormal columns. This implies that $\bv{Z}_{i-1}\bv{Z}_{i-1}^T$ is a rank $(i-1)$ projection matrix. 

By the above orthogonality claim, $\bv{Z}_{i-1}^T \bv{z}_i = \bv{0}$ and  we have $\sigmatilde_i=\|(\bv{A}^{(i)})^T\bv{z}_i \|_2=\|(\Ab^T-\Ab^T\Zb_{i-1}\Zb_{i-1}^T)\zb_{i} \|_2=\| \Ab^T\zb_i\|_2$. So, to prove the lemma, it is enough to bound $\sigmatilde_i$. We first prove the lower bound on $\widetilde \sigma_i$ using the min-max principle of eigenvalues, which we state below.
\begin{fact}[Eigenvalue Minimax Principle -- see e.g., \cite{bhatia2013matrix}]\label{fact:minimax}
    Let $\Ab\in\R^\n$ be a symmetric matrix with eigenvalues $\lambda_1(\Ab) \geq \lambda_2(\Ab) \geq \ldots \geq \lambda_n(\Ab)$, then for all $i\in[n]$, 
    \begin{align*}
        \lambda_i(\Ab) &= \max_{S\colon dim(S)=i} \min_{\xb\in S, \|\xb\|_2 = 1} \xb^T \Ab\xb
      = \min_{S\colon dim(S)=n-i+1} \max_{\xb\in S, \|\xb\|_2 = 1} \xb^T \Ab\xb.
    \end{align*}
\end{fact}
By Fact \ref{fact:minimax}, $\sigma_1(\bv{A}^{(i)})^2=\sigma_1(\bv{A}-\bv{Z}_{i-1}\bv{Z}_{i-1}^T\bv{A})^2 = \lambda_1((\bv I-\bv{Z}_{i-1}\bv{Z}_{i-1}^T)\bv{A} \bv{A}^T (\bv I-\bv{Z}_{i-1}\bv{Z}_{i-1}^T)) \geq \lambda_i(\bv{AA}^T) = \sigma_i(\bv{A})^2$ since $\bv{Z}_{i-1}\bv{Z}_{i-1}^T$ is a rank $i-1$ projection matrix. Thus, from~\eqref{eq:sigma_bound}, we get $\sigmatilde_i \geq (1-\epsilon)\sigma_1(\bv{A}^{(i)}) \geq (1-\epsilon)\sigma_i(\bv{A})$.

We now prove the upper bound on $\widetilde \sigma_i$. For any $i \in [k]$ and any $j \in [n]$ from Lemma~\ref{lem:top_eig} we have:
\begin{equation*}
   \sigma_j^2 (\Ab^{(i)}-\zb_i\zb_i^T \Ab^{(i)}) \leq \sigma_{j+1}^2(\Ab^{(i)}) + \epsilon \sigma^2_1(\Ab^{(i)}).
\end{equation*}
Now using the fact that $\Ab^{(i)}-\zb_i\zb_i^T \Ab^{(i)}=\bv{A}-\bv{Z}_{i-1}\bv{Z}_{i-1}^T\bv{A}-\zb_i\zb_i^T \Ab=\bv{A}-\bv{Z}_{i}\bv{Z}_{i}^T\bv{A}$ and $\sigma_1(\Ab^{(i)}) \leq \sigma_1(\Ab)$, we get:
\begin{equation}\label{eq:low_rank_bound}
    \sigma_j^2 (\bv{A}-\bv{Z}_{i}\bv{Z}_{i}^T\bv{A}) \leq \sigma_{j+1}^2(\bv{A}-\bv{Z}_{i-1}\bv{Z}_{i-1}^T\bv{A}) + \epsilon \sigma^2_1(\Ab).
\end{equation}
From~\eqref{eq:sigma_bound} we have
$\sigmatilde_i \leq \sigma_1(\bv{A}-\bv{Z}_{i-1}\bv{Z}_{i-1}^T\bv{A})$. Squaring both sides and applying the bound from~\eqref{eq:low_rank_bound} $i-1$ times repeatedly on the RHS, we get 
\begin{align*}
    \sigmatilde^2_i &\leq \sigma^2_1(\bv{A}-\bv{Z}_{i-1}\bv{Z}_{i-1}^T\bv{A}) \leq \sigma^2_2(\bv{A}-\bv{Z}_{i-2}\bv{Z}_{i-2}^T\bv{A})+\epsilon \sigma^2_1(\bv{A})\\
    &\leq \sigma^2_3(\bv{A}-\bv{Z}_{i-3}\bv{Z}_{i-3}^T\bv{A})+2\epsilon \sigma^2_1(\bv{A}) \leq \ldots \leq \sigma^2_i(\bv{A})+i \cdot \epsilon \cdot\sigma^2_1(\bv{A}).
\end{align*} 
Finally, taking the square root on both sides and using that for any non-negative $a,b$ $\sqrt{a}+ \sqrt{b} \ge \sqrt{a+b}$ gives us the upper bound.
\end{proof}
Lemma \ref{lem:all_sing_val} in place, we are now ready to state and prove our main result on deterministic singular value estimation for any bounded entry matrix.
\deterministicsingval*
\begin{proof}
 Note that $\bv{A}$ can have at most $\frac{1}{\epsilon}$ singular values greater than $\epsilon \|\bv{A} \|_1$. Thus, it is enough to approximate only the top $\frac{1}{\epsilon}$ singular values of $\bv{A}$ up to error $\epsilon \cdot \max(n,\|\bv{A} \|_1)$. The rest of the singular values can be approximated by $0$ to still have the required approximation error of $\epsilon \cdot \max(n,\|\bv{A} \|_1)$. Let $\widetilde{\bv A} = \bv A \circ \bv S$ be the sparsification of $\bv{A}$ as described in Theorem~\ref{thm: general matrix eps4 bound}, which satisfies
$\|\bv{A}-\widetilde{\bv{A}} \|_2 \leq \epsilon \cdot \max(n, \| \bv{A}\|_1).$ Using Weyl's inequality we have for all $i \in [n]$: $$|\sigma_i(\bv{A})- \sigma_i(\widetilde{\bv{A}})|\leq \epsilon \cdot \max(n, \|\bv{A} \|_1).$$ Thus it is enough to approximate the top $\frac{1}{\epsilon}$ singular values of $\widetilde{\bv{A}}$ up to $\epsilon \cdot \max(n, \|\bv{A} \|_1)$ error and then use triangle inequality to get the final approximation to the top $ \frac{1}{\epsilon} $ singular values of $\bv{A}$. To do so, we run Algorithm~\ref{alg:pow_it2} with $\widetilde{\bv A}$ as the input matrix, $\epsilon^3$ as the error parameter, and $k = \lceil 1/\epsilon \rceil$ as  the number of singular values to be estimated. Using Lemma~\ref{lem:all_sing_val} we get orthonormal vectors 
$\zb_i$ for $i \in \lceil \frac{1}{\epsilon} \rceil$ such that:
\begin{equation}\label{Eq:approx_bound}
    (1-\epsilon^3)\sigma_i(\widetilde{\bv A})\leq \| \Atilde\zb_i\|_2 \leq \sigma_i(\widetilde{\bv A})+\sigma_1(\widetilde{\bv A})\sqrt{i \cdot \epsilon^3}.
\end{equation}
Let $\sigmatilde_i(\bv{A})= \| \Atilde\zb_i\|_2$ for $i \in \lceil \frac{1}{\epsilon} \rceil$. Now since $\sigma_1(\bv A) \le n$, using Weyl's inequality, we have $\sigma_1(\bv{\widetilde A}) \le \sigma_1(\bv{A}) + \norm{\bv{A}-\bv{\widetilde A}}_2 \le \sigma_1(\bv{A})+ \epsilon \max(n,\norm{\bv A}_1) \le 2 \max(n,\norm{\bv A}_1)$. Thus, for any $i \in [\lceil 1/\epsilon \rceil]$, $\sigma_1(\widetilde{\bv A})\sqrt{i \cdot \epsilon^3} \leq 2\epsilon \max(n,\norm{\bv A}_1)$. So, from Equation~\eqref{Eq:approx_bound}, we get:
\begin{equation*}
    (1-\epsilon)\sigma_i(\widetilde{\bv A})\leq \widetilde{\sigma}_i(\Ab) \leq \sigma_i(\widetilde{\bv A})+2\epsilon \cdot \max(n,\norm{\bv A}_1).
\end{equation*}
This gives us $|\widetilde{\sigma}_i(\Ab)-\sigma_i(\widetilde{\bv A})| \leq 2\epsilon \cdot \max(n,\norm{\bv A}_1)$. Finally, we get: \begin{align*}
    |\widetilde{\sigma}_i(\Ab)-\sigma_i(\Ab)| &\leq |\widetilde{\sigma}_i(\Ab)-\sigma_i(\widetilde{\bv A})| +|\sigma_i(\widetilde{\bv A})-\sigma_i(\Ab)| \leq 3\epsilon \max(n,\norm{\bv A}_1),
\end{align*}
where the second inequality follows from the bound on $|\widetilde{\sigma}_i(\Ab)-\sigma_i(\widetilde{\bv A})|$ and Weyl's inequality which gives us $|\sigma_i(\widetilde{\bv A})-\sigma_i(\Ab)| \leq \epsilon \max(n,\norm{\bv A}_1)$. This gives us the required additive approximation error after adjusting $\epsilon$ by constants.
Next, observe that for $i \in \lceil \frac{1}{\epsilon} \rceil$, using triangle inequality:
\begin{align*}
    |\|\Ab \zb_i \|_2 -\sigma_i(\bv{A})| &\leq |\|\Ab \zb_i \|_2-\|\Atilde \zb_i \|_2| +|\|\Atilde \zb_i \|_2 -\sigma_i(\Ab)|
    \leq 4\epsilon \max(n,\norm{\bv A}_1), 
\end{align*}
where in the second step, the first term is bounded as $|\|\Ab \zb_i \|_2-\|\Atilde \zb_i \|_2| \leq \|(\Ab -\Atilde)\zb_i \|_2 \leq \| \Ab -\Atilde\|_2 \leq \epsilon \max(n,\norm{\bv A}_1)$. This gives us the second bound (after adjusting $\epsilon$ by constants).

\medskip 

\noindent \textbf{Sample Complexity and Runtime Analysis.} 
By Theorem \ref{thm: general matrix eps4 bound}, the number of queries to $\bv{A}$'s entries need to construct $\widetilde{\bv A} = \bv A \circ \bv S$ is $\widetilde{O}\big(\frac{n}{\epsilon^4} \big)$. The loop estimating the singular values in Algorithm~\ref{alg:pow_it2} runs $\frac{1}{\epsilon}$ times and Algorithm~\ref{alg:pow_it} is called in Line 3 inside the loop each time with error parameter $\epsilon^3$. At the i\textsuperscript{th} iteration of the loop in Algorithm~\ref{alg:pow_it2}, in Line 4 we have $\Atilde^{(i+1)}=\bv{\widetilde A}^{(i)} - \bv{z}_{i}^T \bv{z}_{i}^T \bv{\widetilde A} =\bv{\widetilde A} - \bv{Z}_{i}^T \bv{Z}_{i}^T \bv{\widetilde A}$. Note that since the matrix $\Atilde^{(i+1)}$ could be dense we don't explicitly compute $\Atilde^{(i+1)}$ to do power iterations in Algorithm~\ref{alg:pow_it} with this matrix. Instead, in each step of power iteration in Line 5 of Algorithm~\ref{alg:pow_it}, we can first calculate $\bv{\widetilde A}\bv{y}_j$ in time $\widetilde{O}\big(\frac{n}{\epsilon^4} \big)$ and then multiply this vector first with $\bv{Z}_{i}^T$ and then with $\bv{Z}_{i}$ in  $O\big(\frac{n}{\epsilon} \big)$ time to get $\bv{Z}_{i} \bv{Z}_{i}^T \bv{\widetilde A}$. Thus, each power iteration (Lines 5 and 6 of Algorithm~\ref{alg:pow_it}) takes $\widetilde{O}\big(\frac{n}{\epsilon^4} \big)$ time. Since we call Algorithm~\ref{alg:pow_it} with error parameter $\epsilon^3$, the number of power iterations with each starting vector is $O\big(\frac{\log(n/\epsilon)}{\epsilon^3} \big)$. There are also $n$ different starting vectors. Thus, each call to Algorithm~\ref{alg:pow_it} in Line 3 of Algorithm~\ref{alg:pow_it2} takes  $\widetilde{O}\big(\frac{n^2}{\epsilon^7} \big)$ time. Thus, the total running time  of the algorithm is $\widetilde{O}\big(\frac{n^2}{\epsilon^8} \big)$ (as the loop in Algorithm~\ref{alg:pow_it2} runs $\frac{1}{\epsilon}$ times).  
\end{proof}


\subsection{Deterministic PSD Testing}\label{sec:psdtesting}

We next leverage our deterministic power method approach (Algorithm \ref{alg:pow_it}) to give an $o(n^\omega)$ time deterministic algorithm for testing if a bounded entry  matrix is either PSD or has at least one large negative eigenvalue  $\leq -\epsilon \max(n,\norm{\bv A}_1)$. An optimal randomized algorithm for this problem with detection threshold $-\epsilon n$ was presented
 in~\cite{Bakshi:2020uz}. The idea of our approach is to approximate the maximum singular value of $\Ib-\frac{\Atilde}{\|\Atilde\|_2}$ which will be at smaller than 1 if $\bv{A}$ is PSD and greater than 1 if $\bv{A}$ is $\epsilon n$ far from being PSD.

\begin{theorem}[Deterministic PSD testing with $\ell_\infty$-gap]\label{cor:testing_psd}
    Given $\epsilon \in (0,1)$, let $\Ab\in\R^\n$ be a symmetric bounded entry matrix with eigenvalues $\lambda_1(\bv{A}) \geq \lambda_2(\bv{A}) \geq \ldots \geq \lambda_n(\bv{A})$ such that either $\bv{A}$ is PSD i.e. $\lambda_n(\Ab) \geq 0$ or $\Ab$ is at least $\epsilon \cdot \max(n, \|\bv{A} \|_1)$-far from being PSD, i.e., $ \lambda_n(\Ab) \leq -\epsilon \cdot\max(n, \|\bv{A} \|_1)$. There exists a deterministic algorithm that distinguishes between these two cases by reading $\widetilde{O}\left (\frac{n}{\epsilon^4}\right )$ entries of $\Ab$ and which runs in time $\widetilde{O}\left(\frac{n^2}{\epsilon^5}\right)$.
\end{theorem}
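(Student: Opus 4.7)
The plan is to reduce the decision problem to estimating $-\lambda_n(\bv{\widetilde A})$ for a deterministic sparsifier $\bv{\widetilde A}$ of $\bv A$, by applying the deterministic power method (Algorithm~\ref{alg:pow_it}) to a shifted version of $\bv{\widetilde A}$. The crucial structural fact enabling a clean threshold is that any bounded-entry PSD matrix satisfies $\|\bv A\|_1 = \tr(\bv A) \le n$, so $\max(n, \|\bv A\|_1) = n$ in the PSD case. Moreover, in the far case, $|\lambda_n(\bv A)| \le \|\bv A\|_2 \le n$ combined with $|\lambda_n(\bv A)| \ge \epsilon \max(n,\|\bv A\|_1)$ forces $\max(n, \|\bv A\|_1) \le n/\epsilon$. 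Thus a fixed threshold of order $\epsilon n$ on $-\lambda_n(\bv A)$ separates the two cases, and we never need to estimate $\|\bv A\|_1$.

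First, apply Theorem~\ref{thm: general matrix eps4 bound} with error parameter $\epsilon/c$ for a sufficiently large constant $c$ to deterministically compute $\bv{\widetilde A} = \bv A \circ \bv S$ using $\widetilde O(n/\epsilon^4)$ queries and satisfying $\|\bv A - \bv{\widetilde A}\|_2 \le (\epsilon/c)\max(n,\|\bv A\|_1)$. By Weyl's inequality, this yields $\lambda_n(\bv{\widetilde A}) \ge -(\epsilon/c)n$ in the PSD case (since $\max = n$) and $\lambda_n(\bv{\widetilde A}) \le -(1-1/c)\epsilon n$ in the far case (since $\max \ge n$). Next, since $\|\bv A\|_2 \le \|\bv A\|_F \le n$ for bounded-entry $\bv A$ and $\|\bv A - \bv{\widetilde A}\|_2 \le (\epsilon/c)\max \le n/c$ (using $\max \le n/\epsilon$ in the far case and $\max = n$ in the PSD case), we obtain $\|\bv{\widetilde A}\|_2 \le 2n$. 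Therefore, taking $C = 2n$, the shifted matrix $\bv B = C \bv I - \bv{\widetilde A}$ is PSD with $\|\bv B\|_2 = C - \lambda_n(\bv{\widetilde A})$, and matrix-vector products with $\bv B$ cost $\widetilde O(n/\epsilon^4)$ by exploiting sparsity of $\bv{\widetilde A}$. Invoke Algorithm~\ref{alg:pow_it} on $\bv B$ with error parameter $\delta = \Theta(\epsilon)$ to obtain $\widetilde \mu$ with $(1-\delta)\|\bv B\|_2 \le \widetilde \mu \le \|\bv B\|_2$.

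The decision rule is then: declare ``PSD'' iff $\widetilde \mu - C \le \epsilon n/2$, and ``far'' otherwise. The quantity $\widetilde \mu - C$ estimates $-\lambda_n(\bv{\widetilde A})$ up to additive error $\delta \|\bv B\|_2 \le 4\delta n$ (using $\|\bv B\|_2 \le C + \|\bv{\widetilde A}\|_2 \le 4n$). Picking $c$ large and $\delta$ a small constant fraction of $\epsilon$ (e.g., $c = 4$, $\delta = \epsilon/16$) ensures $\widetilde \mu - C \le (\epsilon/c)n + 4\delta n < \epsilon n/2$ in the PSD case and $\widetilde \mu - C \ge (1-1/c)\epsilon n - 4\delta n > \epsilon n/2$ in the far case, as required. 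For the runtime, Algorithm~\ref{alg:pow_it} uses $n$ standard-basis starting vectors and $\widetilde O(1/\epsilon)$ iterations per vector, each of cost $\widetilde O(n/\epsilon^4)$, yielding total running time $\widetilde O(n^2/\epsilon^5)$ on top of the $\widetilde O(n/\epsilon^4)$ queries for sparsification. The main technical point that the plan must handle is threshold selection in the presence of a potentially large $\max(n,\|\bv A\|_1)$; the structural observations that $\max = n$ in the PSD case and $\max \le n/\epsilon$ in the far case resolve this cleanly.
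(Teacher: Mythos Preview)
Your proof is correct and takes a genuinely different route from the paper. The paper first runs Algorithm~\ref{alg:pow_it} once to estimate $\sigma_1(\widetilde{\bv A})$, then forms the normalized matrix $\bv B = \bv I - \widetilde{\bv A}/\widetilde\sigma_1(\widetilde{\bv A})$, runs Algorithm~\ref{alg:pow_it} a second time on $\bv B$, and compares $\widetilde\sigma_1(\bv B)$ against the data-dependent threshold $1 + \epsilon n /(10\,\widetilde\sigma_1(\widetilde{\bv A}))$. You instead exploit two structural facts---$\max(n,\|\bv A\|_1)=n$ in the PSD case (since $\|\bv A\|_1=\tr(\bv A)\le n$) and $\max(n,\|\bv A\|_1)\le n/\epsilon$ in the far case (since $\epsilon\max\le|\lambda_n(\bv A)|\le\|\bv A\|_2\le n$)---to justify shifting by the fixed known constant $C=2n$ and thresholding $\widetilde\mu - C$ against the fixed value $\epsilon n/2$. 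This buys you a cleaner argument: you avoid the preliminary call to estimate $\sigma_1(\widetilde{\bv A})$ and the associated case analysis in the paper (where one must separately handle whether $\lambda_1(\widetilde{\bv A})$ or $|\lambda_n(\widetilde{\bv A})|$ realizes $\sigma_1(\bv B)$). The paper's normalization, on the other hand, makes the threshold scale-free without needing to invoke the bound $\max\le n/\epsilon$, but at the cost of an extra power-method pass. Both approaches yield the same $\widetilde O(n/\epsilon^4)$ query and $\widetilde O(n^2/\epsilon^5)$ time bounds. One very minor remark: with your illustrative constants $c=4$, $\delta=\epsilon/16$, the far-case lower bound on $\widetilde\mu-C$ lands essentially at $\epsilon n/2$ rather than strictly above; tightening $\delta$ slightly (e.g.\ $\delta=\epsilon/32$) gives the strict separation you claim.
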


\begin{proof}
    Let $\Atilde = \bv A \circ \bv S$ be a sparsification of $\Ab$ as described in Theorem \ref{thm: general matrix eps4 bound} with approximation parameter $\frac{\epsilon}{10}$ that satisfies $\|\Ab - \Atilde\|_2 \leq \frac{\epsilon}{10}\max(n, \|\Ab\|_1)$. Using Weyl's inequality we have for all $i\in [n]$, $|\lambda_i(\Ab) - \lambda_i(\Atilde)| \leq \frac{\epsilon}{10}\max(n, \|\Ab\|_1)$. 

    Let $\widetilde{\sigma}_{1}(\Atilde)$  be the estimate of $\| \Atilde \|_2=\sigma_{1}(\Atilde)$ output by  Algorithm~\ref{alg:pow_it} with input $\Atilde$ and error parameter $\frac{\epsilon}{100}$. Then by Lemma~\ref{lem:top_eig}, we have:
    \begin{equation*}
       \left(1-\frac{\epsilon}{100}\right)\sigma_{1}(\Atilde) \leq \widetilde{\sigma}_{1}(\Atilde) \leq \sigma_{1}(\Atilde).
    \end{equation*}
    Let $\bv B = \Ib - \frac{1}{\widetilde{\sigma}_{1}(\Atilde)}\Atilde$. Let $\widetilde{\sigma}_{1}(\bv B)$  be the estimate of $\sigma_1(\bv{B})$ output by Algorithm~\ref{alg:pow_it} with input $\bv B$ and error parameter $\frac{\epsilon}{1000}$. Then by Lemma~\ref{lem:top_eig}, we again have:
    \begin{equation*}
       \left(1-\frac{\epsilon}{1000}\right)\sigma_1(\bv{B}) \leq \widetilde{\sigma}_{1}(\bv B) \leq \sigma_1(\bv{B}).
    \end{equation*}
    We now distinguish between the two cases as follows:
    
    \medskip
    
  \noindent  \textbf{Case 1:} When $\bv{A}$ is PSD, we have $\lambda_{n}(\widetilde{\bv{A}}) \geq \lambda_{n}(\bv{A}) -\frac{\epsilon n}{10} \geq -\frac{\epsilon n}{10}$. Now, $\sigmatilde_1(\bv{B}) \leq \sigma_1(\bv{B})=\max(|\lambda_1(\bv{B})|, |\lambda_n(\bv{B})|)=\max(\big|1-\frac{\lambda_n(\Atilde)}{\sigmatilde_1(\Atilde)} \big|,\big|1-\frac{\lambda_1(\Atilde)}{\sigmatilde_1(\Atilde)} \big| )$.
    
    First assume that $\big|1-\frac{\lambda_n(\Atilde)}{\sigmatilde_1(\Atilde)} \big|  \geq \big|1-\frac{\lambda_1(\Atilde)}{\sigmatilde_1(\Atilde)} \big| $. Now, $1-\frac{\lambda_n(\Atilde)}{\sigmatilde_1(\Atilde)} \leq 1+\frac{\epsilon n}{10 \sigmatilde_1(\Atilde)}$. Thus, $\sigmatilde_1(\bv{B}) \leq 1+\frac{\epsilon n}{10 \sigmatilde_1(\Atilde)}$. Now, assume $\big|1-\frac{\lambda_n(\Atilde)}{\sigmatilde_1(\Atilde)} \big| < \big|1-\frac{\lambda_1(\Atilde)}{\sigmatilde_1(\Atilde)} \big| $. Then, we must have $\lambda_1(\Atilde)  \geq 0$ as otherwise, we will have $1-\frac{\lambda_n(\Atilde)}{\sigmatilde_1(\Atilde)} \geq 1-\frac{\lambda_1(\Atilde)}{\sigmatilde_1(\Atilde)} >0$. Thus, we get that $\sigmatilde_1(\bv{B})\leq \big|1-\frac{\lambda_1(\Atilde)}{\sigmatilde_1(\Atilde)} \big|=\big|1-\frac{\sigma_1(\Atilde)}{\sigmatilde_1(\Atilde)} \big|=\big| 1-\frac{1}{1-\frac{\epsilon}{100}}\big|=\frac{\epsilon/100}{1-\epsilon/100}< \frac{\epsilon}{50} < 1 + \frac{\epsilon n}{10 \widetilde \sigma_1(\bv{\widetilde A})}$.

\medskip

    \noindent \textbf{Case 2:} When $\bv{A}$ is $\epsilon \cdot \max(n, \|\bv{A} \|_1)$ far from being PSD, we have $\lambda_{n}(\widetilde{\bv{A}}) \leq  \lambda_{n}(\bv{A}) +\frac{\epsilon \cdot \max(n, \|\bv{A} \|_1)}{10} \leq -\frac{9}{10}\epsilon \cdot \max(n, \|\bv{A} \|_1)$. Observe that $\sigma_1(\bv{B}) \geq 1-\frac{1}{\sigmatilde_1(\Atilde)}\lambda_n(\Atilde) \geq 1+\frac{9}{10\sigmatilde_1(\Atilde)}\epsilon \cdot \max(n, \|\bv{A} \|_1)$. Then, we have $\sigmatilde_1(\bv{B}) \geq \left(1-\frac{\epsilon}{1000}\right)\sigma_1(\bv{B}) \geq \left(1-\frac{\epsilon}{1000}\right)\left(1+\frac{9}{10\sigmatilde_1(\Atilde)}\epsilon \cdot \max(n, \|\bv{A} \|_1)\right) > 1+\frac{2}{10\sigmatilde_1(\Atilde)}\epsilon \cdot \max(n, \|\bv{A} \|_1)) >1+\frac{2\epsilon n}{10\sigmatilde_1(\Atilde)}$ .

    Thus, in \textbf{Case 1}, we have $\sigmatilde_1(\bv{B}) \leq 1+\frac{\epsilon n}{10\sigmatilde_1(\Atilde)}$ while for \textbf{Case 2}, we have $\sigmatilde_1(\bv{B})>1+\frac{2\epsilon n}{10\sigmatilde_1(\Atilde)}$. So we can distinguish between the above two cases. 

    \medskip 
    
    \noindent \textbf{Sample Complexity and Runtime Analysis.} The total runtime is the time taken to estimate $\sigma_1(\Atilde)$ and $\sigma_1(\bv{B})$ using Algorithm~\ref{alg:pow_it} with error parameter $\Theta(\epsilon)$. This  is $\widetilde{O}\left(\frac{n^2}{\epsilon^5}\right)$ since we run $O\left(\frac{\log(n/\epsilon)}{\epsilon}\right)$ iterations of the power method with $n$ different starting vectors, where each iteration takes time $\widetilde{O}\left(\frac{n}{\epsilon^4}\right)$ since $\bv{\tilde A}$ and $\bv B$ both have $\tilde O(n/\epsilon^4)$ non-zero entries. The sample complexity required for form $\bv {\tilde A}$ and $\bv B$ is $\tilde O(n/\epsilon^4)$.
 
\end{proof}

\subsection{High Accuracy Spectral Norm Approximation}\label{sec:highaccuracy}

Finally, we show that, under the assumption that $\sigma_1(\bv{A}) \geq \alpha \max(n,\|\Ab \|_1)$ for some $\alpha \in (0,1)$, we can deterministically compute $\sigmatilde_1(\bv{A})$ such that $\sigmatilde_1(\Ab) \geq (1-\epsilon)\sigma_1(\Ab)$ in $\widetilde{O}\big(\frac{n^2 \log(1/\epsilon)}{\poly(\alpha)}\big)$ time. This allows us to set $\epsilon$ to $\frac{1}{n^c}$ to get a high accuracy approximation to $\sigma_1(\Ab)$ in roughly linear time in the input matrix size. This is the first $o(n^\omega)$ time deterministic algorithm for this problem, and to the best of our knowledge matches the best known randomized methods for high accuracy top singular value computation, up to a $\poly\log(n,1/\alpha)$ factor. Our approach is described in Algorithm \ref{alg:high_acc_spec_norm} below.

\begin{algorithm}[H] 
\caption{High Accuracy Spectral Norm Approximator}
\label{alg:high_acc_spec_norm}
\begin{algorithmic}[1]
\Require{Symmetric $\bv A \in \mathbb{R}^{n\times n}$, parameter $\alpha$, and error parameter $\epsilon \in (0,1)$}
\State Let $\Atilde=\bv{A} \circ \bv{S}$ where $\bv{S}$ is as specified in Theorem~\ref{thm: general matrix eps4 bound} with error parameter $\epsilon=c\alpha^4$ for a sufficiently small constant $c$.
\State $\Zb \gets$ output of Algorithm \ref{alg:pow_it2} with inputs $\Atilde$, error parameter $\widetilde\epsilon=c\alpha^4$, and $k=2/\alpha$.
\State $\Ztilde \gets$ Output of Block Krylov method (Algorithm 2 of \cite{musco2015randomized}) with input matrix $\Ab$, starting block $\Zb$, and number of iterations $q=\frac{c'\log(n/\epsilon)}{\alpha^{3/2}}$ for some enough large constant $c'$.
\State \Return {$\sigmatilde_1(\bv{A})=\sqrt{\widetilde{\bv {z}}_1^T \bv{A}\bv{A}^T\widetilde{\bv {z}}_1}$, where $\widetilde{\bv {z}}_1$ is the first column of $\Ztilde$}. 
\end{algorithmic}
\end{algorithm}

Algorithm \ref{alg:high_acc_spec_norm} first computes (Step 2) a coarse approximate subspace of top singular vectors for $\bv{A}$ using our deflation based Algorithm \ref{alg:pow_it2} applied to the sparsified matrix $\bv{\widetilde A}$. This subspace is then used in Step 3 to initialize a Block Krylov method applied to $\bv{A}$ to compute a much more accurate approximation to $\sigma_1(\bv{A})$. Our proof will leverage the analysis of Block Krylov methods given in~\cite{musco2015randomized}. Specifically, we will apply Theorem 13 of~\cite{musco2015randomized} which bounds the number of iterations required to find a $(1 \pm \epsilon)$ accurate approximation  to the top $k$ singular values in terms of the singular value gap $\frac{\sigma_k(\bv{A})}{\sigma_{p+1}(\bv{A})}$ for some block size $p \ge k$. 

The proof of Theorem 13 in \cite{musco2015randomized} uses a degree $q$ amplifying polynomial $f(\bv{A})$ (for large enough $q$) in the Krylov subspace to significantly separate the top $k$ singular values from the rest. It is shown that if the starting block $\bv{Z}$ is such that $f(\bv{A})$ projected onto the column span of $f(\bv{A})\bv{Z}$ is a good rank $p$ approximation to $f(\bv{A})$, then the approximate singular values and vectors output by the Block Krylov method will be accurate. To prove this condition for our starting block, we will apply the following well-known result, which bounds the error of any low rank approximation to a matrix in terms of the error of the best possible low rank approximation.

\begin{lemma}[Lemma 48 of \cite{woodruff2014sketching}, proved in \cite{boutsidis2014near}]\label{lem:fro_broken_bound}
    Let $\Ab = \Ab  \Sb\Sb^T + \bv E$ be a low-rank matrix factorization of $\Ab$, with $\Sb\in\R^{n\times k}$, and $\Sb^T\Sb = \Ib_k$. Let $\bv Z \in \R^{n\times c}$ with $c\geq k$ be any matrix such that $\rank{(\Sb^T\bv Z)} = \rank{(\Sb)} = k$. Let $\bv{Q}$ be an orthonormal basis for the column span of $\bv{AZ}$. Then, $$\|\Ab - \bv Q\bv Q^T\Ab\|_F^2 \leq \|\bv E\|_F^2 + \|\bv E\bv Z(\Sb^T\bv Z)^{+}\|_F^2.$$
\end{lemma}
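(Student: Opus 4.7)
The plan is a standard ``oblique projection'' argument. Since $\bv Q \bv Q^T \Ab$ is the best Frobenius-norm approximation to $\Ab$ whose columns lie in $\colspan(\Ab \Zb)$, I would upper bound $\|\Ab - \bv Q \bv Q^T \Ab\|_F^2$ by $\|\Ab - \Ab\Zb \bv{C}\|_F^2$ for an arbitrary matrix $\bv{C}$ of my choosing, and then exhibit a good $\bv{C}$ whose associated residual I can analyze.

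The natural choice is $\bv{C} = (\Sb^T\Zb)^+ \Sb^T$. Substituting $\Ab = \Ab\Sb\Sb^T + \bv{E}$ and using the hypothesis that $\Sb^T\Zb$ has full row rank $k$, so that $(\Sb^T\Zb)(\Sb^T\Zb)^+ = \Ib_k$, the cross term telescopes and I expect $\Ab\Zb\bv{C} = \Ab\Sb\Sb^T + \bv{E}\Zb(\Sb^T\Zb)^+\Sb^T$. Thus the residual simplifies to $\Ab - \Ab\Zb\bv{C} = \bv{E} - \bv{E}\Zb(\Sb^T\Zb)^+\Sb^T$.

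The key step, and what I anticipate as the main obstacle for anyone seeing this for the first time, is to recognize that the two summands in this residual are Frobenius-orthogonal. Because $\Sb^T\Sb = \Ib_k$, one has $\bv{E}\Sb = \Ab(\Ib - \Sb\Sb^T)\Sb = \bv{0}$, so each row of $\bv{E}$ lies in $\colspan(\Sb)^{\perp}$. On the other hand, every row of $\bv{E}\Zb(\Sb^T\Zb)^+\Sb^T$ is a linear combination of the rows of $\Sb^T$, hence lies in $\colspan(\Sb)$. These row spaces are orthogonal, so by the Pythagorean theorem, $\|\bv{E} - \bv{E}\Zb(\Sb^T\Zb)^+\Sb^T\|_F^2 = \|\bv{E}\|_F^2 + \|\bv{E}\Zb(\Sb^T\Zb)^+\Sb^T\|_F^2$.

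To finish, I would use the fact that $\Sb$ has orthonormal columns: right-multiplication by $\Sb^T$ preserves Frobenius norm, since $\|X\Sb^T\|_F^2 = \tr(X\Sb^T\Sb X^T) = \|X\|_F^2$. This collapses the second term to $\|\bv{E}\Zb(\Sb^T\Zb)^+\|_F^2$, yielding the stated bound. Everything else is mechanical linear algebra; without the $\bv{E}\Sb = \bv{0}$ observation, a triangle-inequality approach would only give a bound weaker by a factor of $2$ (with no cross term cancellation), so I expect that single orthogonality identity to carry essentially the entire proof.
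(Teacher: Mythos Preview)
Your proposal is correct and is precisely the standard argument for this lemma. Note that the paper does not actually supply its own proof of this statement: it is quoted verbatim as Lemma~48 of \cite{woodruff2014sketching} (originating in \cite{boutsidis2014near}) and invoked as a black box, so there is no paper-specific proof to compare against; your argument matches the proof in those cited references.
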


The above result lets us bound the difference between the Frobenius norm error of projecting $f(\bv{A})$ onto the column span of $f(\bv{A})\bv{Z}$, and the error of the best rank $p$ approximation of $f(\bv{A})$. The proof of Theorem 13 of~\cite{musco2015randomized} uses Lemma 4 of~\cite{musco2015randomized}, (which in turn is proven using Lemma \ref{lem:fro_broken_bound} above) to show that $\|f(\bv{A}) -\bv{Q}\bv{Q}^Tf(\bv{A})\|^2_F \leq n^c \cdot \|f(\bv{A})-(f(\bv{A}))_p \|^2_F$ where $\bv{Q}$ is an orthonormal basis for $\bv{AZ}$, where $\bv{Z}$ is a random starting block, $c$ is some constant, and $(f(\bv{A}))_p$ is the best rank $p$ approximation to $f(\bv{A})$. In order to apply this result in our setting, we need to show that the same bound applies for our starting block $\bv{Z}$, which is computed deterministically by Algorithm \ref{alg:pow_it2}. 

In particular, by Lemma \ref{lem:fro_broken_bound}, it suffices to bound $\|(\bv{V}_p^T\bv{Z})^{+} \|_2$, where $\bv{V}_p$ contains the top $p$ singular vectors of $\bv{A}$ (which are identical to those of $f(\bv A)$). That is, we need to ensure that the columns of $\bv{Z}$ and the top $p$ singular vectors of $\bv{A}$ have large inner product with each other. We next prove that this is the case when there is a large gap between the top $p$ singular values and the rest. We will later show that under the assumption that $\sigma_1(\bv A) \ge \alpha \cdot \max(n,\norm{\bv A}_1)$, there is always some $p$ that satisfies this gap assumption.

\begin{lemma}[Starting Block Condition]\label{lem:spectral bound zu}
    Let $\Ab \in \R^\n$ be a bounded entry symmetric matrix with singular values $\sigma_1(\Ab) \geq \sigma_2(\Ab) \geq \ldots \geq \sigma_n(\Ab)$. Let $\Atilde = \bv{A} \circ \bv{S}$ where the sampling matrix $\bv S$ is as specified in Theorem~\ref{thm: general matrix eps4 bound} with error parameter $\epsilon=c\alpha^4$ for a sufficiently small constant $c$  and some $\alpha \in (0,1)$. For some $k\in[n]$, let $\Zb \in \R^{n \times k}$ be the output of Algorithm \ref{alg:pow_it2} with input $\Atilde$ and parameters $k=\frac{2}{\alpha}$ and $\epsilon=c\alpha^4$. For some $p\leq \frac{2}{\alpha}$, assume that for some constant $c_1$, $$
        \sigma_p^2(\Ab) - \sigma_{p+1}^2(\Ab) \geq {c}_1\alpha^3\cdot(\max(n,\|\Ab\|_1))^2.
    $$
    Also let $\bv V_p \in \R^{n \times p}$ be the matrix with columns equal to the $p$ singular vectors of $\bv{A}$ corresponding to its largest $p$ singular values. Let $\Zb_p$ contain the first $p$ columns of $\Zb$. Then for some constant $c_2$, $$\norm{(\bv{V}_p^T \bv{Z}_p)^+}_2 \le c_2n.$$
\end{lemma}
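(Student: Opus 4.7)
The plan is to show that $\bv{Z}_p$ spans a subspace close to that of $\bv{V}_p$, which will force $\bv{V}_p^T \bv{Z}_p$ to have all singular values bounded below by a positive constant---much stronger than the required $1/(c_2 n)$. The argument will route through an auxiliary matrix $\widetilde{\bv{V}}_p \in \R^{n \times p}$ whose columns are the top $p$ singular vectors of $\widetilde{\bv{A}}$.

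The first block of estimates are perturbation bounds. By Theorem \ref{thm: general matrix eps4 bound}, $\|\bv{A} - \widetilde{\bv{A}}\|_2 \le c\alpha^4 M$ where $M := \max(n, \|\bv{A}\|_1)$, and Weyl's inequality gives $|\sigma_i(\bv{A}) - \sigma_i(\widetilde{\bv{A}})| \le c\alpha^4 M$ for every $i$. Using $\|\bv{A}\|_2, \|\widetilde{\bv{A}}\|_2 = O(n) = O(M)$, one gets $\|\bv{A}^2 - \widetilde{\bv{A}}^2\|_2 \le (\|\bv{A}\|_2 + \|\widetilde{\bv{A}}\|_2)\|\bv{A} - \widetilde{\bv{A}}\|_2 = O(\alpha^4 M^2)$, while the spectral-gap hypothesis together with the Weyl bounds yields $\sigma_p^2(\widetilde{\bv{A}}) - \sigma_{p+1}^2(\widetilde{\bv{A}}) \ge \Omega(\alpha^3 M^2)$. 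Since $\bv{V}_p$ and $\widetilde{\bv{V}}_p$ are the top-$p$ eigenvectors of the symmetric PSD matrices $\bv{A}^2$ and $\widetilde{\bv{A}}^2$, the Davis--Kahan $\sin\Theta$ theorem applied with this gap and perturbation bound gives $\|\sin\Theta(\bv{V}_p, \widetilde{\bv{V}}_p)\|_2 = O(\alpha)$.

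Next, I will show that $\sigma_{\min}(\widetilde{\bv{V}}_p^T \bv{Z}_p)$ is close to $1$. By Lemma \ref{lem:all_sing_val}, $\|\widetilde{\bv{A}}^T \bv{z}_i\|_2^2 \ge (1 - 2\widetilde\epsilon)\sigma_i^2(\widetilde{\bv{A}})$ for every $i \le p$, so summing over $i$ gives $\|\widetilde{\bv{A}}^T \bv{Z}_p\|_F^2 \ge (1 - 2\widetilde\epsilon)\sum_{i=1}^p \sigma_i^2(\widetilde{\bv{A}})$. Writing $\bv{Z}_p = \widetilde{\bv{V}}\bv{W}$ in the eigenbasis of the symmetric matrix $\widetilde{\bv{A}}$, the scalars $w_j := \|\bv{W}_{j,:}\|_2^2$ satisfy $\sum_j w_j = p$ (from orthonormality of $\bv{Z}_p$'s columns) and $w_j \le 1$ (since $\bv{W}\bv{W}^T$ is a rank-$p$ orthogonal projection), and the lower bound becomes $\sum_j \sigma_j^2(\widetilde{\bv{A}}) w_j \ge (1-2\widetilde\epsilon)\sum_{i=1}^p \sigma_i^2(\widetilde{\bv{A}})$. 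A one-dimensional LP-type rearrangement shows that starting from the feasible point $w_j = \mathbf{1}_{j \le p}$, any transfer of mass from $j_0 \le p$ to $j_1 > p$ costs at least $\sigma_p^2(\widetilde{\bv{A}}) - \sigma_{p+1}^2(\widetilde{\bv{A}}) = \Omega(\alpha^3 M^2)$ per unit, while the total admissible cost (slack) is $2\widetilde\epsilon \sum_{i=1}^p \sigma_i^2(\widetilde{\bv{A}}) \le O(\widetilde\epsilon \cdot p M^2) = O(c\alpha^3 M^2)$. Hence $\sum_{j \le p} w_j \ge p - O(c)$, i.e., $\|\widetilde{\bv{V}}_p^T \bv{Z}_p\|_F^2 \ge p - O(c)$. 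Since $\widetilde{\bv{V}}_p^T \bv{Z}_p$ is a $p \times p$ matrix with operator norm at most $1$, every one of its singular values must exceed $\sqrt{1 - O(c)} \ge 1/2$ for $c$ sufficiently small.

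Finally, decomposing $\bv{V}_p^T \bv{Z}_p = (\bv{V}_p^T \widetilde{\bv{V}}_p)(\widetilde{\bv{V}}_p^T \bv{Z}_p) + \bv{V}_p^T(\bv{I} - \widetilde{\bv{V}}_p \widetilde{\bv{V}}_p^T)\bv{Z}_p$, the first term has smallest singular value at least $\sqrt{1 - \|\sin\Theta\|_2^2}\cdot \sigma_{\min}(\widetilde{\bv{V}}_p^T \bv{Z}_p) = \Omega(1)$, while the second has operator norm at most $\|\sin\Theta(\bv{V}_p, \widetilde{\bv{V}}_p)\|_2 = O(\alpha)$. By Weyl's inequality for singular values, $\sigma_{\min}(\bv{V}_p^T \bv{Z}_p) = \Omega(1)$, which easily implies $\|(\bv{V}_p^T \bv{Z}_p)^+\|_2 = O(1) \le c_2 n$. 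The hardest step will be the LP rearrangement in the third paragraph: one must carefully track that the cheapest admissible cost per unit mass transfer is exactly the gap $\sigma_p^2(\widetilde{\bv{A}}) - \sigma_{p+1}^2(\widetilde{\bv{A}})$ and that the slack introduced by the $(1-2\widetilde\epsilon)$ factor is no larger than this gap times a small constant---this is precisely where the choice $\widetilde\epsilon = c\alpha^4$ becomes critical. An alternative would be to invoke Lemma \ref{lem:fro_broken_bound} directly on $\widetilde{\bv{A}}^2$ with $\bv{S} = \widetilde{\bv{V}}_p$, which packages essentially the same computation.
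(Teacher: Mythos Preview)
Your argument is correct and takes a genuinely different route from the paper's proof. The paper argues by contradiction: assuming $\sigma_{\min}(\bv V_p^T\bv Z_p)\le 1/(c_2 n)$, it finds a unit vector $\bv w\in\operatorname{span}(\bv V_p)$ nearly orthogonal to every $\bv z_i$, forms the $(p{+}1)$-column matrix $\bv W=[\bv w,(\bv I-\bv w\bv w^T)\bv Z_p]$, shows via Gershgorin that $\bv W$ is near-orthonormal, and then derives $\|\bv A\bv Q\|_F^2>\sum_{i\le p+1}\sigma_i^2(\bv A)$ for an orthonormal basis $\bv Q$ of $\operatorname{span}(\bv W)$, contradicting the SVD optimality. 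It never passes through $\widetilde{\bv V}_p$ or Davis--Kahan.

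Your approach instead decouples the two sources of error: Davis--Kahan (applied to $\bv A^2$ versus $\widetilde{\bv A}^2$) controls the angle between $\bv V_p$ and $\widetilde{\bv V}_p$, and your LP/mass-transfer argument controls the angle between $\widetilde{\bv V}_p$ and $\bv Z_p$. Both steps are sound; in particular, the key inequality $\sum_j\sigma_j^2(\widetilde{\bv A})w_j\le \sum_{j\le p}\sigma_j^2(\widetilde{\bv A})-s\big(\sigma_p^2(\widetilde{\bv A})-\sigma_{p+1}^2(\widetilde{\bv A})\big)$ with $s=\sum_{j>p}w_j$ follows directly from $w_j\le 1$ and monotonicity, without any case split on $s$. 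Your route is more modular and in fact proves the stronger conclusion $\sigma_{\min}(\bv V_p^T\bv Z_p)=\Omega(1)$ rather than $\Omega(1/n)$; the paper's argument is more self-contained in that it avoids citing Davis--Kahan. Either way, the parameter choice $\widetilde\epsilon=c\alpha^4$ enters at exactly the same place: balancing the slack $2\widetilde\epsilon\sum_{j\le p}\sigma_j^2(\widetilde{\bv A})=O(c\alpha^3 M^2)$ against the gap $\Omega(\alpha^3 M^2)$.
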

\begin{proof}
    Before proving the main result, we will bound the error between $\sigma^2_j(\Ab)$ and $\|\Ab\zb_j \|^2_2$ where $\bv{z}_j$ is a column of $\bv{Z}$ for any $j \in \lceil \frac{2}{\alpha} \rceil$. From Lemma~\ref{lem:all_sing_val}, we have for any $j \in \lceil \frac{2}{\alpha} \rceil$:
    \begin{equation*}
        \|\Atilde\bv{z}_j \|_2 \geq (1-c \alpha^4)\sigma_j(\Atilde),
    \end{equation*}
    since we run Algorithm~\ref{alg:pow_it2} with error parameter $\epsilon=c\alpha^4$. Also since $\|\Atilde -\Ab \|_2 \leq c \alpha^4 \max(n,\|\bv{A} \|_1)$, using Weyl's inequality and the fact that $\sigma_j(\bv{A}) \leq \max(n,\|\bv{A} \|_1)$ we have (for some constant $b$) $$ \sigma_j(\bv{A})-c\alpha^4 \max(n,\|\bv{A} \|_1) \leq \sigma_j(\Atilde) \leq \sigma_j(\bv{A})+ c\alpha^4 \max(n,\|\bv{A} \|_1) \leq b \max(n,\|\bv{A} \|_1).$$
    
    Now, using triangle inequality, $$\|\bv{A} \bv{z}_j \|_2 \geq \|\Atilde\bv{z}_j \|_2-\|(\Atilde -\Ab)\bv{z}_j \|_2 \geq (1-c \alpha^4)\sigma_j(\Atilde) -c\alpha^4 \max(n,\|\bv{A} \|_1),$$ where the second inequality uses that $\|(\Atilde -\Ab)\bv{z}_j \|_2 \leq \|\Atilde -\Ab \|_2\|\bv{z}_j \|_2\leq c \alpha^2 \max(n,\|\bv{A} \|_1)$. Thus, using the bounds on $\sigma_j(\Atilde)$, we get: 
    \begin{align*}
        \|\bv{A}\bv{z}_j \|_2 &\geq \sigma_j(\bv{A})-c\alpha^4 \max(n,\|\bv{A}\|_1)-c\alpha^4 \sigma_j(\Atilde)-c\alpha^4\max(n,\|\bv{A}\|_1) \\
        &\geq \sigma_j(\bv{A}) -c''\alpha^4\max(n,\|\bv{A}\|_1),
    \end{align*}
    for some constant $c''$. Finally, squaring both sides, we get for any $j \in \lceil \frac{2}{\alpha} \rceil$:
    \begin{align}
        \|\bv{A}\bv{z}_j \|_2^2 &\geq \sigma^2_j(\Ab) -2c''\alpha^4 \sigma_j(\Ab)\max(n,\|\Ab \|_1) +(c'')^2\alpha^8 (\max(n,\|\Ab \|_1))^2 \notag\\
        \label{Eq:abc}
        &\geq \sigma^2_j(\Ab)-2c''\alpha^4(\max(n,\|\Ab \|_1))^2.
    \end{align}
    
    We are now ready to prove the main result of the lemma. Note that showing $\norm{(\bv{V}_p^T \bv{Z}_p)^+}_2 \le c_2n$ is equivalent to showing that $\sigma_{\min}(\bv{V}_p^T \bv{Z}_p) \ge \frac{1}{c_2n}$. Assume for contradiction that $\sigma_{\min}(\bv{V}_p^T \bv{Z}_p) \le \frac{1}{c_2n}$. Then there is some vector $\bv{y}$ with $\norm{\bv{y}}_2 = 1$ such that $\norm{\bv{Z}_p^T \bv{V}_p \bv{y}}_2 \le \frac{1}{c_2n}$. Let $\bv{w} = \bv{V}_p \bv{y}$. Observe that $\| \bv w\|_2=1$ since $\bv{V}_p$ has orthonormal columns. Then, for any column $\bv{z}_i$ of $\bv{Z}_p$, 
    \begin{align}\label{Eq:projn}
        \norm{\bv{w} \bv{w}^T \bv{z}_i}_2 = |\bv{w}^T \bv{z}_i| \le \norm{\bv w^T \bv{Z}_p}_2 \le \frac{1}{c_2n},
    \end{align}
    where the last step follows by our assumption. Then, using triangle inequality and the above bound, we have $\norm{\bv{A}(\bv I - \bv{ww}^T) \bv{z}_i}_2 \ge \norm{\bv{A} \bv z_i}_2 - \norm{\bv A \bv{ww}^T\bv{z}_i}_2 \ge \norm{\bv{A} \bv{z}_i}_2 - \frac{\norm{\bv{A}}_2}{c_2n} \geq  \norm{\bv{A} \bv{z}_i}_2 - \frac{1}{c_2}$, where the last equality uses the fact that $\|\Ab\|_2 \leq  n$ since it has bounded entries. Thus, for any $i \in [p]$, squaring both sides of this bound, we get:    \begin{align}
    \norm{\bv{A}(\bv I - \bv{ww}^T) \bv{z}_i}_2^2 &\ge \norm{\bv A \bv{z}_i}_2^2 - \frac{2\|\Ab\zb_i\|_2}{c_2 } \notag\\ 
    &\geq \norm{\bv A \bv{z}_i}_2^2 - \frac{2}{c_2}\max(n,\|\bv{A} \|_1)\notag\\
    \label{eq:AIwwz}
    &\ge \sigma^2_i(\bv A) - c_3\alpha^4 \max(n,\norm{\Ab}_1)^2,
    \end{align}
    for some constant $c_3$, where the second inequality uses $\|\Ab\zb_i\|_2 \leq n$, and the last inequality uses~\eqref{Eq:abc}. Moreover since $\bv{w}$ is spanned by $\bv{V}_p$, we have:
    \begin{align}
        \label{eq:Aw22}
        \norm{\bv{A} \bv w}_2^2 \ge 
       \min_{\bv{x}:\|\bv{x}\|_2=1} \|\bv{A}\bv{V}_p \bv{x}\|_2 \ge \sigma^2_p(\bv A) > \sigma_{p+1}^2(\bv A) + c_1\alpha^3\cdot(\max(n,\|\Ab\|_1))^2,
    \end{align} 
    where the final inequality is by the gap assumption in the lemma statement. 
    
    Consider the matrix $\bv{W} = [\bv w, (\bv I-\bv{ww}^T) \bv Z_p]$. Next, we show that under our assumption that $\sigma_{\min}(\bv{V}_p^T \bv{Z}_p) \le \frac{1}{c_2 n}$, $\bv{W}$ will be very close to orthonormal, and further that by \eqref{eq:AIwwz} and \eqref{eq:Aw22}, $\norm{\bv{AW}}_F^2 > \sum_{i=1}^{p+1} \sigma_i^2(\bv{A})$. Since $\bv{W}$ has $p+1$ columns, this will lead to a contradiction since we must have $\norm{\bv{AW}}_F^2 \le \norm{\bv{AV}_{p+1}}_F^2 =  \sum_{i=1}^{p+1} \sigma_i^2(\bv{A})$.
   In particular, using \eqref{eq:AIwwz} and \eqref{eq:Aw22},
    \begin{align}
    \norm{\bv A \bv W}_F^2 &= \norm{\bv A \bv w}_2^2 + \sum_{i=1}^p \norm{\bv{A} (\bv I - \bv{ww}^T)\bv{z}_i}_2^2\notag\\
    &\ge \sigma_{p+1}^2(\bv A) + c_1\alpha^3\cdot(\max(n,\|\Ab\|_1))^2 + \sum_{i=1}^p \sigma_i(\bv A)^2 - c_3p\alpha^4 \max(n,\norm{A}_1)^2\notag\\
    \label{eq:AWF2}
    & \ge \sum_{i=1}^{p+1} \sigma_i^2(\bv A) + c_4 \alpha^3 \max(n,\norm{\bv A}_1)^2,
    \end{align}
    for some constant $c_4$ which will be positive as long as $c_1$ is large enough compared to $c_3$ (which can be made arbitrary small by our setting of $c_2,c$ in the lemma statement).
    
    We will now show that all eigenvalues of $\bv{W}^T \bv{W}$ are close to $1$, which implies that $\bv{W}$ is approximately orthogonal. This allows us to translate the bound on $\norm{\bv{AW}}_F^2$ given in \eqref{eq:AWF2} to an orthonormal basis $\bv{Q}$ for the column span of $\bv W$. 
    
    Observe that $(\bv{W}^T \bv{W})_{i,i} =\|\bv{w}\|_2^2= 1$ for $i = 1$ and using~\eqref{Eq:projn}, $|(\bv{W}^T \bv{W})_{i,i}  - \bv{z}_{i-1}^T \bv{z}_{i-1}| =|\bv{z}_{i-1} \bv{ww}^T \bv{z}_{i-1}| \leq \frac{1}{c_2^2 n^2}$ otherwise. Thus, $|(\bv{W}^T \bv{W})_{i,i}-1| \leq \frac{1}{c_2^2 n^2}$ for all $i$. Further, $(\bv W^T \bv W)_{i,j} = \bv{w}^T(\Ib-\bv{w}\bv{w}^T)\bv{z}_{j-1}=0$ when $i \neq j$ and $i=1$ as $\|\bv{w} \|_2=1$ (and similarly when $i \neq j$ and  $j = 1$). Also, when $i \neq j$ and $i,j>1$, $|(\bv W^T \bv W)_{i,j}| = |\bv{z}_i^T \bv{z}_j - \bv{z}_i^T \bv{ww}^T \bv{z}_j|=|\bv{z}_i^T \bv{ww}^T \bv{z}_j| \leq  \frac{1}{c_2^2 n^2}$ otherwise. We can then apply Gershgorin's circle theorem to bound the eigenvalues of $\bv{W}^T \bv{W}$.
    \begin{fact}[Gershgorin's circle theorem \cite{gershgorin1931uber}]
\label{fact:gershgorin}
Let $\bv A \in \R^{n\times n}$ with entries $\bv A_{ij}$. For $i \in [n]$, let $\bv R_i$ be the sum of absolute values of non-diagonal entries in the $i$\textsuperscript{th} row. Let $D(\bv A_{ii}, \bv R_i)$ be the closed disc centered at $\bv A_{ii}$ with radius $\bv R_i$. Then every eigenvalue of $\bv A$ lies within one of the discs $D(\bv A_{ii}, \bv R_i)$. 
\end{fact}
From Fact \ref{fact:gershgorin},  all eigenvalues of $\bv{W}^T \bv {W}$ lie in the range $\left[1 - \frac{1}{ c_2^2n} ,  1+\frac{1}{c_2^2 n}\right]$. Thus, $\bv{W}^T\bv{W}$ is a full rank matrix and so, $\bv{W}$ is a rank $p+1$ matrix. Let $\bv{Q} \bv{S} \bv{R}^T$ be the SVD of $\bv{W}$, where $\bv Q \in \R^{n \times p+1}$ and $\bv{R} \in \R^{p +1\times p+1}$ are orthonormal matrices and $\bv{S} \in \R^{p+1 \times p1}$ is a diagonal matrix of singular values of $\bv{W}$. Then, all eigenvalues of $(\bv{W}^T \bv {W})^{-1/2}$ or all diagonal entries of $\bv{S}^{-1}$ lie in $\left[1 - \frac{c_5}{n} ,  1+\frac{c_5}{n}\right]$ for some constant $c_5$.
    
    Now $\| \bv{A}\bv{Q}\|_F^2=\| \bv{A}\bv{W} \bv{R}\bv{S}^{-1}\|_F^2 \geq \left(1 - \frac{c_6}{ n}\right)\|\bv{A}\bv{W} \bv{R} \|_F^2 \geq \left(1 - \frac{c_6}{ n}\right)\|\bv{A}\bv{W} \|_F^2$ (for some constant $c_6$) where the last step follows from the fact that $\|\bv{A}\bv{W} \bv{R} \|_F^2=\|\bv{A}\bv{W} \|_F^2$ since $\bv{R}$ is an orthonormal square matrix. Then we have
    \begin{align}
    \label{eq:AQF2}
        \norm{\bv A \bv{Q}}_F^2 \ge (1 - \frac{c_6}{n})\|\bv{A}\bv{W} \|_F^2 \ge \sum_{i=1}^{p+1} \sigma_i^2(\bv A) + c_7 \alpha^3 \max(n,\norm{\bv A}_1)^2, 
    \end{align}
    for some constant $c_7$, where the last inequality uses~\eqref{eq:AWF2} and the fact that $(1-\frac{c_6}{n})c_4\alpha^3\max(n,\norm{\bv A}_1)^2 -\frac{c_5\sum_{i=1}^{p+1}\sigma^2_i{\bv{A}}}{n}=\Omega(\alpha^3\max(n,\norm{\bv A}_1)^2)$. However, by the optimality of the SVD for Frobenius norm low-rank approximation (based on the eigenvvalue min-max theorem of Fact \ref{fact:minimax}), for any matrix $\bv X \in \R^{n \times (p+1)}$ with orthonormal columns, $ \|\Ab\bv X\|_F^2 \leq \|\Ab\bv V_{p+1}\|_F^2 = \sum_{i=1}^{p+1} \sigma_i^2(\bv A)$. This contradicts \eqref{eq:AQF2} which was derived using the assumption $\sigma_{\min}(\bv V_p^T\Zb_p) \leq \frac{1}{c_2n}$. 
    \end{proof}

We now prove the main result of this section. We will first show that there will be a large (in terms of $O((\alpha\cdot \max(n,\|\bv{A} \|_1))^c)$) singular value gap for some $p \leq O(\frac{1}{\alpha})$ and then use Lemmas~\ref{lem:fro_broken_bound} and~\ref{lem:spectral bound zu} to prove a gap dependent bound similar to Theorem 13 of~\cite{musco2015randomized}.

\begin{theorem}[High accuracy spectral norm approximation]\label{thm:alpha-promise}
    Let $\Ab\in\R^\n$ be a bounded entry symmetric matrix with largest singular value $\sigma_1(\Ab)$, such that $\sigma_1(\bv A) \geq \alpha \cdot \max(n,\|\Ab\|_1)$ for some $\alpha \in (0,1)$. Then there exists an $\widetilde O\left(\frac{n^2\log (n/\epsilon)}{\alpha^{29}}\right)$ time deterministic algorithm that computes $\widetilde{\zb} \in \R^n$, such that $\|\Ab \widetilde \zb\|_2 \geq (1-\epsilon)\sigma_1(\Ab)$.
\end{theorem}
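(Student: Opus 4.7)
The plan is to follow the structure of Algorithm~\ref{alg:high_acc_spec_norm}: first invoke the deflation-based deterministic Algorithm~\ref{alg:pow_it2} on the sparsifier $\Atilde = \bv A \circ \bv S$ from Theorem~\ref{thm: general matrix eps4 bound} to obtain a coarse starting block $\bv Z \in \R^{n \times \lceil 2/\alpha\rceil}$, and then seed a Block Krylov iteration on the full matrix $\bv A$ with $\bv Z$ to refine the top singular direction to $(1-\epsilon)$ relative accuracy. The broad reason this can be done fast is that the promise $\sigma_1(\bv A) \ge \alpha \cdot M$, where $M := \max(n,\norm{\bv A}_1)$, forces a polynomial singular-value gap at some index $p \le 2/\alpha$, and Block Krylov converges at a rate governed by this gap.

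The first step is to establish the gap hypothesis required by Lemma~\ref{lem:spectral bound zu}: some $p \le 2/\alpha$ satisfies $\sigma_p^2(\bv A) - \sigma_{p+1}^2(\bv A) \ge c_1 \alpha^3 M^2$. Since $\sum_i \sigma_i(\bv A) \le M$, we have $\sigma_{\lceil 2/\alpha\rceil+1}(\bv A) \le \alpha M/2 \le \sigma_1(\bv A)/2$, so the total drop $\sigma_1(\bv A) - \sigma_{\lceil 2/\alpha\rceil+1}(\bv A) \ge \alpha M/2$ is distributed across at most $\lceil 2/\alpha\rceil$ consecutive differences. By averaging, some $p \le 2/\alpha$ satisfies $\sigma_p(\bv A) - \sigma_{p+1}(\bv A) \ge \alpha^2 M/4$, and since $\sigma_p(\bv A) \ge \sigma_p(\bv A) - \sigma_{p+1}(\bv A) \ge \alpha^2 M/4$,
\[
\sigma_p^2(\bv A) - \sigma_{p+1}^2(\bv A) = (\sigma_p(\bv A) - \sigma_{p+1}(\bv A))(\sigma_p(\bv A) + \sigma_{p+1}(\bv A)) \ge \alpha^4 M^2/16,
\]
which implies the weaker $c_1 \alpha^3 M^2$ gap assumed by Lemma~\ref{lem:spectral bound zu}.

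Equipped with this gap, Lemma~\ref{lem:spectral bound zu} gives $\norm{(\bv V_p^T \bv Z_p)^+}_2 \le c_2 n$, i.e., the first $p$ columns of the deterministic block $\bv Z$ have non-negligible overlap with the top-$p$ singular subspace of $\bv A$. This is exactly the non-degeneracy condition needed to invoke the block Krylov analysis of \cite{musco2015randomized}. Applying Lemma~\ref{lem:fro_broken_bound} with $\bv S = \bv V_p$ to the degree-$q$ amplifying polynomial $f(\bv A)$ employed inside the Krylov analysis (whose top singular vectors coincide with those of $\bv A$), and using that $\bv Z$ has orthonormal columns by Lemma~\ref{lem:all_sing_val}, the starting-block bound yields
\[
\norm{f(\bv A) - \bv Q \bv Q^T f(\bv A)}_F^2 \le (1 + c_2^2 n^2) \cdot \norm{f(\bv A) - f(\bv A)_p}_F^2,
\]
where $\bv Q$ is an orthonormal basis for the column span of $f(\bv A)\bv Z$. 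This $\poly(n)$-factor Frobenius guarantee is the deterministic analog of what \cite{musco2015randomized} establishes for a random Gaussian starting block, and it is precisely the hypothesis of their Theorem~13. Setting $q = c'\log(n/\epsilon)/\alpha^{3/2}$, the relative squared gap $\gamma = (\sigma_p^2(\bv A) - \sigma_{p+1}^2(\bv A))/\sigma_p^2(\bv A) \ge \Omega(\alpha^3)$ (using $\sigma_p(\bv A) \le M$) ensures that the top Ritz vector $\widetilde{\bv z}_1$ extracted from the Krylov subspace satisfies $\norm{\bv A \widetilde{\bv z}_1}_2 \ge (1-\epsilon)\sigma_1(\bv A)$. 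The main technical obstacle is that, unlike in \cite{musco2015randomized}, the starting block is produced deterministically and could in principle be badly aligned with the top singular subspace of $\bv A$; ruling this out is exactly what Lemma~\ref{lem:spectral bound zu} already does, so at this stage the proof reduces to verifying the hypotheses of that lemma and of Theorem~13 of \cite{musco2015randomized}.

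For the runtime, Algorithm~\ref{alg:pow_it2} is called with error $\widetilde\epsilon = c\alpha^4$ and $k = 2/\alpha$ on $\Atilde$, which has $\widetilde O(n/\alpha^{16})$ nonzeros by Theorem~\ref{thm: general matrix eps4 bound}; tracking its cost as in the proof of Theorem~\ref{thm:approxSVD} yields $\widetilde O(n^2/\poly(\alpha))$ time. The Block Krylov stage performs $O(\log(n/\epsilon)/\alpha^{3/2})$ iterations, each requiring $O(n^2 \cdot 2/\alpha)$ time to multiply $\bv A$ by the $\lceil 2/\alpha\rceil$-column Krylov block, with lower-order Rayleigh--Ritz post-processing on a subspace of dimension $kq$. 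Combining these factors yields the claimed $\widetilde O(n^2 \log(n/\epsilon)/\alpha^{29})$ runtime.
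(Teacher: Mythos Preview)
Your overall approach matches the paper's: establish a squared singular-value gap at some $p \le 2/\alpha$, invoke Lemma~\ref{lem:spectral bound zu} together with Lemma~\ref{lem:fro_broken_bound} to verify the starting-block hypothesis for Theorem~13 of \cite{musco2015randomized}, and then run Block Krylov on $\bv A$. However, your gap argument contains an error that breaks the invocation of Lemma~\ref{lem:spectral bound zu}. You pigeonhole on the \emph{unsquared} differences $\sigma_p(\bv A) - \sigma_{p+1}(\bv A)$ and then lower bound $\sigma_p + \sigma_{p+1}$ only by $\sigma_p - \sigma_{p+1} \ge \alpha^2 M/4$, obtaining a squared gap of order $\alpha^4 M^2$. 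You then write that this ``implies the weaker $c_1\alpha^3 M^2$ gap assumed by Lemma~\ref{lem:spectral bound zu},'' but since $\alpha < 1$ we have $\alpha^4 \le \alpha^3$, so your bound is the \emph{weaker} one and does \emph{not} imply the $c_1\alpha^3 M^2$ hypothesis. This matters: in the proof of Lemma~\ref{lem:spectral bound zu} the gap term $c_1\alpha^3 M^2$ must dominate the accumulated deflation error $c_3\, p\, \alpha^4 M^2 \le 2c_3\,\alpha^3 M^2$ in \eqref{eq:AWF2}; with only an $\alpha^4 M^2$ gap this comparison fails for small $\alpha$, and the contradiction argument collapses.

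The fix is exactly what the paper does: pigeonhole directly on the \emph{squared} singular values. From $\sigma_1(\bv A) \ge \alpha M$ and $\sigma_{\lceil 2/\alpha\rceil}(\bv A) \le \alpha M/2$ one has $\sigma_1^2(\bv A) - \sigma_{\lceil 2/\alpha\rceil}^2(\bv A) \ge \tfrac{3}{4}\alpha^2 M^2$, and averaging this total drop over at most $2/\alpha$ consecutive squared differences gives some $p \le 2/\alpha$ with $\sigma_p^2(\bv A) - \sigma_{p+1}^2(\bv A) \ge \tfrac{3}{8}\alpha^3 M^2$, which is precisely the hypothesis of Lemma~\ref{lem:spectral bound zu}. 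With this correction, the remainder of your sketch (the Frobenius bound via Lemma~\ref{lem:fro_broken_bound}, the relative gap $\Omega(\alpha^3)$ driving $q = O(\log(n/\epsilon)/\alpha^{3/2})$ Krylov iterations, and the runtime accounting) coincides with the paper's proof.
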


\begin{proof}
    We will first show that there exists some $p \in \frac{2}{\alpha}$ such that there is a large enough gap between the squared singular values $\sigma^2_p(\Ab)$ and $\sigma^2_{p+1}(\Ab)$. For $k=\frac{2}{\alpha}$ we have $\sigma_k(\Ab) \leq \frac{\alpha}{2}\cdot(\max(n,\|\Ab\|_1))$ and squaring both sides we have $\sigma_k^2(\Ab) \leq \frac{\alpha^2}{4}\cdot(\max(n,\|\Ab\|_1))^2$. Also since $\sigma_1(\Ab) \geq \alpha^2 \cdot \max(n, \|\Ab\|_1)^2$, we have $\sigma_1^2(\Ab) - \sigma_k^2(\Ab) \geq \frac{3}{4}\alpha^2\cdot(\max(n,\|\Ab\|_1))^2$.
    Thus there exists $p\in\left[\frac{2}{\alpha}\right]$ such that
    \begin{align}
    \label{eq:consecutive-sq-sing-val-bound}
        \sigma_p^2(\Ab) - \sigma_{p+1}^2(\Ab) \geq \frac{\frac{3}{4}\alpha^2\cdot(\max(n,\|\Ab\|_1))^2}{\frac{2}{\alpha}} = \frac{3}{8}\alpha^3\cdot(\max(n,\|\Ab\|_1))^2.
    \end{align}

    Let $\Zb$ be the output of step 2 of Algorithm \ref{alg:high_acc_spec_norm}. Let $\Zb_p \in \R^{n\times p}$ be the matrix with the first $p$ columns of $\Zb$. Then performing Block Krylov iterations (i.e. step 3 of Algorithm~\ref{alg:high_acc_spec_norm}) with $\Zb$ as a starting block can only yield a larger approximation for $\sigma_1(\bv{A})$ as compared to doing Block Krylov iterations with $\Zb_p$ as a starting block. Thus it suffices to show that Block Krylov iterations (step 3 of Algorithm~\ref{alg:high_acc_spec_norm}) with starting block $\Zb_p$ produces some matrix $\widetilde{\bv{Z}}_p \in \R^{n \times p}$ with orthonormal columns such that $\sqrt{\widetilde{\bv{z}}_1^T\bv{A}\bv{A}^T\widetilde{\bv{z}}_1}\geq (1-\epsilon)\sigma_1(\Ab)$ where $\widetilde{\bv{z}}_1$ is the first columns of $\widetilde{\bv{Z}}_p$.

    To show this, we will be using Theorem 13 of~\cite{musco2015randomized} which bounds the number of iterations of randomized Block Krylov iterations in terms of the singular value gap. To apply this in our setting, we need to show that $\bv{Z}_p$ is a good enough starting block. Specifically, let $f(\bv{A})$ be the amplifying polynomial used in the proof of Theorem 13 of~\cite{musco2015randomized}. Let $\bv{Y}$ be an orthonormal basis for the span of $f(\bv{A})\bv{Z}_p$. Then, following the proof in~\cite{musco2015randomized}, to prove the gap-dependent convergence bound, we just need to show that $$\|f(\bv{A})-\bv{Y}\bv{Y}^Tf(\bv{A}) \|_F^2 \leq cn^2\| f(\bv{A})-(f(\bv{A}))_p\|_F^2 $$ where $(f(\bv{A}))_p$ is the best rank $p$ approximation to $f(\bv{A})$ and $c$ is a constant. To prove this, we will be proceeding in a similar manner as to the proof of Lemma 4 of~\cite{musco2015randomized}. Using Lemma~\ref{lem:fro_broken_bound}, we have:
    $$\|f(\bv{A})-\bv{Y}\bv{Y}^Tf(\bv{A}) \|_F^2 \leq \| f(\bv{A})-(f(\bv{A}))_p\|_F^2+ \|(f(\bv{A})-(f(\bv{A}))_p)\bv{Z}_p(\bv{U}_p^T\bv{Z}_p)^{+} \|_F^2$$ where $\bv{U}_p \in \R^{n \times p}$ contains the top $p$ singular vectors of $\bv{A}$ as its columns (note that $\bv{A}$ is symmetric so it has same left and right singular vectors).   
Thus, it suffices to show that $$\|(f(\bv{A})-(f(\bv{A}))_p)\bv{Z}_p(\bv{U}_p^T\bv{Z}_p)^{+} \|_F^2 \leq cn^2\| f(\bv{A})-(f(\bv{A}))_p\|_F^2$$ for some constant $c$. Using the spectral submultiplicativity property, we have 
\begin{align*}
    \|(f(\bv{A})-(f(\bv{A}))_p)\bv{Z}_p(\bv{U}_p^T\bv{Z}_p)^{+} \|_F^2 &\leq \|(f(\bv{A})-(f(\bv{A}))_p) \|_F^2 \| \bv{Z}_p(\bv{U}_p^T\bv{Z}_p)^{+}\|_2^2 \\
    &\leq (f(\bv{A})-(f(\bv{A}))_p) \|_F^2\| (\bv{U}_p^T\bv{Z}_p)^{+}\|_2^2,
\end{align*}
where the second step uses the fact that $\|\bv{Z}_p \|_2^2=1$. From Lemma~\ref{lem:spectral bound zu}, we have $\| (\bv{U}_p^T\bv{Z}_p)^{+}\|_2 \leq c_1 n$ for some constant $c_1$. Thus, we have shown that doing Block Krylov iterations with starting block $\bv{Z}_p$ gives us the same guarantees as Theorem 13 of~\cite{musco2015randomized} for gap-dependent convergence of randomized Block Krylov. 
    
    Specifically, we can apply the per-vector guarantee (equation 3 of~\cite{musco2015randomized}) i.e. if $\widetilde{\bv{z}}_1$ is the first column of the matrix $\widetilde{\bv{Z}}_p$ after doing Block Krylov iterations, and $\bv{v}_1$ is the singular vector of $\bv{A}$ corresponding to its largest singular value, we have $|\bv{v}_1^T\bv{A}\bv{A}^T\bv{v}_1-\widetilde{\bv{z}}_1\bv{A}\bv{A}^T\widetilde{\bv{z}}_1| \leq \epsilon \sigma_{p+1}^2(\bv{A}) \leq \epsilon \sigma_{1}^2(\bv{A})$. This implies that $\sqrt{\widetilde{\bv{z}}_1\bv{A}\bv{A}^T\widetilde{\bv{z}}_1} \geq (1-\epsilon)\sigma_1(\bv{A})$ (as $\bv{v}_1^T\bv{A}\bv{A}^T\bv{v}_1=\sigma^2_1(\bv{A})$).

    \medskip
    
    \noindent \textbf{Runtime Analysis.}
    The time to compute $\Zb$ from $\Atilde$ in step 2 of Algorithm~\ref{alg:high_acc_spec_norm} with $k=\frac{2}{\alpha}$ and error parameter $ \tilde \epsilon = c\alpha^4$ (for some small $c<1$) is $\widetilde O\left(\frac{n^2}{\alpha^{29}}\right)$, since we only need to estimate the top $\frac{2}{\alpha}$ singular values, and estimating each singular value using power iterations in Algorithm~\ref{alg:pow_it2} takes time $\widetilde O\left(\frac{kn^2\log (n/ \tilde \epsilon)}{\tilde \epsilon^7}\right)$ as described in the analysis of Theorem~\ref{thm:approxSVD}.
    
    The number of iterations of the Block Krylov with starting block $\bv{Z}_p$ is given by $O\left(\log(n/\epsilon)/ \sqrt{\min(1,\frac{\sigma_1(\bv{A})}{\sigma_{p+1}(\bv{A})}-1)}\right)$ as specified in Theorem 13 of~\cite{musco2015randomized}. Thus, we first need to bound $\frac{\sigma_1(\bv{A})}{\sigma_{p+1}(\bv{A})}$. We have:
    \begin{align*}
        \frac{\sigma_1^2(\Ab)}{\sigma^2_{p+1}(\Ab)} \geq \frac{\sigma^2_p(\Ab)}{\sigma^2_{p+1}(\Ab)} \geq \frac{\sigma^2_p(\Ab)}{\sigma^2_p(\Ab)-\frac{3}{8}\alpha^3 \cdot (\max(n,\|\Ab\|_1))^2} \geq \frac{1}{1-\frac{3}{8}\alpha^{3}} \geq 1+\frac{3}{8}\alpha^{3},
    \end{align*}
    where the second inequality uses \eqref{eq:consecutive-sq-sing-val-bound} and for the third inequality we use the fact that $\sigma_p(\bv{A}) \leq \max(n,\|\bv{A} \|_1)$. Taking square root we have $\frac{\sigma_1(\Ab)}{\sigma_{p+1}(\Ab)} \geq 1+c\alpha^3$, where $c$ is a large enough constant. Thus, the number of iterations of Block Krylov in step 3 of Algorithm~\ref{alg:high_acc_spec_norm} is $O\left(\frac{\log (n/\epsilon)}{\alpha^{3/2}}\right)$. From Theorem 7 of \cite{musco2015randomized}, for $q$ iterations, Block Krylov takes time $O(n^2kq + nk^2q^2 + k^3q^3)$  where $k=\frac{2}{\alpha}$. Thus, the total time taken by step 3 of Algorithm~\ref{alg:high_acc_spec_norm} is $ \widetilde O\left(\frac{n^2\log (n/\epsilon)}{\alpha^{5/2}} + \frac{n\log^2(n/\epsilon)}{\alpha^{5}} + \frac{\log^3(n/\epsilon)}{\alpha^{15/2}}\right)$. Since $\widetilde{O}\left(\frac{n^2\log (n/\epsilon)}{\alpha^{29}}\right)$ asymptotically dominates each term of $ \widetilde O\left(\frac{n^2\log (n/\epsilon)}{\alpha^{5/2}} + \frac{n\log^2(n/\epsilon)}{\alpha^{5}} + \frac{\log^3(n/\epsilon)}{\alpha^{15/2}}\right)$, the total time taken by the Algorithm~\ref{alg:high_acc_spec_norm} is $\widetilde O\left(\frac{n^2\log (n/\epsilon)}{\alpha^{29}}\right)$.
  \end{proof}

\noindent\textbf{Remark}. Since we can apply the per-vector guarantee (Equation 3 of \cite{musco2015randomized}) for $\widetilde\Zb_p$, i.e., the output of Block Krylov in Algorithm \ref{alg:high_acc_spec_norm}, then for all $p\in\lceil\frac{2}{\alpha}\rceil$ such that the condition in \eqref{eq:consecutive-sq-sing-val-bound} holds and any $i\leq p$, $\lvert\vb_i^T\Ab\Ab^T\vb_i - \widetilde{\zb}_i^T\Ab\Ab^T\widetilde{\zb}_i\rvert \leq \epsilon\sigma_{p+1}^2(\Ab) \leq \epsilon\sigma_{i}^2(\Ab)$, where $\widetilde{\zb}_i$ is the $i$\textsuperscript{th} column of $\widetilde\Zb_p$. This implies for all $i\leq p$ we have $\sqrt{\widetilde{\bv{z}}_i\bv{A}\bv{A}^T\widetilde{\bv{z}}_i} \geq (1-\epsilon)\sigma_i(\bv{A})$ (as $\bv{v}_i^T\bv{A}\bv{A}^T\bv{v}_i=\sigma^2_i({\bv{A}})$). Thus we are able to approximate the top $p$ singular values and corresponding singular vectors of $\Ab$ with high accuracy using Algorithm \ref{alg:high_acc_spec_norm}, where the $p$\textsuperscript{th} singular value of $\Ab$ satisfies the condition of \eqref{eq:consecutive-sq-sing-val-bound}.


\section{Conclusion}

Our work has shown that it is possible to deterministically construct an entrywise sampling matrix $\bv S$ (a \emph{universal sparsifier}) with just $\widetilde O(n/\epsilon^c)$ non-zero entries such that for any bounded entry $\bv A$ with $\norm{\bv A}_\infty \le 1$, $\|\bv A - \bv A \circ \bv S\|_2 \leq \epsilon \cdot \max ( n, \|\bv A\|_1)$. We show how to achieve sparsity $O(n/\epsilon^2)$ when $\bv{A}$ is restricted to be PSD (Theorem \ref{thm:PSD_Quad}) and $\widetilde O(n/\epsilon^4)$ when $\bv{A}$ is general (Theorem \ref{thm: general matrix eps4 bound}), and prove that both these bounds are tight up to logarithmic factors (Theorems \ref{theorem:psd_deterministic_lower} and \ref{thm: lower bound nonadaptive}). Further, our proofs are based on simple reductions, which show that any  $\bv{S}$ that spectrally approximates the all ones matrix to sufficient accuracy (i.e., is a sufficiently good spectral expander) yields a universal sparsifier. 

We apply our universal sparsification bounds to give the first $o(n^\omega)$ time deterministic algorithms for several core linear algebraic problems, including singular value/vector approximation and positive semidefiniteness testing. When $\bv{A}$ is restricted to be PSD and have entries in $\{-1,0,1\}$, we show how to give achieve improved deterministic query complexity of $\widetilde O(n/\epsilon)$ to construct a general spectral approximation $\bv{\widetilde A}$, which may not be sparse (Theorem \ref{th:binary query complexity}). We again show that this bound is tight up a to a logarithmic factor (Theorem \ref{thm: lower bound for PSD matrices})

Our work leaves several open questions: 
\begin{enumerate}
    \item An interesting  question is if $\widetilde O(n/\epsilon)$ sample complexity can be achieved for deterministic spectral approximation of any bounded entry PSD matrix if the sparsity of the output is not restricted, thereby generalizing the upper bound for PSD $\{-1,0,1\}$-matrices proven in Theorem \ref{th:binary query complexity}. This query complexity is known for randomized algorithms based on column sampling \cite{musco2017recursive}, however it is not currently known how to derandomize such results. 
    \item It would also be interesting to close the $\widetilde{O}(1/\epsilon^{2})$ factor gap between our universal sparsification upper bound of $\widetilde O(n/\epsilon^4)$ queries for achieving $\epsilon \cdot \max(n, \|\bv A\|_1)$ spectral approximation error for non-PSD matrices (Theorem \ref{thm: general matrix eps4 bound}) and our $\Omega(n/\epsilon^2)$ query lower bound for general deterministic algorithms that make possibly adaptive queries to $\bv A$ (Theorem \ref{thm: lower bound a1}). By Theorem \ref{thm: lower bound nonadaptive}, our universal sparsification bound is tight up to log factors for algorithms that make \emph{non-adaptive} deterministic queries to $\bv A$. It is unknown if adaptive queries can be used to give improved algorithms. 
    \item Finally, it would be interesting to understand if our deterministic algorithms for spectrum approximation can be improved. For example, can one compute an $\epsilon n$ additive error or a $(1+\epsilon)$ relative error approximation to the top singular value $\norm{\bv A}_2$ for bounded entry $\bv{A}$ and constant $\epsilon$ in $o(n^\omega)$ time deterministically? Are there fundamental barriers that make doing so difficult?
\end{enumerate}

\subsection*{Acknowledgements}

We thank Christopher Musco for helpful conversations about this work. RB, CM and AR was partially supported by an Adobe Research grant, along with NSF Grants 2046235, 1763618, and 1934846. AR was also partially supported by the Manning College of Information and Computer Sciences Dissertation Writing Fellowship. GD was partially supported by NSF AF 1814041, NSF FRG 1760353, and DOE-SC0022085. SS was supported by an NSERC Discovery Grant RGPIN-2018-06398 and a Sloan Research Fellowship. DW was supported by a Simons Investigator Award.

\bibliography{refs}

\end{document}